\newtheorem{theorem}{Theorem}
\newtheorem{remark}{Remark}
\newtheorem{corollary}{Corollary}
\newtheorem{lemma}{Lemma}
\newtheorem{assumption}{Assumption}
\newtheorem{proposition}{Proposition}
\newcommand{\cc}{\mid\mid}
\newcommand{\R}{\mathbb{R}}
\newcommand{\ra}{\rightarrow}
\newcommand{\ccdiyxz}{I(Y^n\ra X^n \cc Z^n)}
\newcommand{\mc}[1]{\ensuremath{\mathcal{#1}}}
\newcommand{\gcyx}{\ensuremath{G_{Y\rightarrow X}}}
\newcommand{\diyx}{\ensuremath{I(Y^n\rightarrow X^n)}}
\newcommand{\diyxzm}{\ensuremath{I(Y^{n-1}\rightarrow X^n\mid \mid Z^{n-1})}}
\newcommand{\dirateyx}{\ensuremath{\bar{I}(Y\rightarrow X)}}
\newcommand{\dirateyxz}{\ensuremath{\bar{I}(Y\rightarrow X\mid \mid Z)}}
\newcommand{\diyxm}{\ensuremath{I(Y^{n-1}\rightarrow X^n)}}
\newcommand{\csyx}{\ensuremath{\mathfrak{C}_{Y\rightarrow X}}}
\newcommand{\cxy}{\ensuremath{C_{X\rightarrow Y}}}
\newcommand{\cyx}{\ensuremath{C_{Y\rightarrow X}}}
\newcommand{\cyxp}[1]{\ensuremath{C^{(#1)}_{Y\rightarrow X}}}
\newcommand{\estcxy}{\ensuremath{\hat{C}_{X\rightarrow Y}}}
\newcommand{\estcyx}{\ensuremath{\hat{C}_{Y\rightarrow X}}}
\newcommand{\optcyx}{\ensuremath{C^*_{Y\rightarrow X}}}
\newcommand{\fxr}[1]{\ensuremath{p_{X_#1}^{(r)}}}
\newcommand{\fxc}[1]{\ensuremath{p_{X_#1}^{(c)}}}
\newcommand{\fxp}[2]{\ensuremath{p_{X_#1}^{(#2)}}}
\newcommand{\estfxr}[1]{\ensuremath{\hat{p}_{X_#1}^{(r)}}}
\newcommand{\estfxc}[1]{\ensuremath{\hat{p}_{X_#1}^{(c)}}}
\newcommand{\estfxk}[1]{\ensuremath{\hat{p}_{X_#1}^{(k)}}}
\newcommand{\optfxr}[1]{\ensuremath{p_{X_#1}^{(r)*}}}
\newcommand{\optfxc}[1]{\ensuremath{p_{X_#1}^{(c)*}}}
\newcommand{\history}[1]{\ensuremath{\mc{H}_{#1}}}
\newcommand{\hr}{\ensuremath{\history{i}^{(r)}}}
\newcommand{\hc}{\ensuremath{\history{i}^{(c)}}}
\newcommand{\hp}[1]{\ensuremath{\history{i}^{(#1)}}}
\newcommand{\kl}[2]{\ensuremath{D(#1 \mid \mid #2)}}
\newcommand{\refclass}{\ensuremath{\tilde{\mc{P}}}}
\newcommand{\refclassr}{\ensuremath{\refclass^{(r)}}}
\newcommand{\refclassc}{\ensuremath{\refclass^{(c)}}}
\newcommand{\mcal}[1]{\ensuremath{\mathcal{#1}}}
\newcommand{\absval}[1]{\ensuremath{\left|#1\right|}}
\newcommand{\calx}{\mc{X}}
\newcommand{\caly}{\mc{Y}}
\newcommand{\calz}{\mc{Z}}
\newcommand{\tabp}[2]{\theta^{#1}_{#2}}
\newcommand{\xm}[1]{x_{i-#1}}
\newcommand{\ym}[1]{y_{i-#1}}
\newcommand{\zm}[1]{z_{i-#1}}
\begin{document}

\title{Measuring Sample Path Causal Influences with Relative Entropy}

\author{Gabriel~Schamberg,~\IEEEmembership{Student Member,~IEEE,}
and~Todd~P.~Coleman,~\IEEEmembership{Senior Member,~IEEE,}

\thanks{This work was supported in part by the Center for Science of Information (CSoI), an NSF Science and Technology Center, under grant agreement CCF-0939370; an ARO MURI award under contract ARO-W911NF-15-1-0479; an NIH award 1R01MH110514; and an NSF award IIS-1522125. The material in this paper was presented in part at the 2018 IEEE International Symposium on Information Theory, Vail, CO. and in part at the 2019 IEEE International Symposium on Information Theory, Paris, FR.}%
\thanks{G. Schamberg is with the  Department of Electrical and Computer Engineering, University of California, San Diego, La Jolla, CA 92093, USA (e-mail: gschambe@eng.ucsd.edu).}%
\thanks{T.P. Coleman is with the  Department of Bioengineering, University of California, San Diego, La Jolla, CA 92093, USA (e-mail: tpcoleman@ucsd.edu).}
\thanks{This work has been submitted to the IEEE for possible publication. Copyright may be transferred without notice, after which this version may no longer be accessible.}}

\maketitle

\begin{abstract}
We present a sample path dependent measure of causal influence between time series. The proposed causal measure is a random sequence, a realization of which enables identification of specific patterns that give rise to high levels of causal influence. We show that these patterns cannot be identified by existing measures such as directed information (DI). We demonstrate how sequential prediction theory may be leveraged to estimate the proposed causal measure and introduce a notion of regret for assessing the performance of such estimators. We prove a finite sample bound on this regret that is determined by the worst case regret of the sequential predictors used in the estimator. Justification for the proposed measure is provided through a series of examples, simulations, and application to stock market data. Within the context of estimating DI, we show that, because joint Markovicity of a pair of processes does not imply the marginal Markovicity of individual processes, commonly used plug-in estimators of DI will be biased for a large subset of jointly Markov processes. We introduce a notion of DI with ``stale history'', which can be combined with a plug-in estimator to upper and lower bound the DI when marginal Markovicity does not hold.
\end{abstract}

\begin{IEEEkeywords}
Causality, Relative Entropy, Sequential Prediction, Granger, Markov Process.
\end{IEEEkeywords}

\section{Introduction}

The need to identify and quantify causal relationships between two or more time series from purely observational data is ubiquitous across academic disciplines. Recent examples include economists seeking to understand the directional interdependencies between foreign stock indices \cite{junior2015dependency,
dimpfl2014impact} and neuroscientists seeking to identify directed networks that explain neural spiking activity \cite{kim2011granger,
quinn2011estimating}.

Building upon the the ideas of Wiener \cite{wiener1956theory}, Granger \cite{granger1969investigating} proposed the following perspective on causal influence: \emph{We say that a time series $Y$ is ``causing'' $X$ if we can better predict $X$ given the past of all information than given the past of all information in the universe excluding $Y$}. While Granger's original treatment only considered linear Gaussian regression models, his proposed definition applies in general and is here collectively termed \emph{Granger causality} (GC). The inclusion of ``all information in the universe'' in GC serves to avoid the effects of confounding, i.e. to avoid incorrectly inferring that $Y$ influences $X$ when in reality both $X$ and $Y$ are influenced by a third process, $Z$. It is important to note that GC lacks mention of \emph{interventions}, a concept that is central to well-accepted notions of causal influence popularized by Pearl \cite{pearl2009causality,pearl2018book}. In \cite{eichler2010granger}, Eichler and Didelez develop a framework that formalizes interventions in the context of GC, enabling the distinction between scenarios where changing the value of $Y$ (by means of an intervention) results in a change in the value of $X$, and those where $Y$ merely aids in the \emph{prediction} of $X$. Absent this formal analysis, GC is better viewed as a measure of predictive utility. Nevertheless, it continues to be a popular tool (for example \cite{seth2015granger}), and is the focus of this paper. Thus, we use the terms ``cause,'' ``causal effect,'' etc. within the context of Granger's perspective unless otherwise stated.

More modern information theoretic interpretations of Granger's perspective on causality include directed information (DI) \cite{marko1966theorie,marko1973bidirectional,massey1990causality} and transfer entropy (TE) \cite{schreiber2000measuring}, which is equivalent to GC for Gaussian autoregressive processes \cite{barnett2009granger}. Justification for use of the DI for characterizing directional dependencies between processes was given in \cite{quinn2011equivalence}, where it was shown that, under mild assumptions, the DI graph is equivalent to the so-called minimal generative model graph. It was further shown in \cite{etesami2014directed} that the DI graph can be viewed as a generalization of linear dynamical graphs. As a result, the directional dependencies encoded by DI are well equipped to identify the presence or absence of a causal link under Granger's perspective in the general non-linear and non-Gaussian settings.

Interestingly, both GC and DI are determined entirely by the underlying probabilistic model (i.e. joint distribution) of the random processes in question. It is clear that once the model is determined, these methods provide no ability to distinguish between varying levels of causal influence that may be associated with \emph{specific realizations} of those processes. As a result, GC and DI are only well suited to answer causal questions that are concerned with average influences between processes. Examples of this style of question include \emph{``Does dieting affect body weight?''} and \emph{``Does the Dow Jones stock index influence Hang Seng stock index?''}. Symbolically, we represent this question as Q1:\emph{``Does $Y^{i-1}$ cause $X_i$?''}, where the superscript represents and the collection of samples up to time $i-1$ and capital letters are used to represent random variables and processes.

A natural next question to ask is how the aforementioned measures may be adapted to be sample path dependent. In particular, one might pose the question Q2:\emph{``Did $y^{i-1}$ cause $x_i$?''}, where the lowercase letters now represent specific realizations of the processes $X$ and $Y$. Examples of these questions would be \emph{``Did eating salad cause me to lose weight?''} and \emph{``Did the dip in the price of the Dow Jones cause the spike in the price of the Hang Seng?''}. One information theoretic approach to answering Q2 is the substitution of self-information for entropy wherever entropy appears in the definition of DI \cite{lizier2014measuring}. The issue is that the resulting ``local'' extension of DI may take on negative values, and it is unclear how these values should be interpreted with regard to the presence/absence of a causal link. As a result, causal measures that use the self information have not seen widespread adoption. While this may appear to be a result of a particular methodology, it is in fact a fundamental challenge with Q2 arising from the handling of \emph{counterfactuals}. This challenge relates to what Holland \cite{holland1986statistics} referred to as the ``fundamental problem of causal influence,'' namely that we cannot observe the value that $X_i$ would take under two realizations of $Y^{i-1}$, i.e. the true realization $y^{i-1}$ and some counterfactual realization $\tilde{y}^{i-1}$. A popular approach to dealing with counterfactuals is structural equation models (SEMs). Using an SEM, one can estimate the ``noise'' that gave rise to an outcome $x_i$ and infer the $\tilde{x}_i$ that would have occurred had $y^{i-1}$ been $\tilde{y}^{i-1}$. The interested reader is referred to \cite{pearl2009causality} and \cite{peters2017elements} for more details on SEMs.

While it is clear that Q1 lacks the resolution to identify specific points on a sample path for which a large causal influence is elicited, Q2 introduces the added challenge of counterfactuals and thus there is no clear approach within the GC framework. This observation motivates our proposed question of study, Q3:\emph{``Does $y^{i-1}$ cause $X_i$?''}. In other words, we seek to identify the causal effect that particular values of $Y$ have on \emph{the distribution} of the subsequent sample of $X$. Examples of this include \emph{``Which diets are most informative about weight loss outcomes?''} and \emph{``When does the Dow Jones have the greatest effect on the Hang Seng?''}. To answer this question, we build on the work of \cite{kim2014dynamic} and \cite{schamberg2018sample} in the development of a sample path dependent measure of causal influence.

Such a measure will necessarily capture dynamic changes in causal influence between processes. The means by which causal influences vary with time is two-fold. First, it is clear that when the joint distribution of the collection of processes is non-stationary, there will be variations in time with respect to their causal interactions. Second, we note that stationary processes may exhibit time-varying causal phenomena when certain realizations of a process have a greater level of influence than others (see Section \ref{iid_influences}). The latter cannot be captured by GC and DI, which are determined entirely by the joint distribution and thus will only change when the distribution changes. Furthermore, since estimating GC and DI requires taking a time-average, capturing dynamic changes resulting from non-stationarities necessitates approximating an expectation using a sliding window. The sample path dependent measure, on the other hand, captures both types of temporal dynamics: estimates of the sample path measure can be obtained for any processes for which we can have reliable sequential prediction algorithms.

In developing techniques for estimating the proposed measure, we have identified a challenge in estimating information theoretic measures of causal influence that has been commonly overlooked in the literature. While it is well understood that a collection of jointly Markov processes does not necessarily exhibit Markovicity for \emph{subsets} of processes, the implications of this on information theoretic causal measures are not well studied. An analogous statement with regard to finite order autoregressive processes and the biasing effect this has on estimates of GC was studied in \cite{stokes2017study}, but this work has yet to be adopted in the information theory community. It comes as no surprise that the issues with GC estimators identified in \cite{stokes2017study} may be extended to DI estimators. Thus, a characterization of when estimators of DI are unbiased and a means of addressing the bias when it arises are lacking. As such, we build upon our earlier work \cite{schamberg2019bias} to address both of these unmet needs in Section \ref{inf_order} in an effort to establish an understanding of when one can expect to obtain unbiased estimates of information theoretic causal measures.

The contributions of this paper may be summarized as follows:
\begin{itemize}
\item A methodology for assessing causal influences between time series in a sample-path specific and time-varying manner, by answering the question ``Does $y^{i-1}$ cause $X_i$?''.  This is particularly relevant when there are infrequent events which exhibit large causal influences, which would be ``averaged out'' using any causal measure (e.g. GC and DI) which takes an average over all sample paths.
\item A framework using sequential prediction for estimating the dynamic causal measure with associated upper bounds on the worst case ``causality regret''.
\item A characterization of when unbiased estimates of DI can be obtained using concepts from causal graphical models and a novel methodology for bounding the DI when unbiased estimates cannot be obtained.
\item Demonstration of the causal measure's value through application to simulated and real data.
\end{itemize}

The remainder of this paper is organized as follows: following a brief overview of notation, Section \ref{related} provides a technical summary of related work. In Section \ref{measure} we define the measure, present key properties, and provide justification for the measure through several examples. Section \ref{estimation} provides a framework for estimating the measure. Section \ref{applications} demonstrates the measure on simulated and real data. Finally, Section \ref{discussion} contains a discussion of the results and opportunities for future work.

\subsection{Notation and Definitions}
Let $X$, $Y$, and $Z$ denote discrete finite-alphabet random processes, unless otherwise specified. We denote processes at a given time point with a subscript and denote the space of values they may take with caligraphic letters, i.e. $X_i\in\mc{X}$. Without loss of generality, let $\mc{X}=\{0,1,2,\dots,|\mc{X}|-1\}$.  A temporal range of a process is denoted by a subscript and superscript, i.e. $X_i^n = (X_i,X_{i+1},\dots,X_n)$, and we define $X^n \triangleq X_1^n$. Realizations of processes are given by lowercase letters. Probability mass functions (pmfs) are equivalently referred to as ``distributions'' and are denoted by $f$. These distributions are characterized by a subscript, which is often omitted when context allows. For example $p_{X_i}(x_i) \equiv p(x_i)$ gives the distribution of a single time point of $X$, $p_{X^n,Y^n}(x^n,y^n) \equiv p(x^n,y^n)$ gives the joint distribution of $X$ and $Y$, and $p_{X_i\mid X^{i-1}}(x_i\mid x^{i-1}) \equiv p(x_i\mid x^{i-1})$ gives the conditional distribution of $X$ at a single time conditioned on the past of $X$. Lastly, we define the \emph{causally conditional distribution with lag $k$} as:
\begin{equation}\label{caus_cond_f}
p(x^n\mid\mid y^{n-k}) \triangleq \prod_{i=1}^n p(x_i\mid x^{i-1},y^{i-k}).
\end{equation}

\noindent Note that the standard interpretation of the causal\footnote{The term ``causal'' is overloaded, as it is used here in the control theoretic sense strictly to mean ``non-anticipative.''} conditioning (as in \cite{kramer1998directed}) is recovered by letting $k=0$.

We will briefly review some information theoretic quantities that are used frequently throughout the paper. The entropy is given by:
\begin{equation*}
H(X^n) = \sum_{x^n} p(x^n)\log\frac{1}{p(x^n)}
\end{equation*}
\noindent where it is implied that the sum is over all $x^n \in \mc{X}^n$ and the logarithm is base two (as are all logarithms throughout). The conditional entropy is given by:
\begin{equation*}
H(X^n\mid Y^n) = \sum_{x^n,y^n} p(x^n,y^n)\log\frac{1}{p(x^n\mid y^n)}
\end{equation*}
The causally conditional entropy is given by substituting the causally conditional distribution for the conditional distribution:
\begin{equation*}
H(X^n\mid\mid Y^{n-k}) = \sum_{x^n,y^n} p(x^n,y^n)\log\frac{1}{p(x^n\mid\mid y^{n-k})}
\end{equation*}

For any of the above defined variants of entropy, the corresponding entropy \emph{rates} are given by:
\begin{align}
\bar{H}(X) &= \lim_{n\rightarrow \infty} \frac{1}{n} H(X^n) \\
\bar{H}(X\mid Y) &= \lim_{n\rightarrow \infty} \frac{1}{n} H(X^n\mid Y^n) \\
\bar{H}^{(k)}(X\mid\mid Y) &= \lim_{n\rightarrow \infty} \frac{1}{n} H(X^n\mid\mid Y^{n-k})
\end{align}

\noindent It should be noted that the entropy rates may not exist for all processes.

The conditional mutual information is given by:
\begin{equation*}
I(X^n;Y^n\mid Z^n) = H(X^n\mid Z^n) - H(X^n\mid Y^n, Z^n)
\end{equation*}
\noindent with the (unconditional) mutual information $I(X^n;Y^n)$ being obtained by removing $Z^n$ everywhere it appears in the above equation. Finally, the relative entropy or KL-divergence between two distributions $p_{X_i}$ and $p'_{X_i}$ is given by:
\begin{equation*}
\kl{p_{X_i}}{p'_{X_i}} =
\sum_{x_i} p(x_i)\log\frac{p(x_i)}{p'(x_i)}.
\end{equation*}

\section{Related Work}\label{related}

We now provide a brief summary of three key concepts in the measurement of causal influence across time series, namely Granger causality (GC) \cite{granger1969investigating}, directed information (DI) \cite{marko1973bidirectional,massey1990causality}, and causal strength (CS) \cite{janzing2013quantifying}. While some key points will be presented here, a comprehensive summary of the relationships between GC and DI may be found in \cite{amblard2012relation}.

\subsection{Granger Causality}
While Granger's perspective on causality underlies most modern studies in causality between time series, his original treatment was limited to linear Gaussian AR models \cite{granger1969investigating}. For clarity, we will here present the case with scalar time series. Formally, define the three real-valued random processes $(X_i, Y_i, Z_i: i \geq 1)$. As in Granger's original treatment, we let $Z^n$ represent all the information in the universe in order to avoid the effects of confounding. Next, define two models of $X_i$:
\begin{align}
X_i &= \sum_{j=1}^d
a_j X_{i-j} +
b_j Y_{i-j} +
c_j Z_{i-j} + U_i \label{gc_c} \\
X_i &= \sum_{j=1}^d
d_j X_{i-j} +
e_j Z_{i-j} + V_i \label{gc_r}
\end{align}

\noindent where $a_j,b_j,c_j,d_j,e_j\in\mathbb{R}$ are the model parameters and $U_i\sim \mathcal{N}(0,\sigma_U^2)$ and $V_i\sim \mathcal{N}(0,\sigma_V^2)$. We see that the class of models given by \eqref{gc_r} is a subset of the models given by \eqref{gc_c} where the next $X_i$ does not depend on past $Y^{i-1}$. Thus, a non-negative measure of the extent of causal influence of $Y$ on $X$ may be defined by:
\begin{equation}
\gcyx \triangleq \ln \frac{\sigma^2_V}{\sigma^2_U}
\end{equation}

The limitations of Granger causality extend considerably beyond the restriction to linear models (see \cite{stokes2017study} for a comprehensive summary). Of particular interest is the fact that if a VAR process is of finite order, subsets of the process will in general be infinite order. While it is possible to redefine the model in \eqref{gc_r} to be infinite order, this creates obvious challenges in attempting to estimate Granger causality. Considering this issue is not addressed by the subsequent existing methods, we will revisit this issue in Section \ref{inf_order}.

\subsection{Directed Information}
The concept directed information was first introduced under the name transinformation by Marko in 1973 \cite{marko1973bidirectional} in the context of bidirectional communication theory. It was later revisited in 1990 by Massey \cite{massey1990causality} who defined the directed information from a sequence $Y^n$ to $X^n$ as:
\begin{align}
\diyx
&= \sum_{i=1}^n I(Y^{i};X_i \mid X^{i-1}) \\
&= H(X^n) - H(X^n \mid\mid Y^{n})
\end{align}

Unless otherwise specified, we assume that there are no instantaneous causations, i.e. that $X_i$ and $Y_i$ are conditionally independent given the past $X^{i-1}$ and $Y^{i-1}$. Should one want to allow for instantaneous causations, the proposed methods may be trivially extended to accommodate. As such, we will primarily consider the reverse DI \cite{jiao2013universal}:
\begin{align}
\diyxm
&= \sum_{i=1}^n I(Y^{i-1};X_i \mid X^{i-1})
\label{dinoinst} \\
&= H(X^n) - H(X^n \mid\mid Y^{n-1}) \label{dimentr}
\end{align}

\noindent noting that under the assumption of no instantaneous causation, $\diyx=\diyxm$.

Given that the DI is given by a sum over time, one may be interested in how causal relationships are exhibited on average. This can be accomplished through use of the directed information rate \cite{kramer1998directed}, given by:
\begin{align}
\dirateyx &=\lim_{n\rightarrow \infty} \frac{1}{n}\diyxm \label{dirate} \\
&=\bar{H}(X)-\bar{H}^{(1)}(X\mid\mid Y) \label{dimrateentr}
\end{align}

Lastly, we note that in order to avoid confounding, we may include the side information $Z^n$ to get the following definitions of causally conditioned DI and causally conditioned DI rate:
\begin{gather}
\diyxzm =
\sum_{i=1}^n I(Y^i;X_i \mid X^{i-1},Z^{i-1}) \\
\dirateyxz =
\lim_{n\rightarrow\infty} \frac{1}{n} \diyxzm
\end{gather}

\subsection{Causal Strength}
In \cite{janzing2013quantifying}, Janzing et al. propose an axiomatic measure of \emph{causal strength} (CS) based on a set of postulates that they propose should be satisfied by a causal measure. Furthermore, they present numerous examples to illustrate where Granger causality and directed information do not give results consistent with intuition. While this measure was proposed to measure influences in general causal graphs, it has a clear interpretation in the context of measuring causal influences between two time series. In particular, for measuring the CS from $Y$ to $X$, begin by considering the generalization of the two models utilized by GC in \eqref{gc_c} and \eqref{gc_r} to arbitrary probability distributions $p(X_i \mid X^{i-1},Y^{i-1},Z^{i-1})$ and $p(X_i \mid X^{i-1},Z^{i-1})$. Next, note that the second distribution has the following factorization when summing over all possible pasts of $Y$:
\begin{equation*}
p (X_i \mid X^{i-1},Z^{i-1}) =\sum_{y^{i-1}}
p(X_i \mid X^{i-1},y^{i-1},Z^{i-1})
p(y^{i-1} \mid X^{i-1},Z^{i-1})
\end{equation*}

\noindent The first term in the sum may be viewed as measuring the \emph{direct} effects of the pasts of $X$, $Y$, and $Z$, on the distribution of $X_i$. The second term, however, is in some sense measuring the \emph{indirect} effects of the pasts of $X$ and $Z$ on $X_i$ in that they affect the distribution of $X_i$ through their effect on the distribution of $Y^{i-1}$. Thus, the key idea behind CS is the introduction of the ``post-cutting'' distribution, where the conditional distribution found in the second term is replaced with a marginal distribution (see Section 4.1 of \cite{janzing2013quantifying} for a formal definition). As a result, the (time series) CS from $Y$ to $X$ with side information $Z$ is given by:
\begin{equation}
\csyx \triangleq E\left[D(p_{X_i\mid X^{i-1},Y^{i-1},Z^{i-1}} \mid \mid \tilde{p}_{X_i\mid X^{i-1},Z^{i-1}})\right]
\end{equation}

\noindent where the expectation is taken with respect to $p_{X^{i-1},Y^{i-1},Z^{i-1}}$ and the post-cutting distribution is defined as:
\begin{equation}
\tilde{p}_{X_i\mid X^{i-1},Z^{i-1}}(X_i)
\triangleq
\sum_{y^{i-1}}
p(X_i \mid X^{i-1},y^{i-1},Z^{i-1})
p(y^{i-1})
\end{equation}

\noindent The post-cutting distribution is designed to ensure that the extent to which $Y$ has a causal effect on $X$ depends only upon $Y$ and other \emph{direct} causes of $X$ (see P2 in \cite{janzing2013quantifying}). In the context of measuring causal influences between time series, this can be seen as correcting for scenarios in which $X$ may be very well \emph{predicted} by its own past while not being \emph{caused} by its own past. This scenario arises in models like the one depicted in the center of Figure \ref{fig:examplegraphs}. In such a scenario, it is possible to have $\diyxm=0$ despite the fact that $Y_{i-1}$ is, in some sense, the sole cause of $X_i$. The details of this example are made clear in Section \ref{xcopy}.

By presenting an axiomatic framework for measuring causal influences, Janzing et al. provide a robust justification CS. With that said, we note that like GC and DI, CS is determined solely by the underlying probabilistic model. As such, it may be the preferred technique for addressing Q1, but it does not represent how different realizations may give rise to different levels of causal influence.

\subsection{Self-Information Measures}
All of the aforementioned techniques involve taking an expectation over the histories of the time series in question, and are thus well suited to address Q1. In order to address Q2, a notion of locality may be introduced through use of \emph{self-information}. For a given realization $x$ of a random variable $X\sim p_X$, the self-information is given by $h(x) \triangleq -\log p_X(x)$ and represents the amount of surprise associated with that realization. By replacing entropy with self-information, and its conditional form $h(x\mid y) \triangleq -\log p_{X\mid Y}(x\mid y) $, a local version of DI and its conditional extension may be obtained (see Table 1 in \cite{lizier2014jidt} for other so-called ``local measures''). As an example, we note that for a given pair of realizations $x^i$ and $y^{i-1}$, a ``directed information density'' (using the language of \cite{gray2011entropy}) may be given by:
\begin{equation}
i(y^{n-1}\rightarrow x^n) = \sum_{i=1}^n \log\frac{p(x_i\mid x^{i-1},y^{i-1})}{p(x_i\mid x^{i-1})}
\end{equation}

\noindent While this indeed creates a sample path measure of causality whose expectation is DI, it is clear it may take on negative values. Such a scenario occurs when the knowledge that $Y^{i-1}=y^{i-1}$ makes the observation of $X_i=x_i$ less likely to have occurred. While self-information measures are a good candidate for beginning to address Q2 given their dependence upon realizations, the potential for negative values creates difficulty in trying to obtain an easily interpretable answer in all cases.

\subsection{Time-Varying Causal Measures}
A popular extension of GC style causal measures is application to time-varying scenarios \cite{sheikhattar2018extracting,oselio2017dynamic}. In order to adapt existing methods to these types of scenarios, it is necessary to evaluate them over stretches of time for which there is stationarity. As such, estimation in this scenario necessitates some sort of sliding window technique in order to approximate an expectation, giving rise to a trade-off between sensitivity to dynamic changes and accuracy. Despite being concerned with time-varying causal influences, these approaches are still ultimately attempts to answer Q1 in that the quantity being estimated is determined solely by the underlying joint distribution. The temporal variability that is measured by these approaches is a result only of potential non-stationarities. This is fundamentally different from the question we are asking, which is concerned with the dynamic causal influences that are associated with a particular realization of a process that may or may not be stationary.

\section{A Sample Path Measure of Causal Influence}\label{measure}

%

We begin by considering the scenario where, having observed $(x^{i-1},y^{i-1},z^{i-1})$, we wish to determine the causal influence that $y^{i-1}$ has on the next observation of $X_i$. Define the \emph{restricted} (denoted $(r)$) and \emph{complete} (denoted $(c)$) histories as:
\begin{align*}
\hr &\triangleq \{x_1,\dots,x_{i-1}\}  \cup \{z_1,\dots,z_{i-1}\} \\
\hc &\triangleq \hr \cup \{y_1,\dots,y_{i-1}\}
\end{align*}

\noindent The current time samples of side information from the histories (i.e. $y_i$ and $z_i$) are intentionally omitted, as we assume that there is no instantaneous coupling. We next define the restricted and complete conditional distributions as:
\begin{align*}
\fxr{i}(x_i) & \triangleq p_{X_i}
    (x_i \mid \hr) \\
\fxc{i}(x_i) & \triangleq p_{X_i}
    (x_i \mid \hc).
\end{align*}

\noindent Using these distributions, the sample path causal measure from $Y$ to $X$ in the presence of side information $Z$ at time $i$ is defined by:
\begin{equation}
\cyx(\hc) \triangleq \kl{\fxc{i}}{\fxr{i}} \label{Cdef}
\end{equation}

\noindent For ease of notation, we may refer to the causal measure at time $i$ simply as $\cyx(i)$.

The proposed causal measure has an interesting relationship to the directed information. To illustrate this, consider the conditional mutual information term that appears in the sum in \eqref{dinoinst}, along with two equivalent representations:
\begin{align}
I(Y^{i-1};X_i \mid X^{i-1}) &= H(X_i \mid X^{i-1})-H(X_i \mid X^{i-1},Y^{i-1}) \label{di_entropy}\\
&= E_{X^i,Y^{i-1}}\left[
\log \frac{p(X_i \mid X^{i-1},Y^{i-1}) }{p(X_i \mid X^{i-1})}
\right] \label{di_logratio}
\end{align}

\noindent These equivalent definitions of directed information yield two interpretations. While \eqref{di_entropy} considers the reduction in uncertainty obtained by conditioning on $Y^{i-1}$, \eqref{di_logratio} considers the change in the distribution resulting from the added conditioning as measured by a log-likelihood ratio. When we wish to condition on a realization $(X^{i-1},Y^{i-1})=(x^{i-1},y^{i-1})$, these representations are no longer equivalent:
\begin{gather}
H(X_i \mid x^{i-1})-H(X_i \mid x^{i-1},y^{i-1}) \label{cond_entropies}\\
\ne \nonumber \\
E_{X_i}\left[
\log \frac{p(X_i \mid x^{i-1},y^{i-1}) }{p(X_i \mid x^{i-1})}
\middle|
x^{i-1},y^{i-1}\right] \label{causal_logratio}
\end{gather}

\noindent The representation given by \eqref{causal_logratio} is chosen to be the sample path causal measure and is indeed equivalent to the proposed measure in \eqref{Cdef}. This choice is made clear by noting two properties of \eqref{cond_entropies}. First, we note that \eqref{cond_entropies} may be negative. Second, for particular realizations of $x^{i-1}$ and $y^{i-1}$, we may have that conditioning on $y^{i-1}$ drastically shifts the distribution of $X_i$ while only mildly affecting the conditional entropy, yielding a value of nearly zero for a scenario when there is a clear causal influence. We note that the difference between definitions of DI that is induced by conditioning on a realization is acknowledged in \cite{jiao2013universal}, where four unique estimators of DI are proposed based on these various equivalent definitions of DI. While these estimators converge to the same result in the estimation of DI, the different perspectives yield different results for the question we are addressing and thus their implications must be considered.

As a result of the added conditioning, the proposed measure is a \emph{random variable} that takes on a value for each possible history and may be related to the directed information as follows:

\begin{proposition}\label{prop:te}
In the absence of instantaneous influences, the sum of the expectation over sample paths of the proposed causal measure is the directed information:
\begin{equation}\label{teprop}
\sum_{i=1}^n E
[\cyx(\hc)] = \diyxzm
\end{equation}
\end{proposition}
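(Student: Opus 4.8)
The plan is to show that for each fixed $i$ the expectation of the causal measure equals a single conditional-mutual-information term, after which the result follows by summing over $i$ and invoking the definition of the causally conditioned directed information.

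First I would write out $E[\cyx(\hc)]$ explicitly. Since $\cyx(\hc) = \kl{\fxc{i}}{\fxr{i}}$ is a deterministic function of the complete history $\hc = (X^{i-1},Y^{i-1},Z^{i-1})$, the outer expectation is taken against the distribution $p(x^{i-1},y^{i-1},z^{i-1})$ of that history. Expanding the relative entropy as a sum over $x_i$ weighted by $\fxc{i}(x_i) = p(x_i \mid x^{i-1},y^{i-1},z^{i-1})$, I obtain a double sum, over the history and over $x_i$, whose summand is $p(x^{i-1},y^{i-1},z^{i-1})\, p(x_i \mid x^{i-1},y^{i-1},z^{i-1}) \log \frac{p(x_i \mid x^{i-1},y^{i-1},z^{i-1})}{p(x_i \mid x^{i-1},z^{i-1})}$.

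The key step is to merge the history distribution with the inner complete conditional via the chain rule, $p(x^{i-1},y^{i-1},z^{i-1})\, p(x_i \mid x^{i-1},y^{i-1},z^{i-1}) = p(x^i,y^{i-1},z^{i-1})$. The resulting expression is exactly the log-likelihood-ratio form of conditional mutual information, so $E[\cyx(\hc)] = I(Y^{i-1};X_i \mid X^{i-1},Z^{i-1})$. This is precisely the non-equivalence discussed around \eqref{causal_logratio}: the relative entropy is the realization-specific quantity, and taking the outer expectation over $\hc$ is what restores the standard information-theoretic object from the per-realization log-ratio.

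Finally I would sum over $i = 1,\dots,n$. Under the assumption of no instantaneous influence, $X_i$ is conditionally independent of $Y_i$ given $(X^{i-1},Y^{i-1},Z^{i-1})$, so each term $I(Y^{i-1};X_i \mid X^{i-1},Z^{i-1})$ coincides with $I(Y^{i};X_i \mid X^{i-1},Z^{i-1})$, matching the summand in the definition of $\diyxzm$. I do not anticipate a genuine obstacle; the argument is a direct application of the chain rule together with the law of total expectation. The only point requiring care is the bookkeeping of the $Y$ index ($Y^{i-1}$ versus $Y^{i}$), which is exactly where the no-instantaneous-influence hypothesis is invoked to reconcile the log-ratio form of the causal measure with the stated definition of the causally conditioned directed information.
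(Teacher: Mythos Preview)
Your proposal is correct and follows essentially the same route as the paper: expand the expectation of the KL divergence over $\hc$, merge $p(\hc)\,\fxc{i}(x_i)$ into the joint $p(x^i,y^{i-1},z^{i-1})$ via the chain rule, recognize the resulting log-ratio expectation as $I(Y^{i-1};X_i\mid X^{i-1},Z^{i-1})$, and sum over $i$. The paper's proof is slightly terser and does not spell out the $Y^{i-1}$ versus $Y^{i}$ reconciliation you highlight, but the substance is identical.
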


\noindent See Appendix \ref{app:te} for a proof of the proposition.

A second key property of the proposed measure is non-negativity (for any history), which follows directly from the properties of the KL-divergence. Furthermore, the measure will take a value of zero if and only if the complete and restricted distributions are equivalent for a given history. As such, the proposed causal measure may take on a large value when the additional condition on $y^{i-1}$ introduces a large amount of \emph{uncertainty} into the distribution of $X_i$. In such a scenario, we would expect $y^{i-1}$ to have a significant causal influence on $X_i$ even though it is not causing $X_i$ to take on a \emph{specific value}. It is this type of scenario that makes Q2 so difficult to answer in a consistent manner, despite having a clear interpretation in terms of Q3.

\begin{remark}\label{remark:intervention}
Despite the fact that Granger's perspective on causal influence includes no remarks on the role of interventions, it may be of interest to consider a version of the proposed measure where the value of the influencer is forced by means of an intervention\footnote{We thank an anonymous reviewer for this suggestion.}. For example, one might wish to consider:
\begin{equation}
C^{(i)}_{Y\ra X}(\mc{H}_i^{(c)})\triangleq \kl{p_{X_i}^{(i)}}{p_{X_i}^{(r)}}
\end{equation}
\noindent where the first argument of the divergence is defined as the \emph{interventional} distribution:
\begin{equation}
p_{X_i}^{(i)}(x_i) \triangleq p(x_i \mid \mc{H}_i^{(r)},do(Y^{i-1}=y^{i-1}))
\end{equation}
\noindent We have here used the $do$-operator of Pearl \cite{pearl2009causality} to represent the action of forcing $Y^{i-1}$ to take the values $y^{i-1}$ irrespective of the probability with which those values occur. Given that it is often infeasible to perform such interventions in real world scenarios, a large body of causality research is focused on determining when these interventional distributions can be learned from observed data. While providing a general characterization of the scenarios for which $C^{(i)}_{Y\ra X}=C_{Y\ra X}$ is outside the scope of this paper, this equivalence does in fact hold for the three examples in the following section (with a mild technical assumption). This follows intuitively from the depictions in Figure \ref{fig:examplegraphs} and is shown formally in Appendix \ref{app:backdoor}. By contrast, we would not expect this equivalence to hold for the stock market example considered in Section \ref{djhs}, as discussed in Remark \ref{remark:djhs}.
\end{remark}
\subsection{Justification for Measurement of Sample Path Influences}\label{sec:examples}

We now present a series of examples that illustrate the value of a sample path causal measure. Graphical representations of the three examples can be seen in Figure \ref{fig:examplegraphs}.

\begin{figure*}[t!]
  \centering
  \includegraphics[keepaspectratio, width=\textwidth]{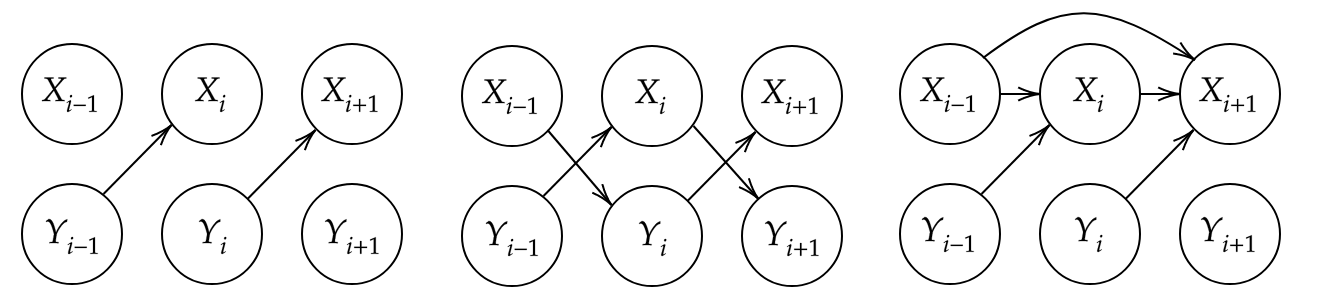}
  \caption{Graphical representation of the IID influences (left), perturbed cross copying (center), and horse betting (right) examples}
  \label{fig:examplegraphs}
\end{figure*}

\subsubsection{IID Influences}\label{iid_influences}
Let $Y_i \sim Bern(\epsilon)$ iid for $i=1,2,\dots$  and:
\begin{equation}\label{ex_one}
X_i \sim
\begin{cases}
      \text{Bern}(p_1), & Y_{i-1} = 1 \\
      \text{Bern}(p_2), & Y_{i-1} = 0
\end{cases}
\end{equation}
\noindent for $\epsilon,p_1,p_2\in[0,1]$. Intuitively, the extent to which $Y_{i-1}$ influences $X_i$ will vary for different values $y_{i-1}$ provided that $p_1\ne p_2$. In order to compute the causal measure $\cyx(i)$, we first need to find the restricted distribution of $X_i$ given only its own past:
\begin{equation*}
\begin{aligned}
\fxr{i}(1)&=\mathbb{P}(X_i = 1 | X^{i-1}=x^{i-1}) \\
&= \mathbb{P}(X_i = 1) \\
&= \sum_{y_{i-1}\in \{0,1\}}
    \mathbb{P}(X_i =1 \mid Y_{i-1} = y_{i-1}) \mathbb{P}(Y_{i-1} = y_{i-1}) \\
&= p_1\epsilon + p_2(1-\epsilon).
\end{aligned}
\end{equation*}
\noindent Noting that $\fxc{i}(1) = p_1$ when $y_{i-1}=1$ and $\fxc{i}(1) = p_2$ when $y_{i-1}=0$, the causal measure is given by:
\begin{equation*}
\cyx(i) =
\begin{cases}
     D(p_1 \mid\mid  p_1\epsilon + p_2(1-\epsilon)),
     & y_{i-1} = 1 \\
     D(p_2 \mid\mid  p_1\epsilon + p_2(1-\epsilon)),
     & y_{i-1} = 0
\end{cases}
\end{equation*}
\noindent Thus, we see that as $\epsilon \rightarrow 0$,
\begin{equation*}
\cyx(i) \rightarrow
\begin{cases}
     D(p_1 \mid\mid  p_2),
     & y_{i-1} = 1 \\
     0,
     & y_{i-1} = 0
\end{cases}
\end{equation*}
\noindent By contrast, the DI rate is given by taking the expectation of $\cyx(i)$ over possible values $Y_{i-1}$. Defining $\cyx(i) \triangleq \cyx(y_{i-1})$, we get:
\begin{equation*}
\dirateyx=
\cyx(1)\epsilon + \cyx(0)(1-\epsilon)
\xrightarrow{\epsilon\rightarrow 0} 0
\end{equation*}
\noindent As a result, it is clear that the sample paths that occur with lower probability will give rise to a greater causal measure than those that occur with higher probability; however, as a result of their lesser probability, these infrequent, highly influential events will have little influence in the computation of the DI rate.

We further note that while it is tempting to invoke ``conditioning reduces entropy'' to conclude that $\cyx(i) > 0$ represents a \emph{reduction} in uncertainty that is obtained by including the past $y^{i-1}$ in the prediction of $X_i$, this is not the case. To make this clear, assign values $p_1 \approx 0.5$ and $p_2 \approx 1$ in \eqref{ex_one} and again let $\epsilon$ approach zero. In such a scenario, we find that:
\begin{equation*}
\fxr{i}(1)\approx 1 \ \ \ \ \ \ \ \ \ \
\fxc{i}(1)\approx
\begin{cases}
1, & y_{i-1}=1 \ (w.p. \ 1-\epsilon)  \\
0.5, & y_{i-1}=0 \ (w.p. \ \epsilon)
\end{cases}
\end{equation*}

\noindent As such, it is clear that by additionally conditioning on $y_{i-1}=0$, there is a considerable increase in uncertainty. Thus, while it is certainly true that $H(X_i \mid X^{i-1}) \le H(X_i \mid X^{i-1},Y^{i-1})$, there are scenarios in which a particular realization of $Y^{i-1}$ may \emph{cause uncertainty} in $X_i$. Revisiting Q2, it is not clear how to answer the extent to which the event $\{Y_{i-1}=0\}$ causes any particular outcome $\{X_i=x_i\}$, because all possible outcomes are equally likely. On the other hand, if we consider Q3, it is quite clear that the event $\{Y_{i-1}=1\}$ has significant influence on $X_i$ and that this is reflected by the proposed measure.

\subsubsection{Perturbed Cross Copying}\label{xcopy}
We next consider a scenario where two processes repeatedly swap values. This example was originally posed in \cite{ay2008information} and modified to include noise in \cite{janzing2013quantifying}. Formally, the processes may be defined as:
\begin{equation}\label{ex_two}
X_i =
\begin{cases}
Y_{i-1}, \ w.p. \ 1 -\epsilon \\
Y_{i-1} \oplus 1, \ w.p. \ \epsilon
\end{cases} \ \ \ \ \
Y_i =
\begin{cases}
X_{i-1}, \ w.p. \ 1 -\epsilon \\
X_{i-1} \oplus 1, \ w.p. \ \epsilon
\end{cases}
\end{equation}

\noindent where $X_i,Y_i\in\{0,1\}$ for all $i$ and $\oplus$ is the XOR operator. We again consider the limiting case where $\epsilon$ is taken to approach zero. As is shown in \cite{janzing2013quantifying}, the DI rate approaches zero as $\epsilon \rightarrow 0$. This results from the fact that for very small $\epsilon$, $Y_{i-1}$ on average contains virtually no information about $X_i$ that is not contained in $X_{i-2}$.

Janzing et al. \cite{janzing2013quantifying} note that because $X_i$ and $X_{i-2}$ are \emph{independent} given $Y_{i-1}$, $Y_{i-1}$ should, in some sense, be fully responsible for the information that is known about $X_i$. As a result, for this example their proposed causal strength measures the average reduction in uncertainty obtained by conditioning on $Y_{i-1}$ versus conditioning on \emph{nothing at all}, i.e. $\csyx = D(\epsilon \mid\mid 0.5)\rightarrow 1$ as $\epsilon \rightarrow 0$ (under the assumption the $X$ and $Y$ are initiated by fair coin tosses).

Next, we consider our proposed sample path measure. First, we note that the complete distribution of $X_i$ depends only upon $y_{i-1}$ and the restricted distribution depends only upon $x_{i-2}$. Explicitly, we get the following distributions:
\begin{equation*}
\fxc{i}(x_i) =
\begin{cases}
1-\epsilon, & x_i=y_{i-1} \\
\epsilon, & x_i\ne y_{i-1}
\end{cases} \ \ \ \ \ \ \ \ \
\fxr{i}(x_i) =
\begin{cases}
\epsilon^2 + (1-\epsilon)^2, & x_i=x_{i-2} \\
2\epsilon(1-\epsilon), & x_i\ne x_{i-2}
\end{cases}
\end{equation*}

\noindent As a result, we see that for a given complete history $\hc= \{x_{i-2},y_{i-1}\}$ we get:
\begin{equation*}
\cyx(\hc)=
\begin{cases}
D(\epsilon \mid\mid 2\epsilon(1-\epsilon)), & x_{i-2}=y_{i-1} \\
D(\epsilon \mid\mid \epsilon^2 + (1-\epsilon)^2), & x_{i-2}\ne y_{i-1} \\
\end{cases}
\end{equation*}

\noindent Thus, we see that as $\epsilon\rightarrow 0$, $\cyx \rightarrow 0$ if $x_{i-2}=y_{i-1}$ and $\cyx \rightarrow \infty$ otherwise.

A comparison of the three measures makes clear that each provides a slightly different perspective. DI rate is loyal to the Granger's perspective in that it captures how, as $\epsilon \rightarrow 0$, $Y_{i-1}$ contains less and less information about $X_i$ that is \emph{not already known}. As a result $\dirateyx$ is strictly decreasing for decreasing $\epsilon$. Causal strength, on the other hand, is loyal to the causal Markov condition in the sense that it restricts consideration to only the immediate parents of the node in question (see P2 in Section 2 of \cite{janzing2013quantifying}). As such, decreasing $\epsilon$ yields a smaller level of uncertainty in $X_i$ conditioned on $Y_{i-1}$, and therefore the causal strength is strictly increasing for decreasing $\epsilon$. The proposed measure lies somewhere in between the two in that it simultaneously captures the decrease and increase in effect of $Y$ on $X$ as $\epsilon$ shrinks. Deciding which perspective is ``correct'' is a philosophical question that must be answered on a problem-by-problem basis. In any case, the proposed measure provides an interesting perspective that, to our knowledge, has not been considered in the literature.

\subsubsection{Horse Betting}\label{betting}
Consider the problem of horse race gambling with side information as presented in Section III-A of \cite{permuter2011interpretations} (with minor adjustments to notation). At each time $i$ the gambler bets all of their wealth based on the past winners $X^{i-1}\in [M]^{i-1}$ and side information $Y^{i-1}$. As a result, the gambler's wealth at time $i$, denoted $w(X^{i},Y^{i-1})$, is a function of the winning horses and side information up to that time. Lastly, the amount of money that is won for betting on the winning horse is given by the odds $o(X_i \mid X^{i-1})$, and the portion of wealth bet on each horse is given by $b(X_i\mid X^{i-1},Y^{i-1}) \ge 0$ with $\sum_x b(x\mid  X^{i-1},Y^{i-1}) = 1$ . Thus, the evolution of the wealth can be described recursively as:
\begin{equation*}
w(X^{i},Y^{i-1}) = b(X_i\mid X^{i-1},Y^{i-1})o(X_i\mid X^{i-1})w(X^{i-1},Y^{i-2})
\end{equation*}
\noindent Finally, the expected growth rate of the wealth is defined as $\frac{1}{n}E[\log w(X^n,Y^{n-1})]$.

It is shown in \cite{permuter2011interpretations} that the betting strategy that maximizes the expected growth rate is given by distributing bets according to the conditional distribution of $X_i$ given all available information:
\begin{equation*}
b^*(X_i \mid X^{i-1},Y^{i-1}) = p(X_i \mid X^{i-1},Y^{i-1}).
\end{equation*}

Similarly, we can define a restricted betting strategy $b(X_i \mid X^{i-1})$ where the side information is not available (and optimal strategy $b^*(X_i \mid X^{i-1}) = p(X_i \mid X^{i-1})$). The wealth that is obtained under that strategy is then given by:
\begin{equation*}
w(X^i) =
b(X_i\mid X^{i-1})o(X_i\mid X^{i-1})w(X^{i-1})
\end{equation*}

\noindent Letting $w^*(X^{i},Y^{i-1})$ and $w^*(X^{i})$ represent the wealth resulting from using the optimal strategies, it is further shown in \cite{permuter2011interpretations} that the increase in growth rate resulting from including side information in the betting strategy is given by:
\begin{equation}\label{di_betting}
\frac{1}{n}E\left[\log w^*(X^{n},Y^{n-1})-\log w^*(X^{n})\right] =  \frac{1}{n}\diyxm
\end{equation}
It should be noted that the result in \eqref{di_betting} holds for any choice of odds $o(X_i\mid X^{i-1})$. Thus, we proceed by making the mild assumption that the odds chosen by the racetrack are such that, for any past sequence of winners $x^{i-1}$, the gambler optimally betting without side information is expected to lose money on round $i$:
\begin{equation}\label{losers}
E[\log b^*(X_i \mid X^{i-1})o(X_i\mid X^{i-1}) \mid x^{i-1}] = \log \delta < 0
\end{equation}
\noindent for some $0< \delta <1$. We define the above equation as the conditional expected growth rate for race $i$ (without side information). As a consequence, this implies a negative expected growth rate for the gambler's wealth without side information:
\begin{align*}
E[\log w^*(X^n)] &=E[\log b^*(X_n \mid X^{n-1})o(X_n\mid X^{n-1})]+E[\log w^*(X^{n-1})]\\
&=\sum_{i=1}^n E[\log b^*(X_i \mid X^{i-1})o(X_i\mid X^{i-1})]+\log w_0 \\
&=n \log \delta < 0
\end{align*}
\noindent where the initial wealth $w_0$ is assumed, without loss of generality, to be $1$.

It follows that a gambler with access to side information ought to gamble only if their expected growth rate is greater than zero. Applying this condition to \eqref{di_betting}, a gambler with side info can expect to win money if:
\begin{equation}\label{winner}
\frac{1}{n}\diyxm > -\log \delta
\end{equation}

Thus, when equipped with the DI, a gambler will decide either to visit the racetrack and bet on every race or to stay at home. It turns out, however, that the gambler may be doing themselves a disservice by staying home any time that \eqref{winner} does not hold. To see this, suppose that before race $i$ the gambler has witnessed winners $x^{i-1}$ and side information $y^{i-1}$, and wishes to gamble if they expect to make money on the current race. Such a scenario occurs when the conditional expected growth rate for round $i$ is positive:
\begin{equation}\label{roundwinner}
E[\log b^*(X_i \mid X^{i-1},Y^{i-1})o(X_i\mid X^{i-1}) \mid x^{i-1},y^{i-1}] > 0
\end{equation}

\noindent Combining \eqref{roundwinner} with the rate for round $i$ in \eqref{losers}, the condition for which the gambler should place a bet becomes:
\begin{align*}
E[\log b^*(X_i \mid X^{i-1},Y^{i-1})-&\log b^*(X_i \mid X^{i-1}) \mid x^{i-1},y^{i-1}] \\
&=\sum_{x_i}p(x_i\mid x^{i-1},y^{i-1})\log \frac{b^*(x_i \mid x^{i-1},y^{i-1})}{b^*(x_i \mid x^{i-1})} \\
&=\sum_{x_i}p(x_i\mid x^{i-1},y^{i-1})\log \frac{p(x_i \mid x^{i-1},y^{i-1})}{p(x_i \mid x^{i-1})} \\
&= \cyx(x^{i-1},y^{i-1}) \\
& > -\log \delta
\end{align*}

\noindent Thus we can see that while the DI represents the \emph{time averaged} expected increase in wealth growth rate resulting from side information, the proposed measure gives the \emph{per round} expected increase. It is important to note that with problems in communication theory, low probability events may indeed be of little concern, and thus the DI may be the correct technique with which to analyze the relationship between $Y$ and $X$. In the case of betting and the applications discussed in Section \ref{djhs}, we note that there may be great interest in how the two time series interact for specific realizations, even if those realizations are rare.

\section{Estimating the Causal Measure}\label{estimation}

An estimate of the causal measure can be obtained by simply estimating the complete and restricted distributions and then computing the KL divergence between the two at each time. Such an estimator allows us to leverage results from the field of sequential prediction \cite{merhav1998universal}. The sequential prediction problem formulation we consider is as follows: for each round $i \in \{1,\dots,n\}$, having observed some history $\history{i}$, a learner selects a probability assignment $\hat{p}_i \in \mc{P}$, where $\mc{P}$ is the space of probability distributions over $\mc{X}$. Once $\hat{p}_i$ is chosen, $x_i$ is revealed and a loss $l(\hat{p}_i,x_i)$ is incurred by the learner, where the loss function $l:\mc{X}\rightarrow \mathbb{R}$ is chosen to be the self-information loss given by $l(p,x) = -\log p(x)$.

The performance of sequential predictors may be assessed using a notion of \emph{regret} with respect to a reference class of probability distributions $\refclass \subset \mc{P}$. For a given round $i$ and reference distribution $\tilde{p}_i \in \refclass$, the learner's regret is:
\begin{equation}
r(\hat{p}_i,\tilde{p}_i,x_i) = l(\hat{p}_i,x_i) - l(\tilde{p}_i,x_i)
\end{equation}

\noindent In many cases the performance of sequential predictors will be measured by the worst case regret, given by:
\begin{align}
R_n(\refclass_n) &= \sup_{x^n \in \mc{X}^n} \sum_{i=1}^n l(\hat{p}_i,x_i) - \inf_{\tilde{p}\in \refclass_n} \sum_{i=1}^n l(\tilde{p}_i,x_i) \label{optimal_f} \\
&\triangleq \sup_{x^n \in \mc{X}^n} \sum_{i=1}^n r(\hat{p}_i,f^*_i,x_i)
\end{align}

\noindent where $p^*_i \in \refclass$ is defined as the distribution from the reference class with the smallest cumulative loss up to time $n$, i.e. the $\tilde{p}_i$ for which $R_n$ is largest. We also define $p^* \in \refclass_n \subset \mc{P}^n$ to be the cumulative loss minimizing \emph{joint} distribution, noting that the reference class of joint distributions $\refclass_n$ is not necessarily equal to $\refclass^n$ (i.e. $\refclass \times \refclass \times \dots$), as often times there may be a constraint on the selection of the best reference distribution that is imposed in order to establish bounds. In the absence of any restrictions, the reference distributions may be selected at each time such that $p^*_i(x_i)=1$, resulting in zero cumulative loss for any sequence $x^n$. Thus, sequential prediction problems impose restrictions on the reference distributions with which to compare predictor performance \cite{merhav1998universal}. For example, one may assume stationarity by enforcing $p_1^*=p_2^*=\dots=p_n^*$ or assume that $p_i^* = p^*_{i+1}$ for all but some small number of indices. For various learning algorithms (i.e. strategies for selecting $\hat{p}_i$ given $\history{i}$) and reference classes $\refclass_n$, these bounds on the worst case regret are defined as a function of the sequence length $n$:
\begin{equation}
R_n(\refclass_n) \le M(n)
\end{equation}

It follows naturally that an estimator for our causal measure can be constructed by building two sequential predictors. The restricted predictor $\estfxr{i}$ computed at each round using $\hr$, and the complete predictor $\estfxc{i}$ computed at each round using $\hc$. It then follows that each of these predictors will have an associated worst case regret, given by $R^{(r)}_n(\refclassr_n)$ and $R^{(c)}_n(\refclassc_n)$, where $\refclassr_n$ and $\refclassc_n$ represent the restricted and complete reference classes. Using these sequential predictors, we define our estimated causal influence from $Y$ to $X$ at time $i$ as:
\begin{equation}
\estcyx(i) = \kl{\estfxc{i}}{\estfxr{i}}
\end{equation}

\noindent It should be noted that when averaged over time, this estimator becomes a universal estimator of the directed information rate for certain predictors and classes of signals \cite{jiao2013universal}.

To assess the performance of an estimate of the causal measure, we define a notion of causality regret:
\begin{equation}
CR(n) \triangleq \sum_{i=1}^n \left| \estcyx(i) - \optcyx(i)  \right|
\end{equation}

\noindent where we define:
\begin{equation}
\optcyx(i) = \kl{\optfxc{i}}{\optfxr{i}}
\end{equation}

\noindent with $\optfxc{i} \in \refclass^{(c)}$ and $\optfxr{i} \in \refclass^{(r)}$ defined as the loss minimizing distributions from the complete and restricted reference classes. We note that with this notion of causal regret, the estimated causal measure is being compared against the best estimate of the causal measure from within a reference class. As such, we limit our consideration to the scenario in which the reference classes are sufficiently representative of the true sequences to produce a desirable $\optcyx$ (i.e. $\optcyx(i) \approx \cyx(i)$ for all $i$).

We now present the necessary assumptions for proving a finite sample bound on the estimates of causality regret.

\begin{assumption} \label{assumption:abscont}
For sequential predictors \estfxc{i} and \estfxr{i} and observations $(x^n,y^n,z^n)\in \mc{X}^n \times \mc{Y}^n \times \mc{Z}^n$, we assume that \estfxc{i} and \estfxr{i} are absolutely continuous with respect to each other, i.e.:
\begin{equation}
\sup_{x \in \mc{X}} \left| \log \frac{\estfxc{i}(x)}{\estfxr{i}(x)} \right| < \infty \ \ i=1,\dots,n
\end{equation}
\end{assumption}

\noindent Clearly, the above assumption will be satisfied for any sequential prediction algorithm that does not assign zero probability to any outcomes.

\begin{assumption} \label{assumption:referencekl}
For loss minimizing distributions $\optfxc{i} \in \refclass^{(c)}$ and $\optfxr{i} \in \refclass^{(r)}$, restricted sequential predictor \estfxr{i}, and observations $(x^n,y^n,z^n)\in \mc{X}^n \times \mc{Y}^n \times \mc{Z}^n$:
\begin{equation}
\sum_{i=1}^n\left|E_{\optfxc{i}}\left[
r(\estfxr{i},\optfxr{i},X_i)
\right]\right| \le M^{(r)}(n)
\end{equation}
\end{assumption}

\noindent While it is understood that the expected regret is in general bounded by worst case regret, Assumption \ref{assumption:referencekl} requires that the reference classes are sufficiently rich that the expected regret is not too large in \emph{absolute value}. This is necessary in bounding the causality regret because unlike the regret defined by \eqref{optimal_f}, $CR(n)$ \emph{increases} when the estimated distributions outperform the regret minimizing distributions.

We now present our main theoretical result, a finite sample bound on the causality regret under Assumptions \ref{assumption:abscont} and \ref{assumption:referencekl}:
\begin{theorem} \label{thm:main_result}
Let the worst case regret for the predictors $\estfxr{i}$ and $\estfxc{i}$ be bounded by $R^{(r)}_n(\refclassr_n) \le M^{(r)}(n)$ and $R^{(c)}_n(\refclassc_n) \le M^{(c)}(n)$, respectively. Then, for any collection of observations $(x^n,y^n,z^n)\in \mc{X}^n \times \mc{Y}^n \times \mc{Z}^n$ satisfying Assumptions \ref{assumption:abscont} and \ref{assumption:referencekl}, we have:
\begin{equation} \label{causal_bound}
CR(n) \le
M^{(c)}(n) + M^{(r)}(n) +
    \frac{\left|\left| \vec{c}_n \right|\right|_2}{\sqrt{2}}\sqrt{M^{(c)}(n)}.
\end{equation}
\noindent where $\vec{c}_n = [c_1,\dots,c_n]$ is a vector with elements:
\begin{equation} \label{bound_const}
c_i = \sum_{x\in \mc{X}} \left| \log \frac{\estfxc{i}(x)}{\estfxr{i}(x)} \right|
\end{equation}
\end{theorem}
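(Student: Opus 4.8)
The plan is to expand the two divergences defining $\estcyx(i)$ and $\optcyx(i)$, form their difference, and split it through an add-and-subtract step into three pieces whose cumulative contributions are governed by $M^{(r)}(n)$, by $M^{(c)}(n)$, and by the cross term. Writing $\hat c = \estfxc{i}$, $\hat r = \estfxr{i}$, $c^* = \optfxc{i}$, $r^* = \optfxr{i}$ for brevity, I would insert the mixed divergence $\kl{c^*}{\hat r}$ and use the elementary identities $\kl{\hat c}{\hat r} - \kl{c^*}{\hat r} = -\kl{c^*}{\hat c} + \sum_{x}(\hat c(x)-c^*(x))\log\frac{\hat c(x)}{\hat r(x)}$ and $\kl{c^*}{\hat r}-\kl{c^*}{r^*}=\sum_x c^*(x)\log\frac{r^*(x)}{\hat r(x)}$. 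These combine to give
\[
\estcyx(i)-\optcyx(i) = A_i - \kl{c^*}{\hat c} + E_{\optfxc{i}}\!\left[r(\estfxr{i},\optfxr{i},X_i)\right],
\]
where $A_i \triangleq \sum_{x}(\hat c(x)-c^*(x))\log\frac{\hat c(x)}{\hat r(x)}$ and the final sum has been recognized as an expected restricted regret (since $r(\hat r,r^*,x)=\log\frac{r^*(x)}{\hat r(x)}$). Assumption~\ref{assumption:abscont} guarantees that every log-ratio, hence each $c_i$ and each divergence, is finite, so all of these manipulations are legitimate.

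Taking absolute values and applying the triangle inequality, $CR(n)$ is at most $\sum_i|A_i| + \sum_i \kl{\optfxc{i}}{\estfxc{i}} + \sum_i \left|E_{\optfxc{i}}[r(\estfxr{i},\optfxr{i},X_i)]\right|$; the middle term is a genuine KL divergence and hence nonnegative, so its absolute value is itself. The third sum is precisely the quantity bounded by $M^{(r)}(n)$ in Assumption~\ref{assumption:referencekl}. For the middle sum I would note that $\kl{\optfxc{i}}{\estfxc{i}} = E_{\optfxc{i}}[r(\estfxc{i},\optfxc{i},X_i)]$ is the nonnegative expected log-loss regret of the complete predictor against its own best-in-class comparator, and argue that its cumulative value is controlled by the complete worst-case regret, giving $\sum_i \kl{\optfxc{i}}{\estfxc{i}} \le M^{(c)}(n)$.

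The residual term $A_i$ generates the cross term. Because $\hat c$ and $c^*$ are both distributions, $\sum_x(\hat c(x)-c^*(x))=0$, so $|A_i| \le \left(\max_x|\hat c(x)-c^*(x)|\right)\sum_x\left|\log\frac{\hat c(x)}{\hat r(x)}\right| = \lVert \hat c - c^*\rVert_\infty\, c_i$, and for a mean-zero vector $\lVert\hat c-c^*\rVert_\infty \le \tfrac12\lVert\hat c-c^*\rVert_1$. Pinsker's inequality then yields $\lVert\hat c-c^*\rVert_1 \le \sqrt{2\,\kl{\optfxc{i}}{\estfxc{i}}}$, so $|A_i|\le \frac{c_i}{\sqrt2}\sqrt{\kl{\optfxc{i}}{\estfxc{i}}}$. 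Summing over $i$ and applying the Cauchy--Schwarz inequality across time gives $\sum_i|A_i| \le \frac{1}{\sqrt2}\,\lVert\vec c_n\rVert_2\,\sqrt{\sum_i \kl{\optfxc{i}}{\estfxc{i}}} \le \frac{\lVert\vec c_n\rVert_2}{\sqrt2}\sqrt{M^{(c)}(n)}$, reusing the middle-term bound. Collecting the three pieces reproduces \eqref{causal_bound}.

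The main obstacle is the middle step, namely justifying $\sum_i \kl{\optfxc{i}}{\estfxc{i}} \le M^{(c)}(n)$. The hypothesis $R^{(c)}_n(\refclassc_n)\le M^{(c)}(n)$ is a statement about \emph{realized} cumulative losses along the observed path, whereas $\kl{\optfxc{i}}{\estfxc{i}}$ is an expectation over $\calx$ at the observed history; bridging this expected-versus-worst-case gap is the delicate point. It is the complete-predictor analogue of what Assumption~\ref{assumption:referencekl} supplies on the restricted side, but here it should come for free because the comparator $\optfxc{i}$ coincides with the averaging law and the summand is a nonnegative redundancy, so no absolute value and no extra richness assumption are required. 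A secondary bookkeeping concern is the exact constant $\tfrac{1}{\sqrt2}$: with all logarithms base two, Pinsker's inequality carries a $\sqrt{\ln 2}$ factor, so one must either absorb it or state Pinsker in the normalization that produces the clean constant.
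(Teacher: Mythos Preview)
Your proposal is correct and follows essentially the same route as the paper: the add-and-subtract decomposition into the three pieces $A_i$, $\kl{\optfxc{i}}{\estfxc{i}}$, and $E_{\optfxc{i}}[r(\estfxr{i},\optfxr{i},X_i)]$ is exactly what the paper obtains (it reaches it via a reverse-triangle-inequality maneuver rather than a direct identity, but the outcome is identical), and the cross term is controlled the same way via Pinsker and Cauchy--Schwarz. Your ``main obstacle'' is precisely the content of the paper's Lemma~\ref{lemma:kl}, which bounds each $\kl{\optfxc{i}}{\estfxc{i}}$ by the per-round worst-case-over-$x$ regret and then sums; your sketch of why no extra assumption is needed on the complete side is the right intuition, and your Pinsker-constant caveat applies equally to the paper's own Lemma~\ref{lemma:g_func}.
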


\noindent A proof of the theorem may be found in Appendix \ref{app:main_result}. We note that because each $c_i$ depends solely on the estimated complete and restricted distributions, a finite sample bound may be computed at each point in time. If we make the additional assumption that the absolute log ratio of our complete and restricted predictors is bounded:
\begin{equation}\label{bound_ratio}
\sup_{x\in\mc{X}} \left| \log \frac{\estfxc{i}(x)}{\estfxr{i}(x)} \right| \le L \ \ i=1,2,\dots
\end{equation}

\noindent then we can simplify the bound by observing that:
\begin{equation}
\left|\left| \vec{c}_n \right|\right|_2 \le
L \left|\mc{X}\right| \sqrt{n}.
\end{equation}

\noindent When such a scenario holds, we can make use of the following Corollary to Theorem \ref{thm:main_result} regarding the asymptotic behavior of the causality regret:

\begin{corollary}
Let the worst case regret for the predictors $\estfxr{i}$ and $\estfxc{i}$ be sublinear in $n$ and the absolute log ratio of the complete and restricted sequential predictors be bounded as in \eqref{bound_ratio}. Then, under Assumptions \ref{assumption:abscont} and \ref{assumption:referencekl}, for any collection of observations $(x^n,y^n,z^n)\in \mc{X}^n \times \mc{Y}^n \times \mc{Z}^n$, the causality regret will be sublinear in $n$:
\begin{equation}
\lim_{n\rightarrow \infty} \frac{1}{n}CR(n) = 0
\end{equation}
\end{corollary}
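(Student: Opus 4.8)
The plan is to obtain the corollary as a direct asymptotic consequence of Theorem \ref{thm:main_result}. All hypotheses of the theorem are in force (Assumptions \ref{assumption:abscont} and \ref{assumption:referencekl} hold, and the worst case regret bounds $M^{(r)}(n)$, $M^{(c)}(n)$ are given), so the bound \eqref{causal_bound} applies directly. First I would divide \eqref{causal_bound} through by $n$ to obtain
\[
\frac{1}{n}CR(n) \le \frac{M^{(c)}(n)}{n} + \frac{M^{(r)}(n)}{n} + \frac{\|\vec{c}_n\|_2}{n\sqrt{2}}\sqrt{M^{(c)}(n)},
\]
and then analyze each of the three terms on the right as $n\to\infty$.

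The first two terms vanish immediately: by hypothesis $M^{(c)}(n)$ and $M^{(r)}(n)$ are sublinear, so $M^{(c)}(n)/n \to 0$ and $M^{(r)}(n)/n \to 0$. For the third term, I would invoke the bounded log-ratio assumption \eqref{bound_ratio} together with its stated consequence $\|\vec{c}_n\|_2 \le L|\mc{X}|\sqrt{n}$, giving
\[
\frac{\|\vec{c}_n\|_2}{n\sqrt{2}}\sqrt{M^{(c)}(n)} \le \frac{L|\mc{X}|}{\sqrt{2}}\sqrt{\frac{M^{(c)}(n)}{n}}.
\]
Since $M^{(c)}(n)/n \to 0$ and the square root is continuous at $0$, this term also tends to $0$. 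Hence the entire upper bound on $\frac{1}{n}CR(n)$ tends to $0$.

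Finally, because $CR(n)$ is a sum of absolute values it is non-negative, so $0 \le \frac{1}{n}CR(n)$; combining this with the vanishing upper bound and applying a squeeze argument yields $\lim_{n\to\infty}\frac{1}{n}CR(n)=0$, as claimed. There is no substantive obstacle here, as the result is a routine limiting argument built on Theorem \ref{thm:main_result}. The only point meriting care is that the third term scales like the square root $\sqrt{M^{(c)}(n)/n}$ rather than like $M^{(c)}(n)/n$ itself; sublinearity of $M^{(c)}$ is nonetheless preserved, since any quantity tending to $0$ has a square root tending to $0$, and thus no stronger (e.g.\ explicit rate) assumption on the regret is required.
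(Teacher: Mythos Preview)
Your proposal is correct and follows exactly the route the paper intends: the Corollary is stated immediately after the bound $\|\vec{c}_n\|_2 \le L|\mc{X}|\sqrt{n}$ precisely so that dividing \eqref{causal_bound} by $n$ and letting each term vanish (the first two by sublinearity, the third via $\sqrt{M^{(c)}(n)/n}\to 0$) yields the result. The paper gives no separate proof, treating it as an immediate consequence of Theorem \ref{thm:main_result} together with the boundedness assumption, which is exactly what you have written out.
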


Lastly, we note that in the special case where the true complete and restricted distributions are in the reference classes (i.e. $\fxr{i}\in\refclassr$ and $\fxc{i}\in\refclassc$), then under an appropriately modified Assumption \ref{assumption:referencekl} with $\fxc{i}$ and $\fxr{i}$ substituted for $\optfxc{i}$ and $\optfxr{i}$, we have that:
\begin{equation}\label{true_causal_regret}
\sum_{i=1}^n \left| \estcyx(i) - \cyx(i) \right|
\le CR(n).
\end{equation}

\noindent While in practice it is not expected that we would know whether or not the true underlying distribution is in a particular class of reference distributions, this observation will be used in performing simulations in Section \ref{simulations}.

\subsection{Addressing Infinite Order Restricted Models} \label{inf_order}

It is clear that the proposed causality regret only serves as a meaningful metric of estimation accuracy insofar as the reference class optimal causal measure $\optcyx$ serves as a useful proxy for the true causal measure \cyx. This consideration is not unique to the proposed causal measure. In an extensive analysis of problems encountered when using Granger causality, \cite{stokes2017study} describes a bias-variance tradeoff that results from the fact that subsets of VAR models will in general be of infinite order even if the complete VAR model is finite order. In the context of our estimation framework, this tradeoff lies in the selection of reference classes $\refclassr_n$ and $\refclassc_n$, which need to be rich enough to yield sufficiently good $\optcyx$ but not so rich that there do not exist sequential prediction methods for which low cumulative regret may be achieved. This issue appears in numerous forms throughout the DI literature. In particular, in order to reliably estimate the DI, it is typically assumed that in addition to all processes being jointly $d$-Markov, the subset of processes that does not include the directed information source is itself $d$-Markov \cite{jiao2013universal,murin2017k,murin2016tracking,quinn2015directed,oselio2017dynamic}. In the present context, this equates to assuming that (i) $X$, $Y$, and $Z$ are jointly $d$-Markov and (ii) $X$ is ``conditionally $d$-Markov given $Z$,'' i.e. $p(X_i\mid X^{i-1},Z^{i-1})=p(X_i\mid X_{i-d}^{i-1},Z_{i-d}^{i-1})$. While these assumptions are widely utilized to establish performance guarantees of DI estimators, their implications on the nature of the relationship between $X$ and $Y$ is absent in the literature. Thus, we seek to identify \emph{when} stationary jointly Markov processes $X$, $Y$, and $Z$ are such that $X$ is conditionally Markov given $Z$. It should be noted that, by letting $Z=\emptyset$, the following results can be extended to the standard setting consisting of only $X$ and $Y$.

To understand the conditions under which the desired independence relationships hold, we can leverage tools from Bayesian networks, which can be used to represent conditional independencies in collections of random variables using a directed acyclic graph (DAG) $\mcal{G}=(V,E)$, where $V=\{V_1,\dots,V_m\}$ is a set of random variables (equivalently nodes or vertices) and $E\subset V\times V$ is a set of directed edges that it do not contain any cycles \cite{spirtes2000causation}. The parent set of a node $V_i$ in a DAG is defined as the set of nodes with arrows going into $V_i$, $\mc{P}_i \triangleq \{V_j:(V_j \ra V_i)\in E\}$. The defining characteristic of a Bayesian network representation of a joint distribution over the nodes $V\sim p$ is the ability to factorize the distribution as:
\begin{equation}\label{factorization}
p(V) = \prod_{i=1}^m p(V_i \mid \mc{P}_i).
\end{equation}
\noindent If this factorization holds for a given $p$ and $\mcal{G}$, we say $\mcal{G}$ is a Bayesian network for $p$. A key concept when working with Bayesian networks is the d-separation criterion, which is used to identify subsets of nodes whose conditional independence is implied by the graphical structure. In particular, when given three disjoint subsets of nodes $A,B,C\subset V$ in a graph $\mcal{G}$, a straightforward algorithm (shown in Algorithm \ref{alg:dsep}) can be used to determine if $C$ d-separates $A$ and $B$. When $C$ d-separates $A$ and $B$, then for any joint distribution $p(V)$ such that $\mc{G}$ is a Bayesian network for $p$, $A$ and $B$ will be conditionally independent given $C$. While the converse is not true in general (i.e. independence does not imply d-separation), it has been shown that for specific classes of Bayesian networks, the set of parameters for which the converse does \emph{not} hold has Lebesgue measure zero \cite{spirtes2000causation,meek2013strong}. When a graph $\mcal{G}$ and joint distribution $p$ are such that d-separation holds if and only if conditional independence holds for all subsets of nodes, then the distribution $p$ is called ``faithful'' to $\mcal{G}$ \cite{spirtes2000causation}.
\begin{algorithm}[H]
\caption{d-Separation \cite{lauritzen1990independence}} \label{alg:dsep}
\hspace*{\algorithmicindent} \textbf{Input}: DAG $\mcal{G}=(V,E)$ and disjoint sets $A,B,C\subset V$
\begin{algorithmic}[1]
\State Create a subgraph containing only nodes in $A$, $B$, or $C$ or with a directed path to $A$, $B$, or $C$
\State Connect with an undirected edge any two variables that share a common child
\State For each $c\in C$, remove $c$ and any edge connected to $c$
\State Make every edge an undirected edge
\State Conclude that $A$ and $B$ are d-separated by $C$ if and only if there is no path connecting $A$ and $B$
\end{algorithmic}
\end{algorithm}

A Bayesian network is a very natural representation for collections of Markov processes. In particular, using the chain rule to factorize the joint distribution over $n$ time steps of the processes $(X,Y,Z)$ yields:
\begin{equation}
p(X^n,Y^n,Z^n)
=\prod_{i=1}^n
p(X_i,Y_i,Z_i \mid X^{i-1}_{i-d},Y^{i-1}_{i-d},Z^{i-1}_{i-d})\label{factorized}.
\end{equation}
\noindent Assuming that there are no instantaneous influences facilitates construction of a Bayesian network, as we can rely on the arrow of time to determine the direction of arrows in the network. In the presence of instantaneous influences, we cannot construct a \emph{unique} Bayesian network representation of Markov processes without making alternative assumptions. This is similar reasoning to that of \cite{quinn2015directed}, where the absence of instantaneous influences is used to establish the equivalence between DI graphs and minimal generative model graphs. Under this assumption, we can further simplify \eqref{factorized} as:
\begin{equation}
p(X^n,Y^n,Z^n)
=\prod_{i=1}^n \prod_{S\in\{X_i,Y_i,Z_i\}}
p(S \mid X^{i-1}_{i-d},Y^{i-1}_{i-d},Z^{i-1}_{i-d})\label{time_network}.
\end{equation}
\noindent Comparing \eqref{factorization} and \eqref{time_network}, it is clear that we can represent a collection of processes as a Bayesian network by letting each node be a single time point of a process (i.e. $X_i$, $Y_i$, or $Z_i$) with parents $\mc{P}_{X_i},\mc{P}_{Y_i},\mc{P}_{Z_i}\subseteq\{X^{i-1}_{i-d},Y^{i-1}_{i-d},Z^{i-1}_{i-d}\}$. Given that there may be multiple valid Bayesian networks for a particular distribution, we note that $X_i$, $Y_i$, and $Z_i$ may not be conditionally dependent on the entire set $\{X^{i-1}_{i-d},Y^{i-1}_{i-d},Z^{i-1}_{i-d}\}$. Thus, when constructing a Bayesian network for $(X,Y,Z)$ we include an edge $S_{i-k}\ra S'_{i}$ for $S,S'\in \{X,Y,Z\}$ and $k=1,\dots,d$ only if:
\begin{equation}\label{construction}
I(S_{i-k};S'_i\mid\{X^{i-1}_{i-d},Y^{i-1}_{i-d},Z^{i-1}_{i-d}\}\setminus S_{i-k})>0.
\end{equation}
Using the Bayesian network construction given by \eqref{construction}, we can leverage the d-separation criterion to gain a better understanding of the types of conditions which give rise to the conditional independence relationships needed for DI estimation. To start, we identify necessary and sufficient conditions for which $X_i$ will be d-separated from $(X^{i-l-1},Z^{i-l-1})$ by $(X^{i-1}_{i-l},Z^{i-1}_{i-l})$:
\begin{theorem} \label{thm:dsep}
Let $(X,Y,Z)$ be a collection of jointly stationary $d$-Markov processes such that there are no instantaneous influences. If $\ccdiyxz=0$, then $X$ is conditionally $d$-Markov given $Z$. If $\ccdiyxz>0$, $X$ is conditionally Markov given $Z$ of order $2d$ or less if:
\begin{equation}\label{sufficient}
I(Y_j;Y_k\mid X^i,Z^i)=0 \ \forall j \le k \le i
\end{equation}
If $\ccdiyxz>0$ but \eqref{sufficient} is not satisfied, there will not exist any positive integer $l$ such that $(X_{i-l}^{i-1},Z_{i-l}^{i-1})$ d-separates $X_i$ from $(X^{i-l-1},Z^{i-l-1})$ in the Bayesian network generated according to \eqref{construction}.
\end{theorem}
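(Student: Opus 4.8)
The plan is to translate everything into the Bayesian network of \eqref{construction} and reason with the d-separation criterion of Algorithm \ref{alg:dsep}. Because \eqref{factorization} holds for this network it is a valid I-map, so d-separation is sound: a d-separation statement always implies the corresponding conditional independence, and a positive conditional mutual information always implies d-connection. Stationarity and the absence of instantaneous influences make the edge set shift-invariant with all lags at most $d$. Two translations will be used repeatedly. First, I would show that $\ccdiyxz>0$ is equivalent to the presence of a $Y\to X$ edge: if $X_i$ had no $Y$-parent, its parents would lie in $(X_{i-d}^{i-1},Z_{i-d}^{i-1})$, and the local Markov property together with weak union would give $X_i\perp Y^{i-1}\mid X^{i-1},Z^{i-1}$, forcing every term of $\ccdiyxz$ to vanish; the contrapositive yields the edge, and this direction needs no faithfulness. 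Second, since the network is an I-map, a violation of \eqref{sufficient} --- some $I(Y_j;Y_k\mid X^i,Z^i)>0$ --- implies $Y_j$ and $Y_k$ are \emph{not} d-separated by $(X^i,Z^i)$, i.e.\ there is an active path between them whose internal non-collider nodes are $Y$-vertices.

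For the first claim ($\ccdiyxz=0$) the translation above shows $X_i$ has no $Y$-parent, hence its parents lie in $(X_{i-d}^{i-1},Z_{i-d}^{i-1})$. Applying the local Markov property and weak union once more --- now separating the far past from the parents --- gives $X_i\perp (X^{i-d-1},Z^{i-d-1})\mid (X_{i-d}^{i-1},Z_{i-d}^{i-1})$, which is precisely the statement that $X$ is conditionally $d$-Markov given $Z$.

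For the second claim I would first reduce it algebraically. Using joint $d$-Markovicity to collapse the complete conditional, write
\begin{equation*}
p(x_i\mid x^{i-1},z^{i-1})=\sum_{y_{i-d}^{i-1}} p(x_i\mid x_{i-d}^{i-1},y_{i-d}^{i-1},z_{i-d}^{i-1})\,p(y_{i-d}^{i-1}\mid x^{i-1},z^{i-1}),
\end{equation*}
so that, since the first factor depends only on the lag-$d$ window, it suffices to prove the smoothing factor depends only on $(x_{i-2d}^{i-1},z_{i-2d}^{i-1})$, i.e.\ that $(X_{i-2d}^{i-1},Z_{i-2d}^{i-1})$ d-separates $Y_{i-d}^{i-1}$ from $(X^{i-2d-1},Z^{i-2d-1})$. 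Here \eqref{sufficient} does the work: it forbids both $Y\to Y$ edges and the activated-collider conduits of the form $Y\to W\leftarrow Y'$ (with $W$ an $X$- or $Z$-vertex) that would d-connect two $Y$-vertices given the $X,Z$ history. With those conduits absent, each $Y_{i-m}$ for $1\le m\le d$ has only $X,Z$-parents, all at times $\ge i-2d$, and every path from it toward the far past is either blocked at one of these parents or would require a forbidden $Y$--$Y$ link; this yields the desired d-separation and the order-$2d$ bound (the bound being $2d$ because the earliest relevant parent of the lag-$d$ vertex $Y_{i-d}$ sits at time $i-2d$).

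For the third claim I would splice the two structural facts. The $Y\to X$ edge gives $X_i\leftarrow Y_{i-k}$ and, by stationarity, a copy $Y_{i-k-t}\to X_{i-t}$ for every shift $t$; the failure of \eqref{sufficient} supplies an active $Y$--$Y$ conduit whose internal non-colliders are $Y$-vertices. Concatenating stationary shifts of this conduit produces, for any target lag $l$, a path from $X_i$ to an $X$- or $Z$-vertex at a time $\le i-l-1$ whose non-collider nodes are all $Y$-vertices. Since the candidate separating set $(X_{i-l}^{i-1},Z_{i-l}^{i-1})$ never contains a $Y$-vertex, none of these non-colliders can block the path, so no finite $l$ can d-separate $X_i$ from the far past. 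I expect the main obstacle to be exactly this construction: cleanly extracting a \emph{reusable, shiftable} conduit from the raw statement $I(Y_j;Y_k\mid X^i,Z^i)>0$, and verifying that every collider along the arbitrarily long spliced path stays activated under the finite window $(X_{i-l}^{i-1},Z_{i-l}^{i-1})$ --- which will require exhibiting, via the forward propagation of the processes, a descendant of each such collider inside the window. The first two parts are comparatively routine once the I-map translations are in place; this unblockability argument is where the real care is needed.
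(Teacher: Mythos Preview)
Your treatment of the first and third claims tracks the paper closely. For the third claim the paper works through Algorithm~\ref{alg:dsep} with the \emph{full} conditioning set $(X^i,Z^i)$: since every $X$- and $Z$-vertex is then deleted in step~3, the surviving undirected connection between $Y_j$ and $Y_k$ automatically lives on $Y$-vertices only, and stationarity replicates it at a fixed lag $\tau_1=k-j$, producing an arbitrarily long $Y$-chain that reattaches to an $X$-vertex via the $Y\to X$ edge. This moralization view sidesteps the collider-activation bookkeeping you flag as the delicate part, and you may find it a cleaner way to finish.

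There is, however, a genuine gap in your argument for the second claim. You write that \eqref{sufficient} ``forbids both $Y\to Y$ edges and the activated-collider conduits,'' and then argue by d-separation in the graph of \eqref{construction}. But \eqref{sufficient} is a conditional-independence statement, and the direction you need---from conditional independence to the absence of an edge or to d-separation---is precisely faithfulness, which the theorem does not assume (indeed Theorem~\ref{thm:necessary} is there to handle the unfaithful residue). A distribution can satisfy $I(Y_{k-1};Y_k\mid X^i,Z^i)=0$ while its \eqref{construction}-graph still carries a $Y_{k-1}\to Y_k$ edge, since the edge criterion conditions on the different set $\{X_{k-d}^{k-1},Y_{k-d}^{k-2},Z_{k-d}^{k-1}\}$; in that case your d-separation step simply fails. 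The paper avoids this entirely: for this part it never touches the graph, instead using \eqref{sufficient} purely algebraically to factor $p(y_{i-d}^{i-1}\mid X^{i-1},Z^{i-1})=\prod_j p(y_j\mid X^{i-1},Z^{i-1})$ and then invoking joint $d$-Markovicity to collapse each factor to a window of width at most $2d$. Your reduction to the smoothing factor is fine; the fix is to finish it with conditional independence directly rather than detouring through d-separation.
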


\noindent A proof of the theorem can be found in Appendix \ref{app:dsepproof}. The implication of this theorem is that the desired d-separation criteria only occurs when no two time points of $Y$ directly influence each other and $X_i$ is only causally influenced by a single $Y_j$ for some $j \le i$. In particular we note that this excludes jointly stationary $d$-Markov processes aside from the special case where $p(Y_i \mid X_{i-d}^{i-1}, Y_{i-d}^{i-1}) = p(Y_i \mid X_{i-d}^{i-1})$ and $p(X_i \mid X_{i-d}^{i-1}, Y_{i-d}^{i-1}) = p(X_i \mid X_{i-d}^{i-1}, Y_{i-\tau})$ for some $0\le \tau\le d$.

Theorem \ref{thm:dsep} uses d-separation to provide us with a characterization of networks of processes that are guaranteed to have the conditional independence relations required by DI estimators. With regard to the processes for which we cannot demonstrate d-separation (i.e. those not satisfying \eqref{sufficient}), the only distributions that will have the desired conditional independence relations are those that are \emph{unfaithful} to their graphs. While there is ample discussion in the literature noting that these distributions are typically not seen in practice (see \cite{spirtes2000causation} and citations therein), a formal characterization within the present context is desired.

\begin{theorem}\label{thm:necessary}
The set of parameters defining a collection $(X,Y,Z)$ of jointly stationary irreducible aperiodic Markov processes such that there exists a positive integer $l$ where $X$ is conditionally $l$-Markov given $Z$ but $(X_{i-l}^{i-1},Z_{i-l}^{i-1})$ does not d-separate $X_i$ from $(X^{i-l-1},Z^{i-l-1})$ in the Bayesian network constructed by \eqref{construction} has Lebesgue measure zero with respect to $\R^N$.
\end{theorem}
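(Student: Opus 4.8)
The plan is to recognize this statement as a faithfulness result in disguise: the conditional $l$-Markov property is exactly the conditional independence $X_i \perp (X^{i-l-1},Z^{i-l-1}) \mid (X^{i-1}_{i-l},Z^{i-1}_{i-l})$, and the claim is that such an independence can hold \emph{without} being forced by d-separation only on a measure-zero set of parameters. Following \cite{meek2013strong,spirtes2000causation}, the engine is the fact that a conditional independence not implied by the graph corresponds to the vanishing of a polynomial that is not identically zero, whose zero set is therefore Lebesgue-null.

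First I would set up the algebra. Let $\theta\in\R^N$ collect the free entries of the (factorized, instantaneous-influence-free) transition kernels $p(X_i\mid\cdot)$, $p(Y_i\mid\cdot)$, $p(Z_i\mid\cdot)$. Because the chain is irreducible and aperiodic it has a unique stationary distribution obtained by solving a linear system; by Cramer's rule the stationary law, and hence every finite-window joint probability, is a rational function of $\theta$ whose denominator (a sum of spanning-tree weights, via the Markov chain tree formula) is strictly positive on the admissible region. Consequently every conditional mutual information appearing in \eqref{construction} and every equation expressing the conditional $l$-Markov property becomes, after clearing these positive denominators, a polynomial condition in $\theta$. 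I would also record two finiteness reductions: the ``there exists $l$'' is a countable union over $l$, so it suffices to treat each fixed $l$; and by stationarity only finitely many edge types $S_{i-k}\to S'_i$ ($k=1,\dots,d$) can occur, so the constructed graph $\mc{G}(\theta)$ takes only finitely many values and partitions $\R^N$ into finitely many pieces.

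The core of the argument is a dichotomy over these pieces. Since no edge is structurally forbidden in the full $d$-Markov kernel, each edge-absence condition $I(S_{i-k};S'_i\mid\cdot)=0$ is the zero set of a polynomial $P_e\not\equiv 0$ and is hence Lebesgue-null; it follows that $\{\theta:\mc{G}(\theta)=\mc{G}\}$ can have positive measure only when $\mc{G}$ is the \emph{complete} graph (no absent edges), so every non-complete piece is already measure zero and contributes nothing. On the single positive-measure piece $\{\theta:\mc{G}(\theta)=\mc{G}_{\mathrm{full}}\}$ I would fix $l$ and retain only those $l$ for which d-separation of $X_i$ from $(X^{i-l-1},Z^{i-l-1})$ by $(X^{i-1}_{i-l},Z^{i-1}_{i-l})$ \emph{fails} in $\mc{G}_{\mathrm{full}}$ (for the other $l$ the configuration is not in the bad event, since the theorem requires d-separation to fail). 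For such $l$ the conditional $l$-Markov property is a polynomial condition $Q_l(\theta)=0$, and I would invoke the strong-completeness result of \cite{meek2013strong}: because the two node sets are d-connected given the separator, there is a distribution faithful to $\mc{G}_{\mathrm{full}}$ realizing this dependence, and faithfulness to the complete graph forces every construction MI positive, placing the witness inside $\{\mc{G}(\theta)=\mc{G}_{\mathrm{full}}\}$. At that witness $Q_l\neq 0$, so $Q_l\not\equiv 0$ on the complete-graph region and $\{Q_l=0\}$ is Lebesgue-null there. Assembling the pieces, the bad set lies in the finitely many non-complete (null) pieces together with a countable union over $l$ of null sets inside the complete piece, and therefore has Lebesgue measure zero in $\R^N$.

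The step I expect to be the main obstacle is the clean invocation of strong completeness: one must guarantee that the faithful witness furnished by \cite{meek2013strong} can be taken to be a stationary, irreducible, aperiodic chain of exactly the assumed form, so that it lies in our parameter region and realizes the complete constructed graph, rather than an abstract distribution on the DAG. A secondary technical point is justifying that the conditional $l$-Markov property, which a priori concerns $p(X_i\mid X^{i-1},Z^{i-1})$ for unbounded histories, reduces to a single polynomial condition $Q_l$ at a fixed time index; this follows because violation at one fixed index already suffices for the upper bound, and the Meek witness is chosen to break precisely that finite-window independence.
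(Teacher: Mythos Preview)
Your approach is essentially correct and shares with the paper the core engine: reduce the conditional $l$-Markov property to a rational/polynomial constraint in the transition parameters (via the stationary law, obtained from solving $(A^T-I)\tilde\pi=0$), clear denominators, and invoke the fact that a nontrivial polynomial's zero set is Lebesgue-null (the paper cites Okamoto \cite{okamoto1973distinctness} for this). Fixing $l$ and taking a countable union is also exactly what the paper does.

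Where the two diverge is in how the d-separation clause is handled and how nontriviality of the polynomial is established. You partition $\R^N$ by the constructed graph $\mc{G}(\theta)$, discard the (null) non-complete pieces, and on the complete piece use Meek's strong completeness to manufacture a faithful witness violating the independence. The paper takes a shorter route: it simply bounds the bad set by the \emph{superset} $\{\theta:\text{$X$ is conditionally $l$-Markov given $Z$}\}$, ignoring the d-separation clause entirely, writes one instance of the constraint $p(x_i\mid x_{i-l-1}^{i-1},z_{i-l-1}^{i-1})=p(x_i\mid \tilde x_{i-l-1}^{i-1},\tilde z_{i-l-1}^{i-1})$ as a polynomial in $\theta$ (after Gauss--Jordan on $(A^T-I)$ and clearing denominators), and asserts nontriviality by a direct counterexample process. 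This avoids your graph partition altogether and sidesteps precisely the obstacle you flagged---having to argue that Meek's faithful witness can be realized as a stationary, irreducible, aperiodic chain in the parametric family. The paper's shortcut is available because the full parameter space allows unconstrained $d$-Markov kernels, so exhibiting a single $\theta$ with $X$ not conditionally $l$-Markov given $Z$ is straightforward; your route buys a cleaner conceptual link to faithfulness but at the cost of that extra verification.
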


\noindent A proof of the theorem is given in Appendix \ref{app:zeromeasproof}.

This observation is not to detract from the literature on the estimation and application of DI, but rather to extend the concerns expressed in \cite{stokes2017study} to the information theoretic analogues of Granger causality. The issue arises not from the estimators themselves, but rather from the properties of d-separation in causal models. When $X$ is not marginally Markov, any plug-in estimator that requires an estimate of $\fxr{i}$ will necessitate approximation of an infinite order model and therefore introduce a bias-variance tradeoff.

The most relevant previous work in this area was in \cite{kontoyiannis2016estimating}, where the performance of plug-in estimators of DI were analyzed. It was noted in this work that in scenarios where $X$ is not marginally Markov, the finite order approximation is nevertheless interesting, in particular because it serves as an \emph{upper bound} for the DI. Moreover, it is shown that this approximate measure may be reliably estimated even in scenarios where $X$ is not marginally Markov.

We now propose a second alternative to DI for which reliable estimates may be obtained when $X$ is not marginally Markov. The proposed measure is complementary to that proposed in \cite{kontoyiannis2016estimating} in that it serves as a \emph{lower bound} for the DI. In order to avoid dependence on an infinite past, we consider conditioning on a ``stale'' history:
\begin{theorem}\label{thm:markov2}
Let $(X,Y)\sim p$ be a jointly stationary irreducible aperiodic finite-alphabet $d$-Markov process. For a fixed $k$, define $\tilde{X}_i \triangleq (X_i,Y_{i-k+1})$. Then $\tilde{X}$ is a jointly stationary irreducible aperiodic $(d+k)$-Markov process and the following equality holds:
\begin{equation} \label{xtildemarkov}
p(X_i\mid X^{i-1},Y^{i-k})=p(X_i\mid X_{i-k-d}^{i-1},Y_{i-k-d}^{i-k}).
\end{equation}
\end{theorem}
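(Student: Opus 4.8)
The plan is to deduce both assertions from the global Markov property of the joint vector process $V_i \triangleq (X_i,Y_i)$, which becomes first-order Markov after the usual $d$-fold lift and therefore enjoys the following \emph{block separation} property: for any time $s$, a complete block $V_{s-d+1}^{s}$ renders the strict future $\{V_j : j>s\}$ conditionally independent of the strict past $\{V_j : j<s-d+1\}$. By stationarity I take the process to be two-sided (indexed by $\mathbb{Z}$) to avoid boundary bookkeeping. The central observation is that the history $\tilde X^{i-1}$ carries exactly the information $(X^{i-1},Y^{i-k})$, since $\tilde X_j=(X_j,Y_{j-k+1})$ contributes $X_j$ together with the stale coordinate $Y_{j-k+1}$, whose largest index over $j\le i-1$ is $i-k$. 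Both the conditional in \eqref{xtildemarkov} and the one-step kernel of $\tilde X$ are thus conditionals given $(X^{i-1},Y^{i-k})$, and I will show each reduces to a conditional given the window dictated by a single complete $V$-block.

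The key lemma, and the step I expect to be the main obstacle, is the following strengthening of block separation: if a conditioning set contains a complete block $V_{s-d+1}^{s}$, then appending \emph{any} extra variables lying strictly in the future of the block (index $>s$) preserves conditional independence of a future target from the past. Writing $A$ for the strict past, $B$ for the strict future, and taking a target $T\subseteq B$ and extra conditioning $S\subseteq B$, the factorization $p(A,T,S\mid \mathrm{block})=p(A\mid \mathrm{block})\,p(T,S\mid \mathrm{block})$ implied by block separation yields $p(T\mid \mathrm{block},S,A)=p(T\mid \mathrm{block},S)$, that is $T\perp A\mid(\mathrm{block},S)$. The delicate point is exactly that extra conditioning generally \emph{can} reopen paths to the past through colliders on the boundary of the block; the lemma shows this cannot happen when the appended variables all lie on the future side of a \emph{complete} $d$-state block, and the short derivation above makes this precise without any path analysis.

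Granting the lemma, \eqref{xtildemarkov} follows by choosing the block $V_{i-d-k}^{i-k}$ (a run of $d+1\ge d$ complete states, whose top $d$ states already separate every index $\ge i-k+1$ from every index $\le i-k-d$). Here $T=X_i$, the extra future-side conditioning is $X_{i-k+1}^{i-1}$ (indices exceeding $i-k$), and the strict past $A=(X^{i-d-k-1},Y^{i-d-k-1})$ is precisely the information discarded in passing from $(X^{i-1},Y^{i-k})$ to the window $(X_{i-d-k}^{i-1},Y_{i-d-k}^{i-k})$. The lemma gives $X_i\perp A\mid (X_{i-d-k}^{i-1},Y_{i-d-k}^{i-k})$, which is \eqref{xtildemarkov}. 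Note that this conditions on no $Y$ older than $Y_{i-d-k}$, so it is strictly tighter in the $Y$-coordinate than a naive marginalization of the $\tilde X$ transition kernel would provide; confirming this tight $Y$-window is where the index bookkeeping must be done with care.

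For the $(d+k)$-Markov property I apply the same lemma with $T=\tilde X_i=(X_i,Y_{i-k+1})$, both of whose coordinates have index $>i-k$ and hence lie in $B$. The conclusion is that $p(\tilde X_i\mid \tilde X^{i-1})$ depends on the past only through $(X_{i-d-k}^{i-1},Y_{i-d-k}^{i-k})$, and a check of indices shows this is contained in the information of $\tilde X_{i-d-k}^{i-1}$, which supplies $X_{i-d-k}^{i-1}$ and $Y_{i-d-2k+1}^{i-k}\supseteq Y_{i-d-k}^{i-k}$; a tower-property argument then upgrades ``is a function of $\tilde X_{i-d-k}^{i-1}$'' to the equality $p(\tilde X_i\mid \tilde X^{i-1})=p(\tilde X_i\mid \tilde X_{i-d-k}^{i-1})$. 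Stationarity of $\tilde X$ is immediate, as it is a fixed time-invariant function of a window of the stationary process $(X,Y)$. Finally, irreducibility and aperiodicity of $\tilde X$ (equivalently of its first-order lift) I would inherit from those of $V$ by the standard transfer of these properties to windowed/lifted chains, building admissible $\tilde X$-state transitions out of admissible $V$-trajectories; I expect this to be routine rather than the crux.
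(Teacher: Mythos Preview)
Your proposal is correct and takes a genuinely different route from the paper's argument. The paper proves the $(d+k)$-Markov property by explicit marginalization: it writes $p(X_i,Y_{i-k+1}\mid X^{i-1},Y^{i-k})$ as a sum over the hidden stretch $y_{i-k+2}^{i-1}$, factors each summand via the chain rule and the joint $d$-Markov structure of $(X,Y)$, and recombines. Your approach instead isolates the abstract fact driving everything --- that a full $V$-block separates past from future, and that appending future-side conditioning preserves this (weak union) --- and applies it once to the target $T=\tilde X_i$ and once to $T=X_i$. This is cleaner and makes the role of the Markov structure transparent; the paper's computation is more concrete and self-contained but obscures why the window length comes out as $d+k$. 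Your tower-property step to pass from ``depends only on $(X_{i-d-k}^{i-1},Y_{i-d-k}^{i-k})$'' to ``equals $p(\tilde X_i\mid \tilde X_{i-d-k}^{i-1})$'' is the one place that deserves to be written out in full, but the sandwiching you describe is exactly right.

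The one place you undershoot the paper is irreducibility and aperiodicity: you call these routine, whereas the paper gives short explicit arguments (building $\tilde X$-transitions from admissible $(X,Y)$-trajectories for irreducibility, and a gcd argument for aperiodicity). These are indeed not the crux, but ``standard transfer to windowed/lifted chains'' is a phrase that covers several nonequivalent constructions, and $\tilde X_i=(X_i,Y_{i-k+1})$ is neither a pure window of $V$ nor a pure lift, so a sentence or two in the spirit of the paper's proof would close the gap.
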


\noindent The above theorem states that so long as the distribution of $X_i$ is conditioned upon its own past and any $d$ consecutive samples of $Y$, then it is independent of all $X$ and $Y$ that precede those samples of $Y$. The proof of the theorem may be found in Appendix \ref{app:thms}.

It follows that by conditioning on some stale past of $Y$, a proxy for DI may obtained. Formally, define the partial history with lag (or staleness) $k$ to be:
\begin{equation}
\hp{k} \triangleq \hr \cup \{y_1,\dots,y_{i-k}\}.
\end{equation}

\noindent Similarly, define the partial conditional distribution:
\begin{equation}
\fxp{i}{k}(x_i) \triangleq p(x_i \mid \hp{k}).
\end{equation}

\noindent We note that the partial conditional distribution is a generalization of the complete and restricted distributions in that $\fxp{i}{1}=\fxc{i}$ and $\fxp{i}{i}=\fxr{i}$. Finally, we define the partial causal measure with lag $k$ to be:
\begin{equation}\label{partialc}
\cyxp{k}(\hc) \triangleq
\kl{\fxc{i}}{\fxp{i}{k}}
\end{equation}

\noindent This sample path dependent measure may be related to the model dependent measures by defining a partial DI (PDI) and PDI rate:
\begin{align}
I(Y_{n-k}^{n-1}\rightarrow X^n) &\triangleq
\sum_{i=1}^n E\left[ \cyxp{k}(\hc) \right] \label{partialdi}\\
\bar{I}_P^{(k)}(Y\rightarrow X) &\triangleq \lim_{n\rightarrow\infty}\frac{1}{n}
\sum_{i=1}^n I(Y_{n-k}^{n-1}\rightarrow X^n) \label{partialdirate}
\end{align}

\noindent Much like the DI (rate), the PDI (rate) can be represented as a difference of entropies (rates):
\begin{align}
I(Y_{n-k}^{n-1}\rightarrow X^n) &= H(X^n\mid\mid Y^{n-k-1}) - H(X^n\mid\mid Y^{n-1}) \label{partialdient}\\
\bar{I}_P^{(k)}(Y\rightarrow X) &= \bar{H}^{(k-1)}(X\mid\mid Y) - \bar{H}^{(1)}(X\mid\mid Y)\label{partialdirateent}
\end{align}

\noindent where we have replaced the first entropy terms on the right hand side of \eqref{dimentr} and \eqref{dimrateentr} with a lagged causally conditioned entropy.

The partial causal measures of \eqref{partialc}, \eqref{partialdi}, and \eqref{partialdirate} have straightforward interpretations in relationship to their complete counterparts. In particular, we note that these measures can be viewed as measuring the causal effect of the \emph{recent} past of $Y$ on $X$. Additionally, the PDI serves as a lower bound for DI. Therefore, effectively estimating the PDI for scenarios in which we do not have a universal estimator of the DI may be of great interest. The estimators of \cite{jiao2013universal} can be extended to be universal estimators of the PDI rate:

\begin{theorem}\label{thm:pdi_est}
Let $(X,Y)\sim p$ be a jointly stationary irreducible aperiodic finite-alphabet Markov process of order $d$ or less. Let $\estfxc{i}$ be a depth-$d$ CTW estimate of $\fxc{i}$ with access to $\hc$ and $\estfxk{i}$ be a depth-$(d+k)$ CTW estimate of $\fxp{i}{k}$ with access to $\hp{k}$. Then:
\begin{equation}
\lim_{n\rightarrow \infty} \frac{1}{n}
\sum_{i=1}^n \kl{\estfxc{i}}{\estfxk{i}} =
\bar{I}_P^{(k)}(Y\rightarrow X)
\ \ p_{X^n,Y^n}-a.s.
\end{equation}
\end{theorem}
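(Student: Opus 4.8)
The plan is to compare the estimated averaged measure $\frac{1}{n}\sum_{i=1}^n \kl{\estfxc{i}}{\estfxk{i}}$ against the idealized average $\frac{1}{n}\sum_{i=1}^n \kl{\fxc{i}}{\fxp{i}{k}} = \frac{1}{n}\sum_{i=1}^n \cyxp{k}(\hc)$ built from the true conditionals, via the decomposition
\[
\frac{1}{n}\sum_{i=1}^n \kl{\estfxc{i}}{\estfxk{i}}
= \frac{1}{n}\sum_{i=1}^n \cyxp{k}(\hc)
+ \frac{1}{n}\sum_{i=1}^n \left(\kl{\estfxc{i}}{\estfxk{i}} - \kl{\fxc{i}}{\fxp{i}{k}}\right),
\]
and to treat the two pieces separately: (I) show the idealized average converges almost surely to $\bar{I}_P^{(k)}(Y\rightarrow X)$ by an ergodic argument, and (II) show the averaged perturbation $\frac{1}{n}\sum_{i=1}^n\left|\kl{\estfxc{i}}{\estfxk{i}} - \kl{\fxc{i}}{\fxp{i}{k}}\right|$ vanishes almost surely by invoking consistency of the two CTW predictors. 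Part (II) is where the real work lies.

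For part (I), I would first observe that the summand $\cyxp{k}(\hc)=\kl{\fxc{i}}{\fxp{i}{k}}$ is a bounded function of only finitely many past symbols. Because $(X,Y)$ is $d$-Markov, $\fxc{i}$ depends on $(x_{i-d}^{i-1},y_{i-d}^{i-1})$, and by Theorem \ref{thm:markov2} $\fxp{i}{k}$ depends only on $(x_{i-k-d}^{i-1},y_{i-k-d}^{i-k})$, so the summand is a function of the length-$(d+k)$ window $(x_{i-k-d}^{i-1},y_{i-k-d}^{i-1})$. Boundedness follows from irreducibility: marginalizing $\fxp{i}{k}$ over the intermediate $Y$ block exhibits $\fxc{i}$ as one term in a convex mixture whose weight is the conditional probability of the realized block $y_{i-k+1}^{i-1}$, which is at least $p_{\min}^{k-1}>0$ with $p_{\min}$ the smallest positive one-step transition probability; hence $\fxc{i}(x)\le p_{\min}^{-(k-1)}\fxp{i}{k}(x)$ pointwise and $\kl{\fxc{i}}{\fxp{i}{k}}\le (k-1)\log(1/p_{\min})$. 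An irreducible aperiodic finite-state Markov chain is ergodic, so Birkhoff's ergodic theorem gives $\frac{1}{n}\sum_{i=1}^n\cyxp{k}(\hc)\to E[\cyxp{k}(\hc)]$ almost surely, and by stationarity this expectation is the per-step partial causal measure, i.e.\ the difference of causally conditioned entropy rates in \eqref{partialdirateent}, which is exactly $\bar{I}_P^{(k)}(Y\rightarrow X)$.

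Part (II) requires consistency of each predictor. Since $(X,Y)$ is jointly $d$-Markov, $\fxc{i}$ is a depth-$d$ tree-source conditional, so the depth-$d$ CTW predictor satisfies $\frac{1}{n}\sum_{i=1}^n \kl{\fxc{i}}{\estfxc{i}}\to 0$ almost surely by the pointwise redundancy guarantees of CTW used in \cite{jiao2013universal}. For the stale predictor, the enabling step is again Theorem \ref{thm:markov2}: it certifies that $\fxp{i}{k}$ is the conditional of the $(d+k)$-Markov augmented process $\tilde{X}_i=(X_i,Y_{i-k+1})$ and depends only on the last $d+k$ samples, so the depth-$(d+k)$ CTW predictor $\estfxk{i}$ driven by $\hp{k}$ is likewise Ces\`aro-KL consistent, $\frac{1}{n}\sum_{i=1}^n \kl{\fxp{i}{k}}{\estfxk{i}}\to 0$ almost surely. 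Without Theorem \ref{thm:markov2} the stale conditional would be an infinite-order object and no finite-depth CTW predictor could converge to it, so this is the structural crux that makes the PDI estimable where the DI is not.

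Finally I would bound the perturbation pointwise in $i$ and average. The substitution $\estfxc{i}\to\fxc{i}$ is controlled through $\|\estfxc{i}-\fxc{i}\|_1\le\sqrt{2\,\kl{\fxc{i}}{\estfxc{i}}}$ (Pinsker) together with the uniform bound on $|\log(\fxc{i}/\fxp{i}{k})|$ from part (I). The delicate term is $\sum_x \estfxc{i}(x)\log\bigl(\fxp{i}{k}(x)/\estfxk{i}(x)\bigr)$, because the Krichevsky--Trofimov floor allows $\estfxk{i}(x)$ to be as small as order $1/i$, so $|\log\estfxk{i}(x)|$ is only $O(\log i)$ rather than bounded; I would tame it exactly as in the estimator analysis of \cite{jiao2013universal}, using the absolute-continuity bound $\fxc{i}\le p_{\min}^{-(k-1)}\fxp{i}{k}$ to replace the $\estfxc{i}$-weight by a $\fxp{i}{k}$-weight and thereby dominate the term by $\kl{\fxp{i}{k}}{\estfxk{i}}$ plus a vanishing $L^1$ remainder. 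A Cauchy--Schwarz step (to move the Pinsker square roots outside the time average) then gives $\frac{1}{n}\sum_{i=1}^n\left|\kl{\estfxc{i}}{\estfxk{i}}-\kl{\fxc{i}}{\fxp{i}{k}}\right|\to 0$ almost surely, and combining (I) and (II) yields the claim. The main obstacle is precisely this last cross term: it mixes the two distinct CTW estimators and carries an unbounded $\log(1/\estfxk{i})$ factor, so the argument hinges on converting the estimation-error average into the CTW Ces\`aro-KL guarantees while absorbing the unbounded logarithm through the finite-alphabet, irreducibility-based lower bounds.
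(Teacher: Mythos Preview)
Your proposal is correct and uses the same essential ingredients as the paper --- CTW consistency for both predictors, Theorem~\ref{thm:markov2} to certify that the stale conditional is a finite-order object so that the depth-$(d+k)$ CTW can learn it, and an ergodic theorem for the time average --- but you organize them differently. The paper does not keep the KL intact; it splits $\kl{\estfxc{i}}{\estfxk{i}}$ into a cross-entropy term minus an entropy term and shows the two time averages converge separately to $\bar{H}^{(k)}(X\mid\mid Y)$ and $\bar{H}^{(1)}(X\mid\mid Y)$, the latter quoted directly from \cite{jiao2013universal}. For the cross-entropy piece the paper invokes \emph{pointwise} almost-sure convergence $\estfxc{i}\to\fxc{i}$ and $\estfxk{i}\to\fxp{i}{k}$ (Lemma~2 of \cite{jiao2013universal}, lifted to the stale predictor via Theorem~\ref{thm:markov2}), so that each summand $\sum_x\fxc{i}(x)\log\fxp{i}{k}(x)-\sum_x\estfxc{i}(x)\log\estfxk{i}(x)$ tends to zero and the Ces\`aro mean follows; the idealized cross-entropy average is then identified with $\bar{H}^{(k)}(X\mid\mid Y)$ by Breiman's ergodic theorem plus a tower-property step. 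By working with pointwise convergence the paper sidesteps the Pinsker and Cauchy--Schwarz machinery and the explicit handling of the unbounded $|\log\estfxk{i}|$ that you correctly flag as the delicate point; your ``true KL plus perturbation'' decomposition is more in the spirit of the causality-regret analysis of Theorem~\ref{thm:main_result} and keeps the target $\bar{I}_P^{(k)}$ visible throughout. One small caveat: your uniform bound $\kl{\fxc{i}}{\fxp{i}{k}}\le (k-1)\log(1/p_{\min})$ is not really a consequence of irreducibility (which does not force every one-step transition to be positive) but of the finite-window dependence --- only finitely many realizable contexts arise, each with a strictly positive mixing weight, which is what gives the uniform bound you need.
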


\noindent The proof of the theorem may be found in Appendix \ref{app:thms}. We note that this theorem is analogous with Theorem 3 from \cite{jiao2013universal}, where here we have removed any assumptions about how $X$ behaves marginally. It should also be noted that here we are only considering one of four estimators proposed by \cite{jiao2013universal}. Presumably, similar results could be obtained for the other estimators, though this is not the primary concern of this work.

As a final point, we note that by combining the PDI and the work of \cite{kontoyiannis2016estimating}, we can reliably estimate upper an lower bounds of DI for jointly Markov processes regardless of whether or not the sub-processes are themselves marginally Markov:
\begin{proposition}\label{prop:di_bounds}
Let $(X,Y)$ be a jointly stationary Markov process of order $d$.
Define the truncated DI (TDI) rate of order $k$ as:
\begin{equation}\label{tildedi}
\bar{I}_T^{(k)}(Y\rightarrow X) \triangleq
\lim_{n\rightarrow \infty} \frac{1}{n} \sum_{i=1}^n
I(X_i;Y^{i-1}_{i-k}\mid X_{i-k}^{i-1}).
\end{equation}
Then for $k_1 \ge 1$ and $k_2 \ge d$, we have:
\begin{equation}
\bar{I}_P^{(k_1)}(Y\rightarrow X)
\le
\dirateyx
\le
\bar{I}_T^{(k_2)}(Y\rightarrow X)
\end{equation}
\noindent with both bounds becoming equalities as $k_1$ and $k_2$ approach infinity.
\end{proposition}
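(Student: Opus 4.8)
The plan is to prove the two inequalities separately by passing to the entropy-rate representations already established in \eqref{dimrateentr} and \eqref{partialdirateent}, and then to obtain the two limiting equalities from monotonicity together with a martingale-type convergence argument. For the lower bound $\bar{I}_P^{(k_1)}(Y\rightarrow X)\le \dirateyx$, I would use \eqref{partialdirateent} and \eqref{dimrateentr} to write both quantities with the common term $-\bar{H}^{(1)}(X\mid\mid Y)$, so that the claim reduces to $\bar{H}^{(k_1-1)}(X\mid\mid Y)\le \bar{H}(X)$. Since $\bar{H}^{(k_1-1)}(X\mid\mid Y)=\lim_n \frac1n\sum_i H(X_i\mid X^{i-1},Y^{i-k_1+1})$ and $\bar{H}(X)=\lim_n\frac1n\sum_i H(X_i\mid X^{i-1})$, this is immediate termwise from ``conditioning reduces entropy,'' valid for every $k_1\ge 1$.

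For the upper bound $\dirateyx\le \bar{I}_T^{(k_2)}(Y\rightarrow X)$, I would write the DI rate as $\lim_n\frac1n\sum_i I(X_i;Y^{i-1}\mid X^{i-1})$ via \eqref{dirate} and compare termwise with $I(X_i;Y_{i-k_2}^{i-1}\mid X_{i-k_2}^{i-1})$ from \eqref{tildedi}. Expanding both as a difference of a marginal and a joint conditional entropy, the joint terms coincide: because $(X,Y)$ is jointly $d$-Markov and $k_2\ge d$, the last $d$ samples already carry the full dependence, so $H(X_i\mid X_{i-k_2}^{i-1},Y_{i-k_2}^{i-1})=H(X_i\mid X^{i-1},Y^{i-1})$. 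The marginal terms then satisfy $H(X_i\mid X^{i-1})\le H(X_i\mid X_{i-k_2}^{i-1})$, again by conditioning reducing entropy, since truncating the $X$-history only discards information. Subtracting yields the termwise inequality, and averaging gives the bound. This step also exhibits the precise origin of the plug-in bias discussed in Section \ref{inf_order}: truncation inflates the marginal entropy term precisely because $X$ need not be marginally Markov, while the joint term is unaffected.

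For the limiting equalities I would argue by monotonicity. The upper bound is the cleaner case: for $k_2\ge d$ the joint term is fixed and $H(X_i\mid X_{i-k_2}^{i-1})$ is nonincreasing in $k_2$, so $\bar{I}_T^{(k_2)}(Y\rightarrow X)$ decreases to its limit; by the martingale convergence of conditional entropies under an increasing conditioning $\sigma$-field, $H(X_i\mid X_{i-k_2}^{i-1})\to H(X_i\mid X^{i-1})$, so the limit is exactly $\dirateyx$, and this requires no mixing. The lower bound is where the main obstacle lies: I must show $\bar{H}^{(k_1-1)}(X\mid\mid Y)\to\bar{H}(X)$, i.e. that $I(X_i;Y^{i-k_1+1}\mid X^{i-1})\to 0$ as $k_1\to\infty$. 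The quantity is monotone and bounded, so it converges; the content is that the increasingly stale portion of the $Y$-history carries vanishing information about $X_i$ once the \emph{entire} $X$-history is conditioned upon.

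I would establish this last decay by noting that, given $X^{i-1}$, a stale $Y^{i-k_1+1}$ can influence $X_i$ only through the unobserved recent block $Y_{i-d}^{i-1}$, and then invoking the mixing implied by the irreducible aperiodic (hence exponentially ergodic) finite-state assumption carried through this section to conclude that the conditional dependence between a fixed recent block and an arbitrarily remote block decays to zero. Verifying this decay cleanly, ideally via a tail $\sigma$-field / reverse-martingale argument so as to avoid quantitative mixing-rate estimates, is the delicate part; everything else is a routine application of the entropy identities and the conditioning inequalities.
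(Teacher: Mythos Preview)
Your treatment of the two inequalities is essentially the paper's proof: for the lower bound you pass to \eqref{partialdirateent}, cancel the common $\bar{H}^{(1)}(X\mid\mid Y)$ term, and apply ``conditioning reduces entropy'' to conclude $\bar{H}^{(k_1-1)}(X\mid\mid Y)\le \bar{H}(X)$, exactly as in Appendix \ref{app:di_bounds}; for the upper bound you expand $I(X_i;Y_{i-k_2}^{i-1}\mid X_{i-k_2}^{i-1})$ termwise, use joint $d$-Markovicity with $k_2\ge d$ to equate $H(X_i\mid X_{i-k_2}^{i-1},Y_{i-k_2}^{i-1})=H(X_i\mid X_{i-d}^{i-1},Y_{i-d}^{i-1})$, and then bound $H(X_i\mid X_{i-k_2}^{i-1})\ge H(X_i\mid X^{i-1})$---again line for line what the paper does.

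Where you diverge is that you actually attempt the limiting equalities, whereas the paper's appendix proves only the two inequalities and stops. Your argument on the TDI side (the joint term is fixed for $k_2\ge d$, the marginal term $H(X_0\mid X_{-k_2}^{-1})$ is nonincreasing and converges to the entropy rate by martingale convergence) is standard and sound. You are also right that the PDI side is the subtle one: showing $\bar{H}^{(k_1-1)}(X\mid\mid Y)\to\bar{H}(X)$ amounts to showing that the remote tail of $Y$ carries no information about $X_0$ once $X_{-\infty}^{-1}$ is given, i.e.\ a conditional tail-triviality statement. Your instinct to invoke irreducibility and aperiodicity here is correct, but note that Proposition \ref{prop:di_bounds} as stated assumes only joint stationarity and $d$-Markovicity, so you are importing a hypothesis from elsewhere in Section \ref{inf_order}; without some mixing the limit on the PDI side need not hold in general.
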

\noindent The proof of the proposition is given in Appendix \ref{app:di_bounds}. It should be noted that, in a scenario where both joint and marginal Markovicity hold, the TDI and DI rates are equivalent. We have here simply renamed the TDI to emphasize that, in many cases, they are in fact \emph{not} equivalent. The implication of this proposition is that, for the large class of jointly Markov processes for which the sub-processes are not marginally Markov (as characterized by Theorems \ref{thm:dsep} and \ref{thm:necessary}), we can avoid the bias-variance tradeoff resulting from using truncated approximations of infinite-order models by reliably estimating \emph{bounds} on DI rather than the DI itself. In particular, the quantity that is estimated under the assumption that $X$ is marginally Markov (given by \eqref{tildedi}) provides a lower bound, and the PDI resulting from conditioning on a stale history provides an upper bound. Furthermore, both of these quantities can be reliably estimated, as demonstrated by \cite{kontoyiannis2016estimating} and Theorem \ref{thm:pdi_est}.

\section{Results}\label{applications}

\subsection{Stationary Markov Processes}\label{simulations}

We begin by demonstrating estimation of the causal measure for simulated stationary first order Markov processes using a context tree weighting (CTW) sequential prediction algorithm \cite{willems1995context}. For the purpose of estimating either the complete conditional distribution $\fxc{i}$, or partial conditional distribution $\fxp{i}{k}$, we utilize a CTW with side information as in \cite{cai2005universal}. In order to evaluate the causality regret of a given estimator of the causal measure, we need worst case regret bounds on the sequential predictors utilized in the estimator.

\begin{lemma}[\cite{tjalkens1993sequential,jiao2013universal}]
Let $\hat{p}$ be a depth-$d$ CTW probability assignment of a stationary finite-alphabet Markov process $X\sim p$ of order less than or equal to $d$. Then the worst case regret is bounded as:
\begin{equation}
\sup_{x^n} \ \log \frac{p(x^n)}{\hat{p}(x^n)} \le
\frac{(\left| \mc{X} \right|-1)L}{2}
\log\frac{n}{L}
+L\left(
\frac{\left|\mc{X}\right|}{\left|\mc{X}\right|-1} + \log \left|\mc{X}\right|
\right) - \frac{1}{\left|\mc{X}\right|-1} \label{ctw_regret}
\end{equation}
where $L$ is the number of leaves in the CTW (equivalently, the number of states of $X$).
\end{lemma}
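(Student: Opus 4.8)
The plan is to exploit the fact that a stationary Markov process of order at most $d$ is a \emph{tree source} whose leaves are exactly the $L$ states, so that both the true law and the CTW assignment factorize over states. Write $m \triangleq |\mc{X}|$, and for each state $s\in\{1,\dots,L\}$ let $n_s$ be the number of time indices $i$ whose length-$d$ context falls in $s$, so that $\sum_{s=1}^L n_s = n$. The order-$\le d$ assumption gives $p(x^n)=\prod_{s=1}^L p_s(x^{(s)})$, where $x^{(s)}$ is the subsequence of symbols emitted from state $s$ and $p_s$ is the associated memoryless conditional law. The CTW assignment $\hat p$ is a weighted mixture over all prunings of the depth-$d$ context tree; since the tree of the true model is one of the trees in this mixture, I would lower-bound $\hat p(x^n)$ by the single mixture term corresponding to that tree, i.e.\ its prior weight times the product of the per-leaf Krichevsky--Trofimov (KT) estimators $\hat p_{KT}(x^{(s)})$.

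First I would take logarithms of this lower bound to split the regret additively:
\begin{equation*}
\log\frac{p(x^n)}{\hat p(x^n)} \le \Gamma_d + \sum_{s=1}^L \log\frac{p_s(x^{(s)})}{\hat p_{KT}(x^{(s)})},
\end{equation*}
where $\Gamma_d$ is the sequence-independent description cost of the true tree within the CTW prior. The parameter-redundancy piece is then controlled state-by-state using a finite-sample KT redundancy bound of the form
\begin{equation*}
\log\frac{p_s(x^{(s)})}{\hat p_{KT}(x^{(s)})} \le \frac{m-1}{2}\log n_s + c_m,
\end{equation*}
valid for every subsequence (hence uniformly in $x^n$), with $c_m$ depending only on the alphabet size. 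This is the ingredient imported from \cite{tjalkens1993sequential}; the crucial point is that it must be a genuine finite-$n$ bound rather than the asymptotic $\frac{m-1}{2}\log\frac{n}{2\pi}+\log\frac{\Gamma(1/2)^m}{\Gamma(m/2)}$ redundancy, so that all lower-order constants are retained.

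Summing over the $L$ states and invoking concavity of the logarithm (Jensen's inequality) on the leading terms under the constraint $\sum_s n_s = n$ yields
\begin{equation*}
\sum_{s=1}^L \frac{m-1}{2}\log n_s \le \frac{(m-1)L}{2}\log\frac{n}{L},
\end{equation*}
which produces the dominant term of \eqref{ctw_regret}. It then remains to collect the sequence-independent constants $\Gamma_d + L c_m$ and verify that they reduce to exactly $L\big(\tfrac{m}{m-1}+\log m\big)-\tfrac{1}{m-1}$.

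The main obstacle is precisely this bookkeeping of constants. The leading-order behaviour (the state decomposition plus Jensen yielding $\tfrac{(m-1)L}{2}\log\tfrac{n}{L}$) is routine, but matching the additive terms is delicate: one must use a sufficiently tight non-asymptotic KT bound to obtain the per-leaf constant $\tfrac{m}{m-1}+\log m$, and carefully fold in the CTW model cost $\Gamma_d$, so that the lone, non-per-leaf correction $-\tfrac{1}{m-1}$ emerges rather than a spurious $O(L)$ model-description penalty. This exact-constant matching is where the finite-sample analyses of \cite{tjalkens1993sequential,jiao2013universal} do the heavy lifting, and I would lean on them for that final step rather than re-deriving the KT constant from the Gamma-function redundancy.
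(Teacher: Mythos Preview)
The paper does not prove this lemma; it is imported verbatim from \cite{tjalkens1993sequential,jiao2013universal}. Your outline is the standard CTW redundancy argument and matches exactly the machinery the paper deploys when proving the closely related Proposition~\ref{prop:sideinfo_regret}: per-leaf KT redundancy bound, factorization over contexts, Jensen on $\sum_s \log n_s$ under $\sum_s n_s = n$, and a model-cost term from lower-bounding the CTW mixture by the single tree corresponding to the true model. So the approach is correct and essentially the same as what the paper (and its references) use.

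One comment on the constants, which you rightly flag as the delicate part. In its own proof of Proposition~\ref{prop:sideinfo_regret} the paper uses the coarser KT bound $\frac{m-1}{2}\log n + (m-1)$ and a separate node-count term $S$, which is why the constants there differ from those in the lemma. The lemma's sharper additive term $L\big(\tfrac{m}{m-1}+\log m\big)-\tfrac{1}{m-1}$ comes from the tighter finite-sample KT bound in \cite{tjalkens1993sequential} combined with the more careful model-cost accounting for an $m$-ary context tree (the number of internal nodes of a full $m$-ary tree with $L$ leaves is $\frac{L-1}{m-1}$, and each contributes $\log m$ bits to specify which child is pruned). Your plan to defer that bookkeeping to the cited references is appropriate, since the paper itself does not re-derive those constants.
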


\noindent Next we note that this bound may be extended to the case where $X$ is given access to causal side information $Y$:

\begin{proposition}\label{prop:sideinfo_regret}
Let $\hat{p}$ be a depth-$d$ CTW probability assignment of $X$ causally conditioned on $Y$, where $(X,Y) \sim p$ is a pair of jointly stationary finite-alphabet process of order less than or equal to $d$. Then the worst case regret is bounded as:
\begin{equation}
\sup_{x^n,y^n} \ \log \frac{p(x^n\mid\mid y^n)}{\hat{p}(x^n\mid\mid y^n)} \le
\frac{(\left| \mc{X} \right|-1)L}{2}
\log\frac{n}{L}+L\left(\left|\mc{X}\right|-1\right) + S \label{ctw_sideinfo_regret}
\end{equation}
where $L$ is the number of leaves in the CTW, $S$ is the total number of nodes in the CTW.
\end{proposition}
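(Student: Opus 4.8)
The plan is to interpret the side-information CTW of \cite{cai2005universal} as an ordinary CTW acting on an \emph{augmented} context tree whose branches are determined jointly by the past symbols of $X$ and the causally available symbols of $Y$, and then to invoke the standard decomposition of CTW redundancy into a per-leaf parameter redundancy and a tree-structure (model) redundancy, exactly as in the proof of the preceding Lemma. Because $(X,Y)$ is jointly stationary $d$-Markov, the true causally conditional law satisfies $p(x_i\mid x^{i-1},y^i)=p(x_i\mid x_{i-d}^{i-1},y_{i-d}^i)$, so it is represented by a tree source of depth at most $d$ over the augmented context and therefore lies in the depth-$d$ CTW model class. Consequently, writing $p^*$ for the source in the class maximizing $p^*(x^n\mid\mid y^n)$, we have $\log\frac{p(x^n\mid\mid y^n)}{\hat{p}(x^n\mid\mid y^n)}\le \log\frac{p^*(x^n\mid\mid y^n)}{\hat{p}(x^n\mid\mid y^n)}$, so it suffices to bound the worst-case (best-in-class) regret of CTW with side information.

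First I would decompose the regret into a parameter and a model term,
\begin{equation*}
\log\frac{p(x^n\mid\mid y^n)}{\hat{p}(x^n\mid\mid y^n)} \le \rho_{\mathrm{par}} + \rho_{\mathrm{mod}},
\end{equation*}
the only difference from the cited Lemma being that the leaves and internal nodes are now indexed by the combined $(X,Y)$ contexts. For the parameter term, each leaf $s$ predicts $x_i$ via a Krichevsky--Trofimov estimator over $\mc{X}$; if $n_s$ denotes the number of rounds routed to $s$, with $\sum_s n_s = n$, then the per-leaf KT redundancy is bounded by $\frac{|\mc{X}|-1}{2}\log n_s + (|\mc{X}|-1)$. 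Summing over the $L$ leaves and applying concavity of the logarithm (Jensen), $\sum_s \log n_s \le L\log\frac{n}{L}$, gives
\begin{equation*}
\rho_{\mathrm{par}} \le \frac{(|\mc{X}|-1)L}{2}\log\frac{n}{L} + L(|\mc{X}|-1).
\end{equation*}
Next I would bound $\rho_{\mathrm{mod}}$: the weighting that CTW performs over all subtrees of the maximal depth-$d$ augmented tree contributes a cost controlled by the description length of the tree structure, and tracking the weighting factor contributed at each node shows this cost is at most the total number of nodes $S$. Adding the two pieces yields \eqref{ctw_sideinfo_regret}.

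The main obstacle is the bookkeeping needed to confirm that the side information enters the CTW purely as part of the \emph{routing} — i.e., $(y_{i-d},\dots,y_i)$ only selects which leaf is active at round $i$ and never enters the per-leaf KT count — so that the clean per-leaf redundancy $\frac{|\mc{X}|-1}{2}\log n_s + (|\mc{X}|-1)$ is preserved over the alphabet $\mc{X}$ alone. A second delicate point is verifying that the augmented tree's weighting cost collapses to the node count $S$, rather than the marginal-case constant $L\!\left(\frac{|\mc{X}|}{|\mc{X}|-1}+\log|\mc{X}|\right) - \frac{1}{|\mc{X}|-1}$ appearing in the preceding Lemma; this is precisely where the structure of the construction of \cite{cai2005universal} must be used, since the larger branching induced by the joint $(X,Y)$ context inflates the model-description term to $S$ while leaving the leading parameter term unchanged.
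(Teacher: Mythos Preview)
Your proposal is correct and follows essentially the same route as the paper: decompose the CTW regret into a per-leaf KT (parameter) term and a tree-weighting (model) term, with the side information $y$ entering only through the routing to leaves. The paper carries this out concretely by first defining the KT-tree estimator $p_c(x^n\mid\mid y^n)=\prod_{l\in\mc{L}} p_e^{(l)}(x^{(l)})$, bounding $\log\frac{p}{p_c}$ via the per-leaf KT bound and Jensen (exactly your $\rho_{\mathrm{par}}$ computation), and then obtaining the model term by recursively unrolling the weighting $p_w^{(s)}=\tfrac{1}{2}p_e^{(s)}+\tfrac{1}{2}\prod_{s'}p_w^{(s's)}$ down to the leaves to get $p_w\ge 2^{-S}p_c$, which yields the additive $S$; so the ``delicate points'' you flag are handled by a direct one-line recursion rather than by anything specific to the construction in \cite{cai2005universal}.
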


\noindent A proof of the proposition is provided in Appendix \ref{app:sideinfo_regret}. Using the above lemma and proposition we can compute the values of the causality regret bounds by using \eqref{ctw_regret} and \eqref{ctw_sideinfo_regret} for $M^{(r)}(n)$ and $M^{(c)}(n)$, respectively, in \eqref{causal_bound}. In the following sections we compare the causal regret bound with the true estimation accuracy for three scenarios.
\subsubsection{Independent Processes}
Let $X$ and $Y$ be independent ternary processes (i.e. $X_i,Y_i \in \{0,1,2\}$ for all $i$), with each process being stationary first order Markov. As such, the processes are completely characterized by the probabilities $p(x_i \mid x_{i-1})$ and $p(y_i \mid y_{i-1})$ for $x_i$, $x_{i-1}$, $y_i$, $y_{i-1} \in \{0,1,2\}$. Given the independence of $X$ and $Y$, we have that for all $i=1,2,\dots$, $\cyx(i)=\cxy(i)=0$.

Figure \ref{fig:indep} shows the estimate of the causal measure over time for $n=10000$ samples. We can see that the estimated causal measure in both directions quickly becomes very small at all times. The true causal measure is not shown because it is always zero. In the bottom panel of the figure we see the normalized causal regret with respect to the true causal measure as in \eqref{true_causal_regret}, which in this case is given by the running average of the estimated causal measure. Additionally, we show the causal regret bounds, which are computed using $\left|\mc{X}\right|=3$, $L=3$ in \eqref{ctw_regret}, and $L=9$ and $S=10$ in \eqref{ctw_sideinfo_regret}.
\begin{figure}[ht!]
  \centering\includegraphics[keepaspectratio,width=0.6\textwidth]{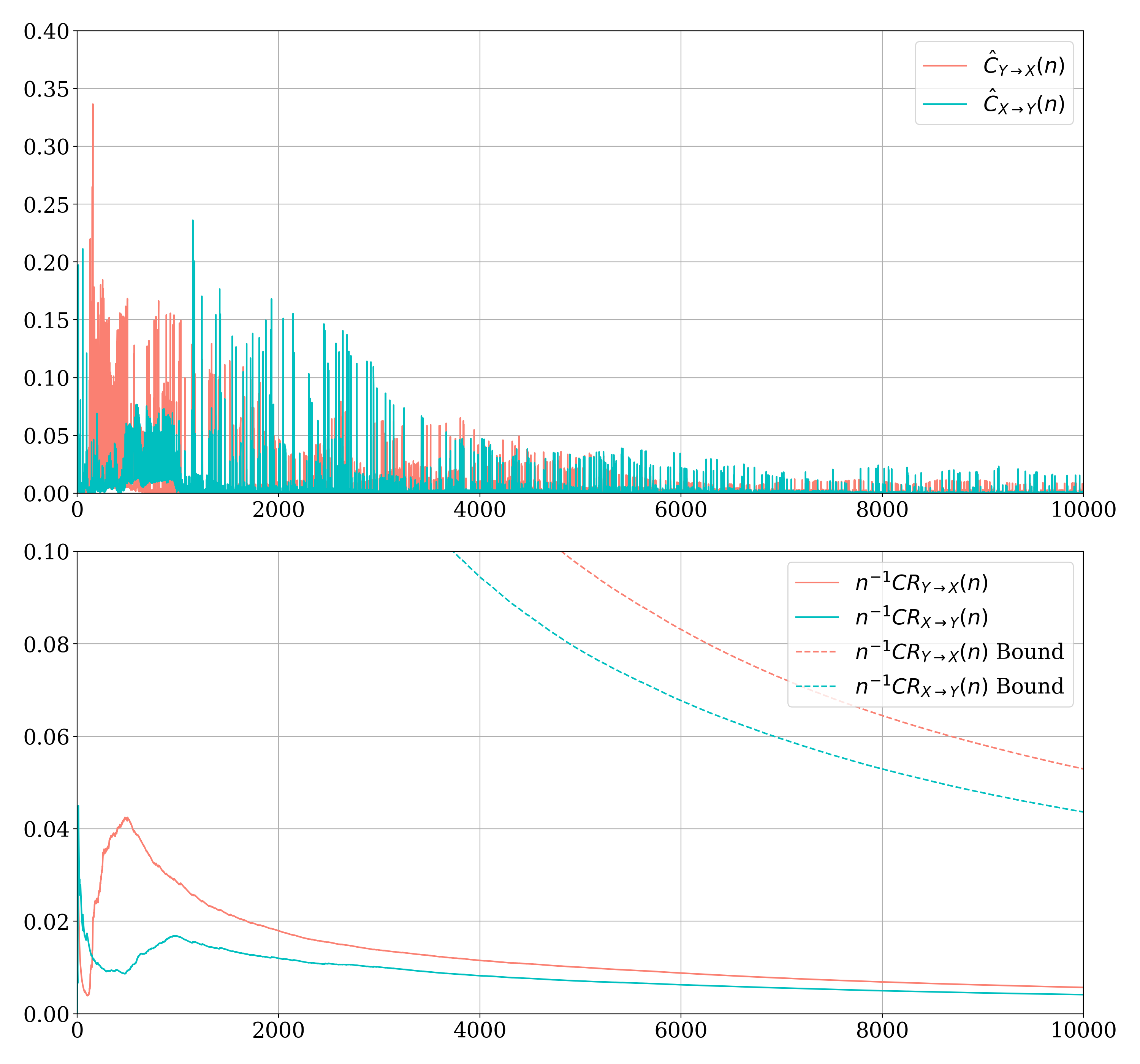}
  \caption{Top - Estimates of causal measure in each direction for independent processes. Bottom - Normalized cumulative absolute error of estimates (solid) and normalized causality regret bounds (dashed).}
  \label{fig:indep}
\end{figure}

\subsubsection{Unidirectional IID Influence}
For the second scenario we consider a pair of processes wherein each $Y_i$ is independent and identically distributed and $X_i$ is dependent only upon $X_{i-1}$ and $Y_{i-1}$. As such, it is clear that, in addition to $X$ and $Y$ being jointly first-order Markov, $X$ is marginally first-order Markov. While the marginal Markovicity is immediately clear in this case, we point out that these processes do indeed satisfy the condition of Theorem \ref{thm:dsep} for any parameterization of the probabilities in question.

Figure \ref{fig:unidir} shows the true causal measure $\cyx$ alongside the estimates $\estcyx$ and $\estcxy$. For clarity, only the last 100 time points are shown. We can see that in this time window the estimated $\estcyx$ tracks the true causal measure $\cyx$ very well, and the estimated $\estcxy$ has converged to 0 as desired. In the bottom panel we see that, because the causal measure $\estcxy(i)<\estcyx(i)$, $c_i$ in equation \eqref{bound_const} is much smaller in the $X\rightarrow Y$ direction and thus the causal regret bound is considerably tighter. This is consistent with the result obtained in \cite{kontoyiannis2016estimating} that plug-in estimators of the DI rate will converge at a faster rate if the DI rate is zero.

\begin{figure}[ht!]
  \centering\includegraphics[keepaspectratio, width=0.6\textwidth]{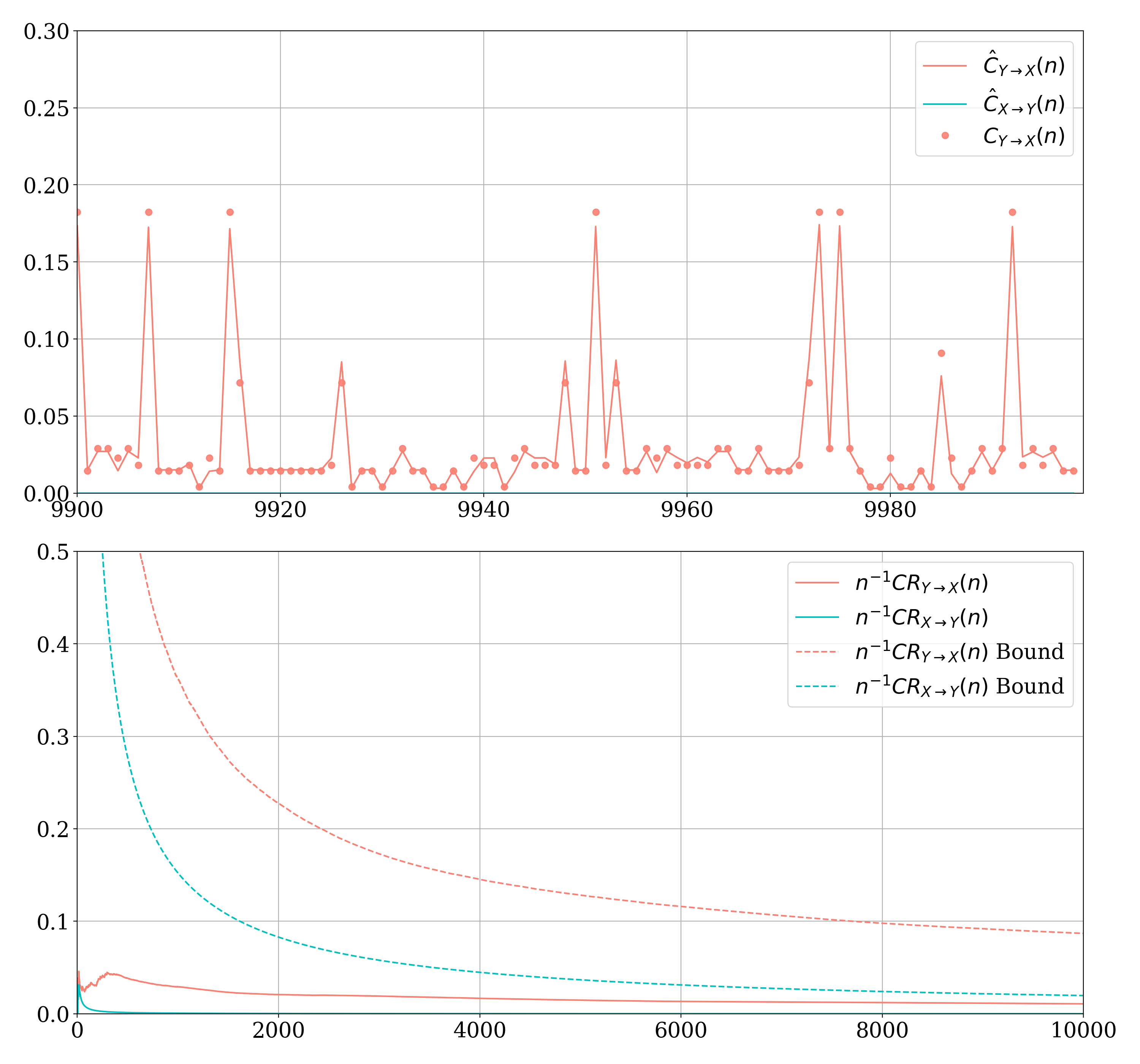}
  \caption{Top - Estimates of causal measure in each direction for unidirectional influences. Bottom - Normalized cumulative absolute error of estimate (solid) and normalized causality regret bounds (dashed).}
  \label{fig:unidir}
\end{figure}

\subsubsection{Bidirectional Influences}
Lastly, we consider the scenario where $X$ and $Y$ mutually influence each other. Specifically, let each $X_i$ and $Y_i$ be independently influenced by $X_{i-1}$ and $Y_{i-1}$ such that the processes are fully characterized by the probabilities $p(x_i \mid x_{i-1},y_{i-1})$ and $p(y_i \mid x_{i-1},y_{i-1})$.

Figure \ref{fig:bidir} shows the true and estimated causal measures in both directions. The bottom panel shows the cumulative absolute error alongside the causal regret bounds. We note that here we have extended the time horizon to $n=50000$ to illustrate that the estimators exhibit bias resulting from the fact that $X$ is not marginally Markov. As a result, it is important to note that the true restricted distribution $\fxr{i}$ will not be in the reference class of restricted distributions $\refclassr$ and we can expect the causality regret bound to be lower than the cumulative absolute error as $n\rightarrow\infty$. Due to the non-Markovicity of $X$, computing the true restricted distribution at each time becomes increasingly challenging. To address this, we derive a recursive updating algorithm for efficiently computing the true causal measure $\cyx(i)$ such settings. Details can be found in Appendix \ref{app:hmm}.

\begin{figure}[ht!]
  \centering\includegraphics[keepaspectratio, width=0.6\textwidth]{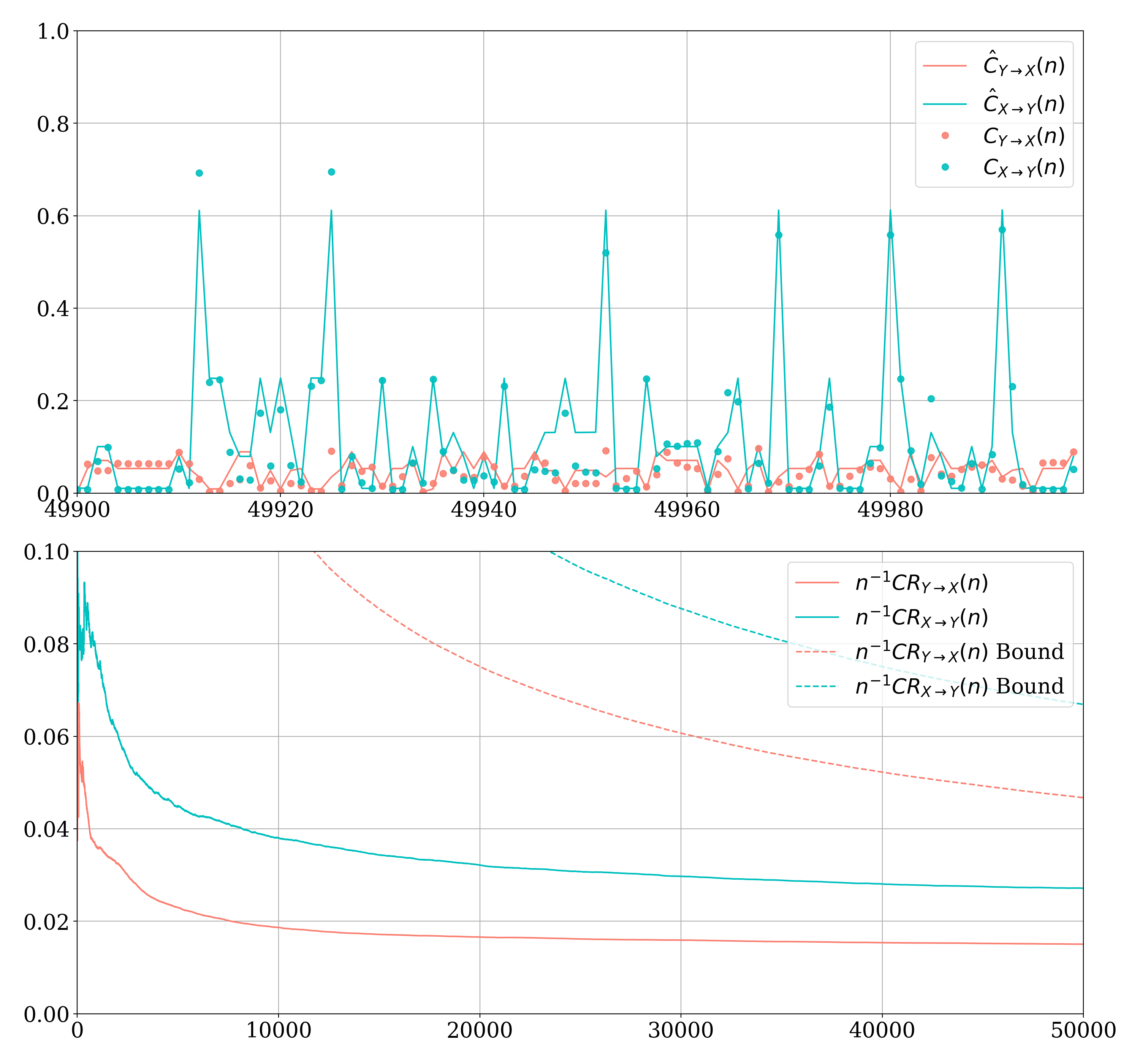}
  \caption{Top - Estimates of causal measure in each direction for bidirectional influences. Bottom - Normalized cumulative absolute error of estimate (solid) and normalized causality regret bounds (dashed).}
  \label{fig:bidir}
\end{figure}

To address the estimation bias seen in Figure \ref{fig:bidir}, we consider the partial causal measure $\cyxp{k}(i)$ defined by \eqref{partialc}. Figure \ref{fig:bidir_p} shows an estimate of the partial causal measure on the same sequence considered in Figure \ref{fig:bidir} with a staleness of $k=1$. The bottom panel of Figure \ref{fig:bidir_p} depicts the cumulative absolute error and the causal regret bounds. While the worst case regret for the complete predictor $M^{(c)}(n)$ remains the same as in the previous examples, the regret of the partial predictor is computed using equation \eqref{ctw_sideinfo_regret} with $L=27$ (3 values for $x_{i-1}$ times 9 possible values for $(x_{i-2},y_{i-2})$) and $N=31$ ($27$ leaf nodes, $3$ depth-$1$ nodes, and $1$ root node).

We can see in Figure \ref{fig:bidir_p} that due to the increased number of nodes in the CTW estimate of the partial distribution, the normalized absolute error decreases more slowly at the beginning. Regardless, the estimate of the partial causal measure does not exhibit the same behavior of converging on a biased estimate. We see the error continues to decrease throughout the entire sequence. Moreover, a visual comparison of the true and estimated measures makes clear that the estimate is unbiased.

Given that the same sequences were used in generating Figures \ref{fig:bidir} and \ref{fig:bidir_p}, we can compare the values of the complete causal measure with the partial causal measure. It is clear that while there is considerable agreement on the positions of the spikes in causal influence, the strengths vary. While it is true that the partial causal measure will be smaller than the complete causal measure in expectation (i.e. partial DI is less than DI), there are times where the stale history $y^{i-k}$ is misleading about the recent history $y_{i-k+1}^{i-1}$, and thus we sometimes see that the partial causal measure is larger than the complete causal measure on a given sample path. Lastly, we note that because the true partial distribution is in the class of reference partial distributions, the causality regret bounds will bound the cumulative absolute error.

\begin{figure}[ht!]
  \centering\includegraphics[keepaspectratio, width=0.6\textwidth]{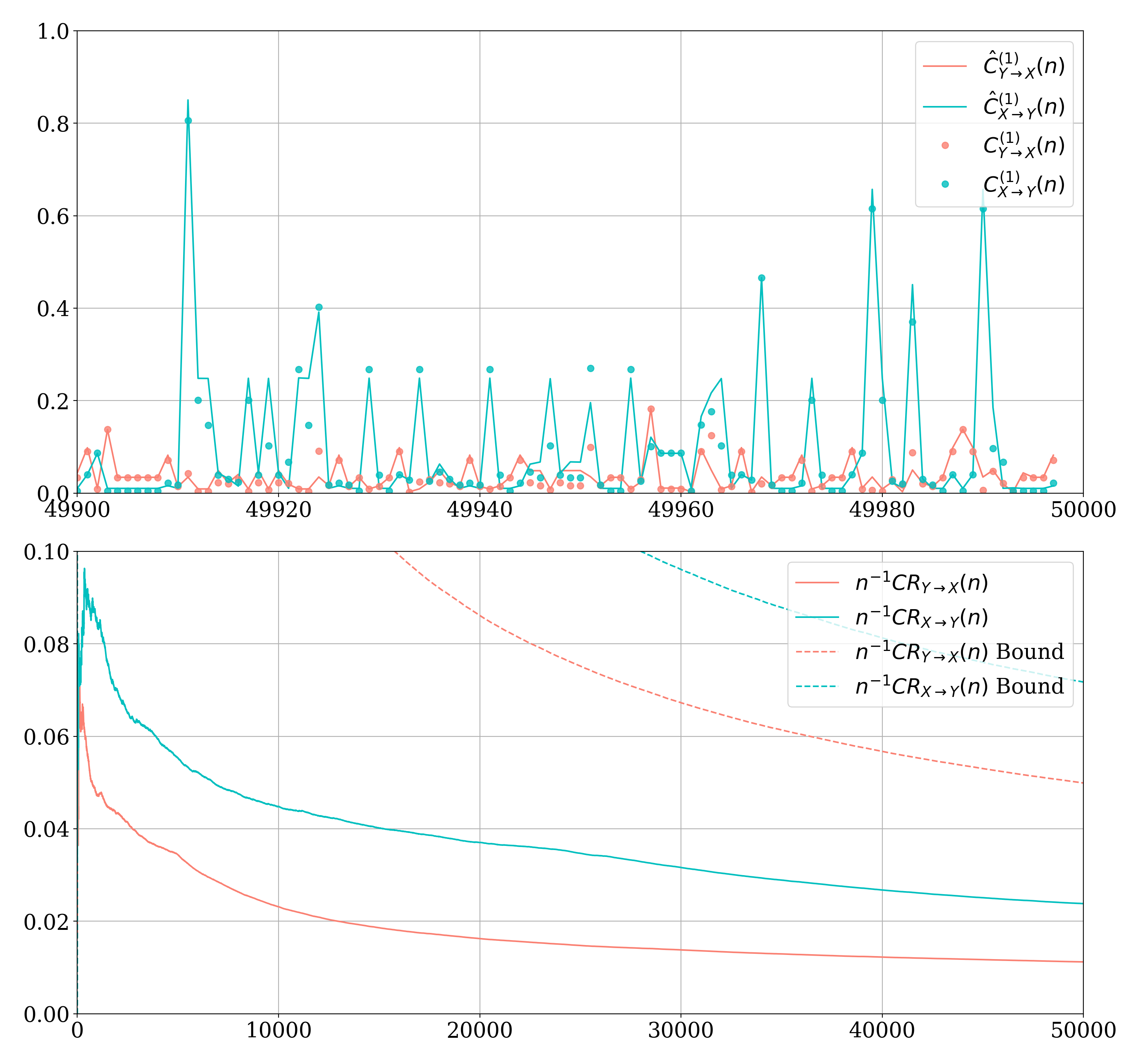}
  \caption{Top - Estimates of the partial causal measure with $k=1$ in each direction for bidirectional influences. Bottom - Normalized cumulative absolute error of estimates (solid) and normalized causality regret bounds (dashed).}
  \label{fig:bidir_p}
\end{figure}
\subsection{Stock Market Indices}\label{djhs}

\begin{figure}[ht!]
  \centering\includegraphics[keepaspectratio, width=0.7\textwidth]{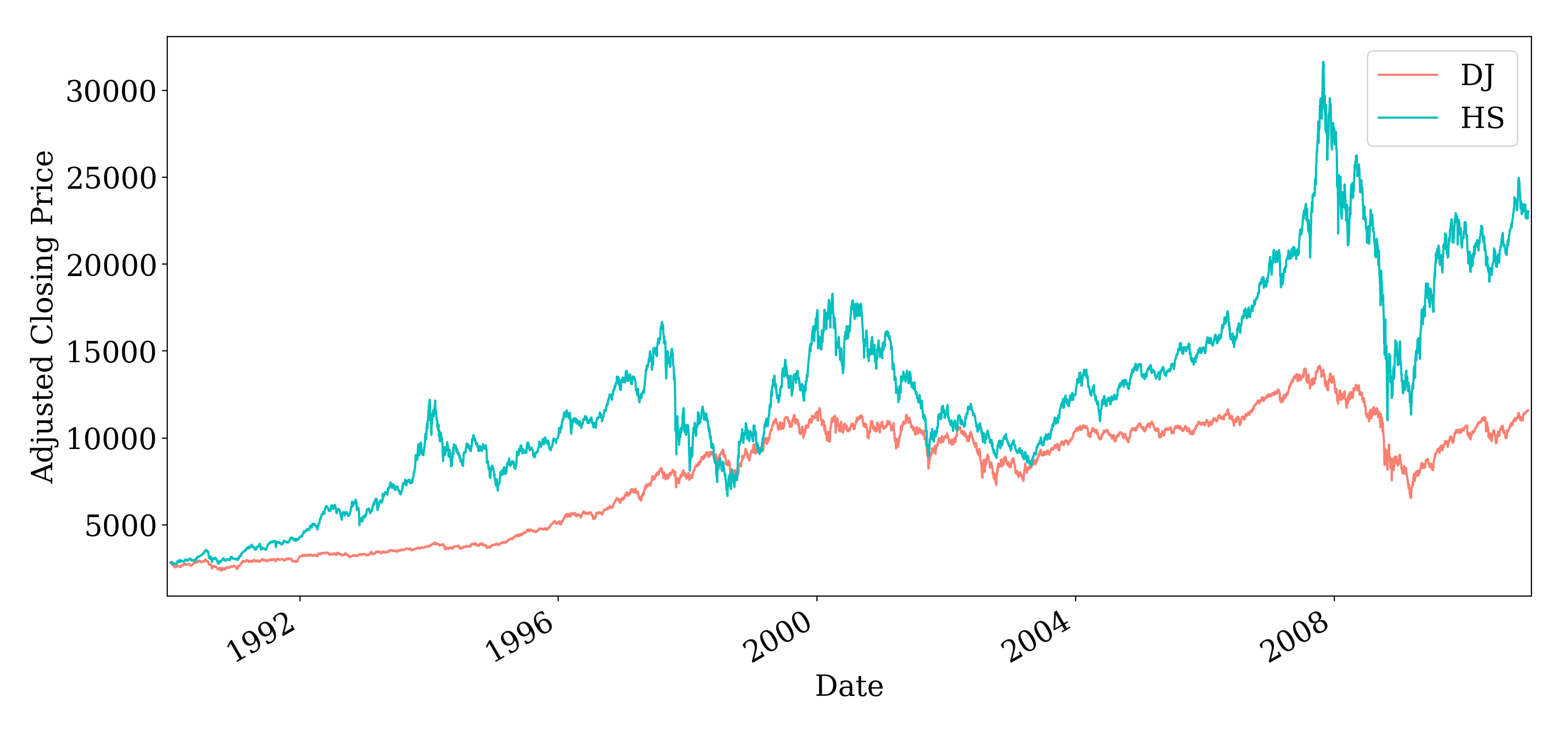}
  \caption{The Dow Jones (DJ) Industrial Average  and Hang Seng (HS) indices.}
  \label{fig:dji_hsi_t}
\end{figure}

\begin{figure*}[t!]
  \centering
  \includegraphics[keepaspectratio, width=\textwidth]{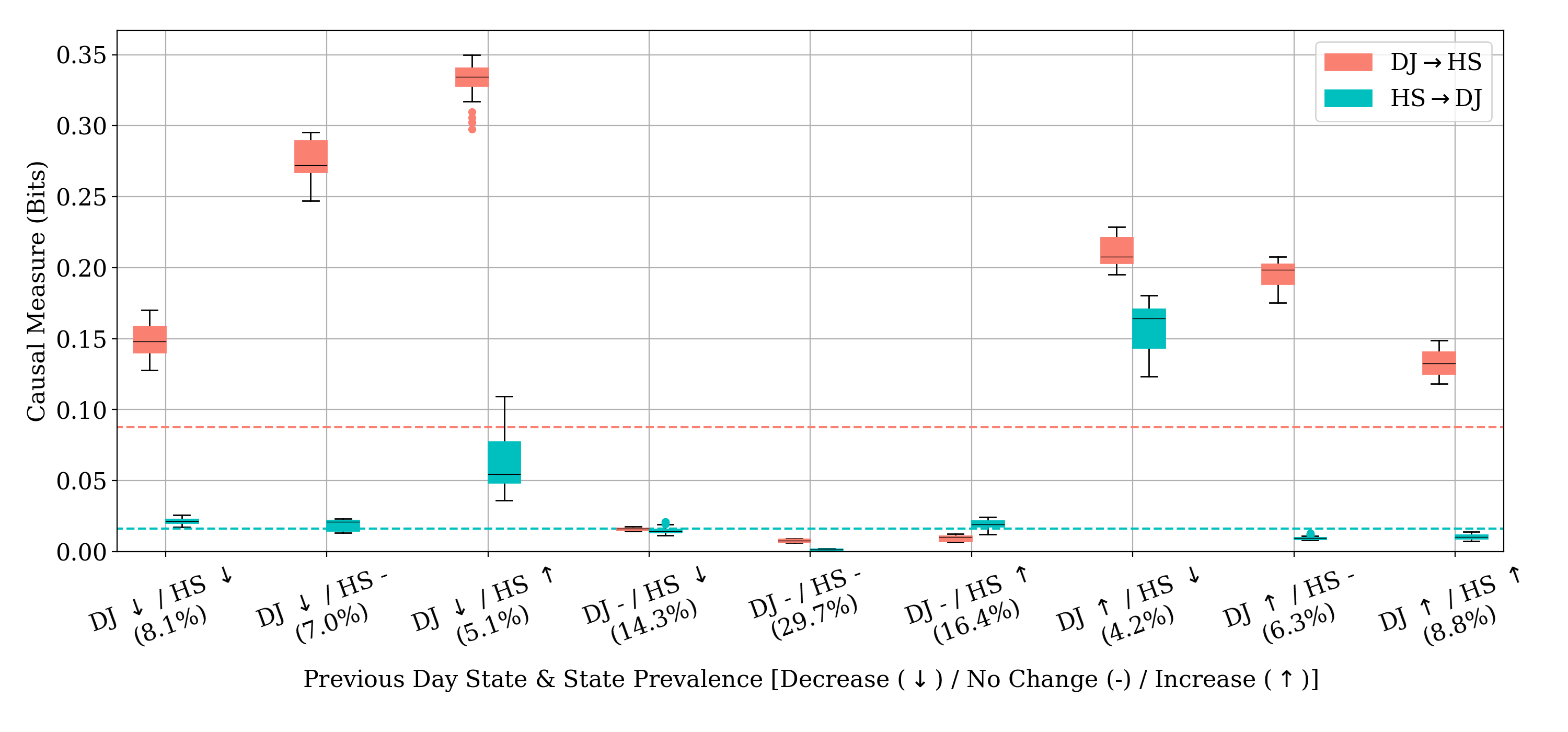}
  \caption{Causal measure between stock indices for different previous day states from 2008 to 2011. Below each of the 9 possible previous day states we include the percentage of days in which that state occurs. The dashed lines represent the DI estimate.}
  \label{fig:dji_hsi_c}
\end{figure*}

We now demonstrate the use of the sample path causal measure on historical stock market data from the Dow Jones (DJ) Industrial Average index on the New York Stock Exchange (NYSE) and the Hang Seng (HS) index on the Shanghai Stock Exchange (SSE), as in \cite{jiao2013universal}. In \cite{jiao2013universal} it was shown that the DJ index had a greater influence on the HS index than vice versa by measuring the DI between the sequences of daily changes in adjusted closing price. Here we consider the same dataset, shown in Figure \ref{fig:dji_hsi_t}.

The data was downloaded from Yahoo Finance. Given that the NYSE and SSE are closed on different holidays, missing values were interpolated on days where one was open and the other was not (weekends and shared holidays were not interpolated). We next consider the inter-day percentage change in adjusted closing price and quantize it to a ternary sequence with a value of 0 indicating a drop by more than 0.8\%, 2 indicating a rise by more than 0.8\%, and 1  representing no significant change. In computing the influence of HS on DJ, the HS data is shifted forward by one day. This is due to the fact that on each day, the NYSE closes before the SSE opens, as noted in \cite{jiao2013universal}. Thus, the HS is affected by the same day DJ, while the DJ is affected by the previous day HS.

Figure \ref{fig:dji_hsi_c} shows the estimate of the causal measure for different previous day states in each direction using depth-1 CTW predictors in addition to the estimated DI (dashed line). Values from the final 3 years of the data are shown in the box plot, with the fairly tight error bars showing that the CTW estimators have mostly converged (we would not expect complete convergence due to the non-stationarity of stock market data).

There are numerous noteworthy points in Figure \ref{fig:dji_hsi_c}. First we note that, as a result of averaging, the DI is never \emph{equal} to the causal measure in the DJ to HS direction. In particular we note that when DJ does not change, it has virtually no effect on the distribution of HS. Furthermore, most of the time DJ does not change. On the other hand, on the rare occasion that DJ went down and HS went up on the previous day (5.1\% of days), the causal measure is almost 4X the DI. Similarly, on the 4.2\% of days where DJ goes up and HS goes down, the causal measure from HS to DJ is roughly 10X the DI. When considering this type of data, the added value of Q3 over Q1 becomes very clear. If one has access to what has already happened (i.e. the previous day state), then why take an expectation over the past?

\begin{remark}\label{remark:djhs}
It is crucially important to make clear the notion of causality that is considered in this context. There is no doubt that there are confounding factors (i.e. factors that affect both DJ and HS) that would decrease the measured influence if included in the model. That having been said, it is clear that there is information contained in the DJ that provides us with an improved ability to predict the next day's HS, a finding which may certainly be of use for applications outside of classical ``causal inference''. In order to make claims of causal influence, careful attention needs to be paid to potential causes that are not considered in the model, thus requiring domain expertise. As such, in scenarios where one cannot confidently rule out the potential presence of confounding factors, the proposed measure may be more accurately viewed as a measure of increased predictability (as in \cite{kleeman2002measuring}).
\end{remark}

\section{Discussion}\label{discussion}

The concepts presented in this paper can be distilled to three primary contributions. First, we have introduced a need for measuring causal influences between random processes that depend on the sample paths of those processes. We have shown that in both simple thought experiments and real stock market data, there exist sample path dependent causal influences that may occur infrequently, and are thus not captured by average measures such as GC and DI. Second, we have proposed a measure for identifying these influences. We have shown that this measure gives results consistent with intuitions in a number of examples. Furthermore, we proved finite sample bounds on the performance of an estimator of our proposed measure. Third, we have presented a characterization of when it is possible to reliably estimate both the proposed measure and the DI. This characterization utilizes the connections between DI, GC, and causal graphical models to demonstrate that in many common cases, we should not expect to be able to get unbiased estimates of DI, which to our knowledge has been repeatedly overlooked throughout the information theory literature. To address this issue we introduce a notion of partial DI and a corresponding sample path dependent partial causal measure.

There are numerous directions for continued research in this area. Further leveraging the tools from causal graphical models can enable a better understanding of the circumstances in which we can estimate measures of causal influence reliably. Furthermore, the tools from this field are necessary to distinguish between true causal influences and measures of improved predictability. Additionally, extending the three questions proposed in the introduction to the case of general causal graphs is of great interest. In particular, how can we measure how different realizations of groups of random variables variables affect another group of random variables in a given directed graph? We believe the philosophy presented in this paper may be used to address this question.

It is important to note the present work is built upon the restrictive assumption that all processes are observed. There has been considerable recent interest in estimating causal influences and graph structures when only a subset of processes may be observed \cite{geiger2015causal,etesami2016learning,matta2018consistent}. As such, there is an opportunity to study how these results may be applied to inferring dynamic causal influences that are dependent upon \emph{realizations} of a subset of processes.

Another line of future work is further investigation of the significance of partial directed information developed in Section \ref{inf_order} and its application in quantifying information leakage for coupled systems with delayed information \cite{gorantla2012characterizing}, providing fundamental performance limitations of closed-loop systems \cite{martins2008feedback} subject to delay constraints, or in characterizing rate-performance tradeoffs \cite{tanaka2018lqg} for network control problems with non-classical information structures \cite{gorantla2011information,kulkarni2015optimizer} pertaining to information and delay constraints.

A final area for future work is the demonstration of how the causal measure can provide added value in decision making. A promising avenue lies in the use of the causal measure for aiding in \emph{time-varying model selection}. Take, for example, the stock market example in Section \ref{djhs}. It is shown in \cite{etesami2017econometric} that using DI for model selection can yield improvements in the systemic risk. A natural extension of this would be to use the sample path causal measure to create a collection of models that are dependent upon the current ``state'' of the stock market. This would enable minimizing the number of estimated parameters while ensuring that opportunities for leveraging directed influences are not overlooked.

\appendices

\section{Equivalence of Interventional and Non-Interventional Causal Measure in Section \ref{sec:examples}}\label{app:backdoor}

First, we introduce the \emph{causal model} defined by Pearl \cite[Definition 2.2.2]{pearl2009causality}, which consists of a causal structure (i.e. a DAG) and a set of functions defining a probability distribution over each node in the DAG. For the three examples in section \ref{sec:examples}, we have that the causal structure is given by Figure \ref{fig:examplegraphs}. For the first two examples, the functions are given by equations \eqref{ex_one} and \eqref{ex_two}, respectively. For the third example of horse betting, we assume that for each $i$, $X_i=f_i(X^{i-1},Y_{i-1},U_i)$ and $Y_i=V_i$, where $f_i$ is some collection of functions, $U$ is a collection of iid random variables (independent of $X$ and $Y$), and $V$ is a collection of iid random variables (independent of $X$, $Y$, and $U$). The key element of this assumption is that the winner of the $i^{\text{th}}$ race, $X_i$, is functionally dependent on the side information $Y_{i-1}$, meaning that changing the side information could change the winner. Without this technicality, the example would not constitute a causal model in the sense of \cite[Definition 2.2.2]{pearl2009causality}. Once these causal models are established, showing the equivalence between the interventional and non-interventional measures discussed in Remark \ref{remark:intervention} can easily be shown using the second rule of the so-called $do$-calculus \cite[Theorem 3]{pearl1995causal}. Specifically, showing that $p(x_i\mid x^{i-1},do(y_{i-1}))=p(x_i\mid x^{i-1},y_{i-1})$ amounts to showing that $X_i$ and $Y_{i-1}$ are d-separated by $X^{i-1}$ in an augmented DAG where the outgoing arrows from $Y_{i-1}$ have been removed. This holds trivially in all three DAGs in Figure \ref{fig:examplegraphs} because removing the outgoing arrows from $Y_{i-1}$ results in there being \emph{no} path connecting $Y_{i-1}$ to $X_i$ in the augmented DAG.

\section{Computing True Causal Measure with Hidden Markov Models}\label{app:hmm}

In order to compute the true causal measure, it is necessary to compute the true restricted distribution. As discussed in Section \ref{inf_order}, the restricted distribution is, in general, non-Markov. As such, it is desirable to have an efficient method for computing the true restricted distribution $p(y_i \mid y^{i-1})$. Here we derive update equations for recursively computing $p(y_i \mid y^{i-1})$. The proposed updating scheme is a generalization of the well known recursive method for evaluating the likelihood of a process under a standard hidden Markov model  where the likelihood is given by $p(y_i \mid x_i)$ and the one-step prediction distribution is given by $p(x_i \mid x_{i-1})$ \cite[Ch. 9]{jurafsky2014speech}.

First, assume $X$ and $Y$ are jointly first order Markov as in Section \ref{simulations} and decompose the restricted distribution as the product of ``likelihood'' and ``prior'' terms:
\begin{align*}
p(y_{i+1} \mid y^i)
&=\sum_{x_{i}} p(y_{i+1}, x_{i} \mid y^i)  \\
&=\sum_{x_{i}}
\underbrace{p(y_{i+1} \mid x_{i},y_i)}_{\text{Likelihood}}
\underbrace{p(x_{i} \mid y^i)}_{\text{Prior}}
\end{align*}

\noindent where we note that only the prior term has a long-term dependence on the past. The prior may be further decomposed into the sum of products of ``one-step prediction'' and ``posterior'' terms:
\begin{align*}
p(x_i \mid y^i)
&= \sum_{x_{i-1}} p(x_i, x_{i-1} \mid y^i) \\
&= \sum_{x_{i-1}}
\underbrace{p(x_i \mid  x_{i-1}, y_{i-1})}_{\text{One-Step Prediction}}
\underbrace{p(x_{i-1} \mid y^i)}_{\text{Posterior}}\\
\end{align*}
\noindent where now only the posterior has a long-term dependence on the past. Lastly, we can use Bayes' Rule to show that the posterior depends only on the previous likelihood evaluated at the newly observed $y_i$ and the previous prior:
\begin{align*}
p(x_{i-1} \mid y^i)
&= \frac
{p(y_i \mid x_{i-1},y^{i-1})p(x_{i-1}\mid y^{i-1})}
{\sum_{\tilde{x}_{i-1}}(y_i \mid \tilde{x}_{i-1},y^{i-1})p(\tilde{x}_{i-1}\mid y^{t-1})} \\
&= \frac
{p(y_i \mid x_{i-1},y_{i-1})p(x_{i-1}\mid y^{i-1})}
{\sum_{\tilde{x}_{i-1}}(y_i \mid \tilde{x}_{i-1},y_{i-1})p(\tilde{x}_{i-1}\mid y^{i-1})}
\end{align*}

\noindent Thus, the restricted distribution can be computed in a recursive manner. To initialize the algorithm, define $y_0=x_0=\emptyset$, $p(\emptyset\mid\cdot)=1$, and starting distributions $p(\cdot\mid\emptyset)=p(\cdot)$.
\section{Useful Lemmas}\label{app:lemmas}

We first show that the cumulative KL divergence from the best reference distribution to the predicted distribution is less than the predictor's worst-case regret.
\begin{lemma}\label{lemma:kl}
For a sequential predictor $\hat{p}_i$ with worst case regret $M(n)$, a collection observations $(x^n,y^n,z^n)$, and any distribution from the reference class $p \in \refclass_n$:
\begin{equation}
\sum_{i=1}^n \kl{p_i}{\hat{p}_i}\le M(n)
\end{equation}
\end{lemma}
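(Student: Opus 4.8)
The plan is to convert the worst-case regret guarantee, which is a statement about every individual sequence, into a bound on the cumulative relative entropy by averaging against the reference distribution itself. The key observation is that each admissible reference $p\in\refclass_n$ and the sequential predictor both induce joint laws on $\mc{X}^n$ through their conditionals, so that by the chain rule the cumulative self-information loss telescopes into a single log-likelihood: $\sum_{i=1}^n l(\hat{p}_i,x_i)=-\log\hat{p}(x^n)$ and $\sum_{i=1}^n l(p_i,x_i)=-\log p(x^n)$, where $\hat{p}$ and $p$ denote the corresponding joint distributions.

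First I would fix an arbitrary sequence $x^n\in\mc{X}^n$ and use that $p$, being in the reference class, has cumulative loss no smaller than the infimum over $\refclass_n$. This yields the pointwise estimate
\begin{align*}
\log\frac{p(x^n)}{\hat{p}(x^n)} &= \sum_{i=1}^n l(\hat{p}_i,x_i)-\sum_{i=1}^n l(p_i,x_i) \\
&\le \sum_{i=1}^n l(\hat{p}_i,x_i)-\inf_{\tilde{p}\in\refclass_n}\sum_{i=1}^n l(\tilde{p}_i,x_i) \\
&\le R_n(\refclass_n)\le M(n),
\end{align*}
where the first inequality uses $-\log p(x^n)\ge \inf_{\tilde{p}}\bigl(-\log \tilde{p}(x^n)\bigr)$, and the last two invoke the definition of worst-case regret together with the hypothesis $R_n(\refclass_n)\le M(n)$. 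The point is that this bound holds uniformly over all $x^n$.

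Next, since the estimate is uniform in $x^n$ and $p$ is a genuine probability distribution, I would take the expectation of both sides with respect to $X^n\sim p$. The left-hand side becomes the joint relative entropy, giving $\kl{p}{\hat{p}}=E_{p}\!\left[\log\frac{p(X^n)}{\hat{p}(X^n)}\right]\le M(n)$. The chain rule for relative entropy then decomposes this joint divergence into the stated cumulative sum of per-step conditional divergences, $\kl{p}{\hat{p}}=\sum_{i=1}^n \kl{p_i}{\hat{p}_i}$, completing the argument.

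I expect the only genuine subtlety to be bookkeeping rather than substance. Because the predictor $\hat{p}_i$ is adaptive (a function of the observed history), the per-step term $\kl{p_i}{\hat{p}_i}$ carries an implicit expectation over $X^{i-1}\sim p$ when assembled through the chain rule, and one must invoke that $p$ and $\hat{p}$ are mutually absolutely continuous so that each divergence is finite and the expectation step is legitimate. Non-negativity of each conditional divergence further guarantees that no cancellation occurs in the telescoped sum, so the entire content of the proof is the passage from the pointwise log-loss bound to its expectation under $p$.
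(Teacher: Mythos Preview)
Your approach is elegant but proves a different statement than the one required. The lemma, as stated and as it is subsequently invoked in Lemma~\ref{lemma:g_func} and Theorem~\ref{thm:main_result}, fixes a particular observed sequence $(x^n,y^n,z^n)$; the quantities $p_i$ and $\hat{p}_i$ are the reference conditional and the predictor evaluated at \emph{that} observed history, and the claim is the pointwise bound $\sum_{i=1}^n \kl{p_i}{\hat{p}_i}\le M(n)$ along that sample path. Your chain-rule step, however, yields
\[
\kl{p}{\hat{p}}=\sum_{i=1}^n E_{X^{i-1}\sim p}\!\left[\kl{p_{X_i\mid X^{i-1}}}{\hat{p}_{X_i\mid X^{i-1}}}\right],
\]
which averages the per-step divergences over histories drawn from the \emph{reference} law $p$, not over the single observed history. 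You flag this as ``bookkeeping rather than substance,'' but it is the whole point: an expectation bound under $p$ says nothing about the value at the realized $x^{n-1}$, which need not be typical under $p$ at all. So the argument establishes $\sum_i E_p[\kl{p_i}{\hat{p}_i}]\le M(n)$, not the sample-path inequality that the downstream results require.

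The paper circumvents this by never forming the joint divergence. It bounds each per-step divergence at the observed history directly,
\[
\kl{p_i}{\hat{p}_i}=\sum_{x}p_i(x)\log\frac{p_i(x)}{\hat{p}_i(x)}\le \sup_{x\in\mc{X}}\log\frac{p_i(x)}{\hat{p}_i(x)}=\sup_{x\in\mc{X}} r(\hat{p}_i,p_i,x),
\]
and then sums these per-step suprema and dominates by the worst-case regret. The crucial difference is that this replaces the expectation under $p_i$ by a supremum over $x$ \emph{before} summing, so no averaging over histories ever enters. To repair your argument you would have to pass from the averaged chain-rule decomposition back to the realized history, and there is no mechanism for doing so; the per-step supremum bound is what makes the sample-path statement go through.
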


\begin{proof}
\begin{align*}
\sum_{i=1}^n \kl{p_i}{\hat{p}_i}
&= \sum_{i=1}^n \sum_{x\in\mc{X}} p_i(x)
    \log \frac{p_i(x)}{\hat{p}_i(x)} \\
&\le \sum_{i=1}^n
    \left[ \sup_{x\in\mc{X}}
    \log \frac{p_i(x)}{\hat{p}_i(x)} \right]
    \sum_{x\in\mc{X}} p_i(x) \\
&= \sum_{i=1}^n \sup_{x\in\mc{X}} r(\hat{p}_i,p_i,x) \\
&\le \sup_{x^n\in\mc{X}^n} \sum_{i=1}^n r(\hat{p}_i,p_i,x_i)\\
&\le \sup_{x^n\in\mc{X}^n} \sup_{p\in\refclass_n} \sum_{i=1}^n r(\hat{p}_i,p_i,x_i)\\
&\le M(n)
\end{align*} \end{proof}
Next, we bound the cumulative difference in expectation of a bounded function between the best reference distribution and sequential predictor.

\begin{lemma} \label{lemma:g_func}
For a sequential predictor $\hat{p}_i$ with worst case regret $M(n)\ge 1$, a collection observations $(x^n,y^n,z^n)$, cumulative loss minimizing distribution $p^*_i$, and a collection of functions $g_i:\mc{X}\rightarrow \mathbb{R}$ for $i=1,\dots,n$:
\begin{equation}
\sum_{i=1}^n \left| E_{p^*_i}[g_i(X)] -
    E_{\hat{p}_i}[g_i(X)] \right| \le
    \frac{\left|\left|\vec{c}_n\right|\right|_2}{\sqrt{2}}\sqrt{M(n)}
\end{equation}
\noindent where $\vec{c}_n = [c_1,\dots,c_n]$ is a vector with elements:
\begin{equation}
c_i = \sum_{x\in \mc{X}} \left| g_i(x) \right|
\end{equation}
\end{lemma}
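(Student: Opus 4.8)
The plan is to establish a per-round bound of the form $\left|E_{p^*_i}[g_i(X)] - E_{\hat{p}_i}[g_i(X)]\right| \le c_i\sqrt{\tfrac{1}{2}\kl{p^*_i}{\hat{p}_i}}$, and then sum over $i$ using Cauchy--Schwarz together with Lemma \ref{lemma:kl}. The three main tools are a Hölder-type pairing, Pinsker's inequality, and the cumulative KL bound already proved.

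First I would write the difference of expectations as a single sum against the signed difference of the two distributions and apply the triangle inequality, then pair the $\ell_1$ norm of $g_i$ with the $\ell_\infty$ norm of the probability gap:
\[
\left|E_{p^*_i}[g_i(X)] - E_{\hat{p}_i}[g_i(X)]\right|
= \left|\sum_{x\in\mc{X}} g_i(x)\bigl(p^*_i(x)-\hat{p}_i(x)\bigr)\right|
\le c_i \max_{x\in\mc{X}}\left|p^*_i(x)-\hat{p}_i(x)\right|,
\]
where $c_i = \sum_{x}|g_i(x)|$. The key observation is that the per-coordinate gap is controlled by total variation: since each singleton $\{x\}$ is an event, $\max_x|p^*_i(x)-\hat{p}_i(x)| \le \mathrm{TV}(p^*_i,\hat{p}_i)$. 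Invoking Pinsker's inequality then gives $\mathrm{TV}(p^*_i,\hat{p}_i) \le \sqrt{\tfrac{1}{2}\kl{p^*_i}{\hat{p}_i}}$, which yields the desired per-round bound.

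Next I would sum over $i$, factor out $1/\sqrt{2}$, and apply Cauchy--Schwarz to the sequences $(c_i)$ and $\bigl(\sqrt{\kl{p^*_i}{\hat{p}_i}}\bigr)$:
\[
\sum_{i=1}^n c_i\sqrt{\tfrac{1}{2}\kl{p^*_i}{\hat{p}_i}}
= \frac{1}{\sqrt{2}}\sum_{i=1}^n c_i\sqrt{\kl{p^*_i}{\hat{p}_i}}
\le \frac{1}{\sqrt{2}}\,\|\vec{c}_n\|_2\,\sqrt{\sum_{i=1}^n \kl{p^*_i}{\hat{p}_i}}.
\]
Finally, since $p^*\in\refclass_n$, Lemma \ref{lemma:kl} gives $\sum_{i=1}^n \kl{p^*_i}{\hat{p}_i}\le M(n)$, and substituting this in produces exactly $\frac{\|\vec{c}_n\|_2}{\sqrt{2}}\sqrt{M(n)}$.

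The only real subtlety, and the step I would be most careful about, is tracking the constant. One must bound the per-coordinate gap by the total variation distance (not by $\|p^*_i-\hat{p}_i\|_1$, which would lose a factor of two) so that Pinsker delivers precisely the $1/\sqrt{2}$ prefactor; because the logarithms here are base two, Pinsker's constant is in fact strictly smaller, so the stated bound holds as a valid upper bound. Everything else is a routine chain of the triangle inequality, the $\ell_1$--$\ell_\infty$ Hölder pairing, Cauchy--Schwarz, and the already-established Lemma \ref{lemma:kl}.
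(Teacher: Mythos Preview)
Your proposal is correct and follows the same overall skeleton as the paper: a per-round bound via Pinsker, then Cauchy--Schwarz across rounds, then Lemma~\ref{lemma:kl}. The one genuine difference is in how the per-round bound is obtained. The paper goes through the $\ell_1$ difference, bounding $\sum_x|g_i(x)||p^*_i(x)-\hat p_i(x)|$ by $\bigl(\sum_x|g_i(x)|\bigr)\bigl(\sum_x|p^*_i(x)-\hat p_i(x)|\bigr)$ and then applying Pinsker to $\|p^*_i-\hat p_i\|_1$. You instead use the $\ell_1$--$\ell_\infty$ pairing to get $c_i\max_x|p^*_i(x)-\hat p_i(x)|$, bound the max by the total variation, and apply Pinsker to $\mathrm{TV}$. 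Both routes land at $c_i\sqrt{\tfrac{1}{2}\kl{p^*_i}{\hat p_i}}$, after which the arguments are identical. Your handling of the constant is in fact cleaner: since Pinsker controls $\mathrm{TV}=\tfrac{1}{2}\|p-q\|_1$ rather than $\|p-q\|_1$ itself, routing through $\max_x|p-q|\le\mathrm{TV}$ makes the $1/\sqrt{2}$ prefactor transparent, whereas the paper's step from the full $\ell_1$ norm to $\sqrt{\tfrac{1}{2}D}$ is loose by a factor of two as written.
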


\begin{proof}
\begin{align}
\sum_{i=1}^n \left| E_{p^*_i} [g_i(X)] -
    E_{\hat{p}_i}[g_i(X)] \right| &= \sum_{i=1}^n \left| \sum_{x\in\mc{X}}
    \left[ p^*_i(x) - \hat{p}_i(x) \right] g_i(x) \right| \nonumber \\
&\le \sum_{i=1}^n \sum_{x\in\mc{X}} \left| p^*_i(x) -
    \hat{p}_i(x) \right| \left| g_i(x) \right|
    \label{ref_triangle_eq}\\
&\le \sum_{i=1}^n \left[\sum_{x\in\mc{X}} \left| p^*_i(x) -
    \hat{p}_i(x) \right|\right]\left[ \sum_{x\in\mc{X}} \left| g_i(x) \right|\right]
    \label{ref_positivity}\\
&\le \sum_{i=1}^n
    \sqrt{\frac{1}{2}\kl{p^*_i}{\hat{p}_i}} \sum_{x\in\mc{X}} \left| g_i(x) \right|
    \label{ref_pinsker}\\
&= \frac{1}{\sqrt{2}} \sum_{i=1}^n
    c_i \sqrt{\kl{p^*_i}{\hat{p}_i}} \nonumber
\end{align}

\noindent where \eqref{ref_triangle_eq} uses the triangle inequality, \eqref{ref_positivity} follows from both terms of the sum being positive, and \eqref{ref_pinsker} uses Pinsker's inequality. Focusing on the sum, we define a vector $\vec{v}_n=[v_1,\dots,v_n]$ such that $v_i = \sqrt{\kl{p^*_i}{\hat{p}_i}}$ for $i=1,\dots,n$:
\begin{align}
\sum_{i=1}^n c_i\sqrt{\kl{p^*_i}{\hat{p}_i}}
&= \left|\vec{c}_n\cdot\vec{v}_n\right| \label{posabsval} \\
&\le \left|\left|\vec{c}\right|\right|_2 \left|\left|\vec{v}\right|\right|_2 \label{cauchy}\\
&= \left|\left|\vec{c}\right|\right|_2\left( \sum_{i=1}^n \kl{p^*_i}{\hat{p}_i}
    \right)^{\frac{1}{2}} \nonumber \\
&\le \left|\left|\vec{c}\right|\right|_2\sqrt{M(n)} \label{ref_kl_lemma}
\end{align}

\noindent where \eqref{posabsval} follows from the fact that $c_i\ge0$ and $v_i\ge0$ for all $i$, \eqref{cauchy} uses the Cauchy--Schwarz inequality and \eqref{ref_kl_lemma} uses Lemma \ref{lemma:kl} and the assumption that $M(n) \ge 1$. \end{proof}

\section{Proof of Propositions}\label{app:props}
\subsection{Proof of Proposition \ref{prop:te}}\label{app:te}
Using the definition of the causal measure, we get that the left hand side of \eqref{teprop} is:
\begin{align*}
\sum_{\hc} p(\hc)
\kl{\fxc{i}}{\fxr{i}} &=\sum_{\hc}
p(\hc) \sum_{x_i} \fxc{i}(x_i)\log
\frac{\fxc{i}(x_i)}{\fxr{i}(x_i)} \\
&=\sum_{\hc}
p(\hc) \sum_{x_i} p(x_i\mid \hc)\log
\frac{p(x_i\mid\hc)}{p(x_i\mid\hr)} \\
&=\sum_{\hc,x_i}
p(\hc,x_i)\log
\frac{p(x_i\mid\hc)}{p(x_i\mid\hr)} \\
&=E_{X^i,Y^{i-1},Z^{i-1}}\left[\log
\frac{p(X_i\mid X^{i-1},Y^{i-1},Z^{i-1})}{p(X_i\mid X^{i-1},Z^{i-1})}
\right] \\
&= \diyxzm
\end{align*}
\qed

\subsection{Proof of Proposition \ref{prop:di_bounds}}\label{app:di_bounds}
First we note that the lower bound holds simply by applying conditioning reduces entropy to the first term of the PDI rate as defined in \eqref{partialdirateent}. Formally, we have:
\begin{align*}
\bar{I}_P^{(k)}(Y\rightarrow X) &= \bar{H}^{(k-1)}(X\mid\mid Y) - \bar{H}^{(1)}(X\mid\mid Y) \\
&=\lim_{n\rightarrow \infty}\frac{1}{n} \left[ H(X^n\mid\mid Y^{n-k-1}) - H(X^n\mid\mid Y^{n-1})\right] \\
&\le\lim_{n\rightarrow \infty}\frac{1}{n} \left[ H(X^n) - H(X^n\mid\mid Y^{n-1}) \right] \\
&=\bar{H}(X) - \bar{H}^{(1)}(X\mid\mid Y)=\dirateyx
\end{align*}
Next, for the TDI upper bound, we note that for $k_2=d$, the bound was proven in \cite{kontoyiannis2016estimating}. For completeness we extend the proof for $k_2\ge d$:
\begin{align*}
\bar{I}_T^{(k)}(Y\rightarrow X) &= \lim_{n\rightarrow \infty} \frac{1}{n} \sum_{i=1}^n
I(X_i;Y^{i-1}_{i-k_2}\mid X_{i-k_2}^{i-1}) \\
&= \lim_{n\rightarrow \infty} \frac{1}{n} \sum_{i=1}^n
H(X_i \mid X_{i-k_2}^{i-1}) - H(X_i \mid X_{i-k_2}^{i-1},Y^{i-1}_{i-k_2})\\
&= \lim_{n\rightarrow \infty} \frac{1}{n} \sum_{i=1}^n
H(X_i \mid X_{i-k_2}^{i-1}) - H(X_i \mid X_{i-d}^{i-1},Y^{i-1}_{i-d}) \\
&\ge \lim_{n\rightarrow \infty} \frac{1}{n} \sum_{i=1}^n
H(X_i \mid X^{i-1}) - H(X_i \mid X_{i-d}^{i-1},Y^{i-1}_{i-d})\\
&=\bar{H}(X) - \bar{H}^{(1)}(X\mid\mid Y)=\dirateyx
\end{align*}
\noindent as was to be shown. \qed

\subsection{Proof of Proposition \ref{prop:sideinfo_regret}}\label{app:sideinfo_regret}
As in the statement of the proposition, let $L$ be the number of leaves in the CTW and $N$ be the total number of nodes in the tree. Define $p_e(x^n)$ to be the Dirichlet estimator introduced in \cite{krichevsky1981performance}, otherwise known as the KT-estimator. Then it is known that the worst case regret is given by \cite{tjalkens1993sequential}:
\begin{equation} \label{ktbound}
\sup_{x_n} \log\frac{p(x^n)}{p_e(x^n)} \le
\frac{\absval{\mc{X}}-1}{2}\log n
+\absval{\mc{X}}-1
\end{equation}

We next define the KT-tree estimator with side information $p_c(x^n\mid\mid y^n)$ as the estimator where, for each possible ``context'' $(x_{i-d}^{i-1},y_{i-d}^{i-1})$, a separate instance of a KT-estimator is maintained. Letting $\mc{L}\triangleq\{(x_{i-d}^{i-1},y_{i-d}^{i-1}): (x_{i-d}^{i-1},y_{i-d}^{i-1})\in \mc{X}^d\times \mc{Y}^d\}$ be the set of contexts (i.e. leaf nodes), we have that $\absval{\mc{L}}=L$. Defining $p_e^{(l)}(x^n\mid\mid y^n)\triangleq p_e^{(l)}(x^{(l)})$ to be the KT-estimator that assigns probabilities to $x^{(l)}\triangleq\{x_i:(x_{i-d}^{i-1},y_{i-d}^{i-1})=l\}$ for $l\in\mc{L}$ with $|x_i^{(l)}|\triangleq n_l$, we can derive the worst case regret of the KT-tree estimator with side information as follows:
\begin{align}
\sup_{x^n,y^n} \log\frac{p(x^n)}{p_c(x^n)}
&=
\sup_{x^n,y^n} \log\frac{p(x^n\mid\mid y^n)}{\prod_{l\in\mc{L}}p_e^{(l)}(x^n\mid\mid y^n)} \nonumber \\
&=
\sup_{x^n,y^n} \log\prod_{l\in\mc{L}}\frac{p(x^{(l)})}{p_e^{(l)}(x^{(l)})} \nonumber \\
&=
\sup_{x^n,y^n} \sum_{l\in\mc{L}} \log \frac{p(x^{(l)})}{p_e^{(l)}(x^{(l)})} \nonumber \\
&\le \sum_{{l\in\mc{L}}} \left( \frac{\absval{\mc{X}-1}}{2}\log n_l + \absval{\mc{X}}-1  \right) \label{usektbound} \\
&= \frac{L(\absval{\mc{X}}-1)}{2}\sum_{{l\in\mc{L}}} \frac{1}{L}\log n_l + L(\absval{\mc{X}}-1) \nonumber \\
&\le \frac{L(\absval{\mc{X}}-1)}{2}\log \sum_{{l\in\mc{L}}} \frac{n_l}{L} + L(\absval{\mc{X}}-1) \label{jensenineq} \\
&= \frac{L(\absval{\mc{X}}-1)}{2}\log \frac{n}{L} + L(\absval{\mc{X}}-1) \label{sumnl}
\end{align}

\noindent where \eqref{usektbound} follows from the bound in \eqref{ktbound}, \eqref{jensenineq} follows from Jensen's inequality, and \eqref{sumnl} follows from the fact that $\sum_l n_l = n$. We now define the set of all nodes to be $\mc{S}\triangleq\{(x_{i-k}^{i-1},y_{i-k}^{i-1}): (x_{i-k}^{i-1},y_{i-k}^{i-1})\in \mc{X}^k\times \mc{Y}^k,k=1,\dots,d\}$, with $|\mc{S}|=S$. Then, we can define a context tree by letting defining a probability $p_w^{(s)}(x^n\mid\mid y^n)$ for each node $s\in\mc{S}$ as follows:
\begin{equation}
p_w^{(s)}(x^n\mid\mid y^n) =
\begin{cases}
\frac{1}{2}p_e^{(s)}(x^n\mid\mid y^n)+\frac{1}{2}\prod_{s' \in \mc{X}\times\mc{Y}} p_w^{(s's)}(x^n\mid\mid y^n)
\ &s\notin \mc{L} \\
p_e^{(s)}(x^n\mid\mid y^n)
\ &s\in \mc{L}
\end{cases}
\end{equation}

\noindent where $s's = (x_{i-k-1}^{i-1},y_{i-k-1}^{i-1}) \in \mc{X}^{k+1}\times \mc{Y}^{k+1}$ represents a child node of $s=(x_{i-k}^{i-1},y_{i-k}^{i-1})$ with $s'=(x_{i-k-1},y_{i-k-1})$. Letting $\lambda$ be the root node of the tree (i.e. $s\lambda=s$), the CTW probability assignment is given by $p_w(x^n\mid\mid y^n)\triangleq f_w^{(\lambda)}(x^n\mid\mid y^n)$. This probability assignment may be recursively lower-bounded as:
\begin{align}
p_w(x^n\mid\mid y^n)
&= \frac{1}{2}p_e^{(\lambda)}(x^n\mid\mid y^n) + \frac{1}{2}\prod_{s \in \mc{X}\times\mc{Y}}p_w^{(s)}(x^n\mid\mid y^n) \nonumber \\
&\ge \frac{1}{2}\prod_{s \in \mc{X}\times\mc{Y}}p_w^{(s)}(x^n\mid\mid y^n) \nonumber \\
&\ge \frac{1}{2}\prod_{s \in \mc{X}\times\mc{Y}}\frac{1}{2}\prod_{s' \in \mc{X}\times\mc{Y}}p_w^{(s's)}(x^n\mid\mid y^n) \nonumber \\
&\ge \dots \nonumber \\
&\ge \frac{1}{2^S}\prod_{l \in \mc{L}}p_w^{(l)}(x^n\mid\mid y^n) \nonumber \\
&= \frac{1}{2^S}\prod_{l \in \mc{L}}p_e^{(l)}(x^n\mid\mid y^n) \nonumber \\
&= \frac{1}{2^S}p_c^{(l)}(x^n\mid\mid y^n) \nonumber.
\end{align}

\noindent Finally, we can consider the log-likelihood ratio of true probability and the CTW probability  in order to obtain a bound on the worst case regret of the CTW:
\begin{align*}
\sup_{x^n,y^n} \log \frac{p(x^n\mid\mid y^n)}{p_w(x^n\mid\mid y^n)}
&\le S + \log \frac{p(x^n\mid\mid y^n)}{p_c(x^n\mid\mid y^n)} \\
&\le S + \frac{L(\absval{\mc{X}}-1)}{2}\log \frac{n}{L} + L(\absval{\mc{X}}-1)
\end{align*}

\noindent as was to be shown. \qed
\section{Proof of Theorems}\label{app:thms}

\subsection{Proof of Theorem \ref{thm:main_result}}\label{app:main_result}
We begin by defining the functions:
\begin{equation*}
\hat{g}_i(X) \triangleq \log \frac{\estfxc{i}(X)}{\estfxr{i}(X)} \ \ \ \ \
g^*_i(X) \triangleq \log \frac{\optfxc{i}(X)}{\optfxr{i}(X)}.
\end{equation*}

\noindent Using the definition of the causal measure and KL-divergence:
\begin{align}
\sum_{i=1}^n \left| \estcyx(i) - \optcyx(i) \right|
&- \left|
E_{\optfxc{i}}\left[ \hat{g}_i(X)\right] -
E_{\estfxc{i}}\left[ \hat{g}_i(X)\right]
\right| \label{beginning} \\
&= \sum_{i=1}^n
\left|
E_{\optfxc{i}}\left[ g^*_i(X)\right] -
E_{\estfxc{i}}\left[ \hat{g}_i(X)\right]
\right| - \left|
E_{\optfxc{i}}\left[ \hat{g}_i(X)\right] -
E_{\estfxc{i}}\left[ \hat{g}_i(X)\right]
\right| \nonumber \\
&\le \sum_{i=1}^n
\bigg| \left|
E_{\optfxc{i}}\left[ g^*_i(X)\right] -
E_{\estfxc{i}}\left[ \hat{g}_i(X)\right]
\right| - \left|
E_{\optfxc{i}}\left[ \hat{g}_i(X)\right] -
E_{\estfxc{i}}\left[ \hat{g}_i(X)\right]
\right| \bigg| \label{absval} \\
&\le \sum_{i=1}^n
\bigg|
E_{\optfxc{i}}\left[ g^*_i(X)\right] -
E_{\estfxc{i}}\left[ \hat{g}_i(X)\right]-
E_{\optfxc{i}}\left[ \hat{g}_i(X)\right] +
E_{\estfxc{i}}\left[ \hat{g}_i(X)\right] \bigg| \label{rev_tri} \\
&= \sum_{i=1}^n \left|
E_{\optfxc{i}}\left[ g^*_i(X) - \hat{g}_i(X)\right] \right| \nonumber \\
&= \sum_{i=1}^n \left|
E_{\optfxc{i}}\left[
\log \frac{\optfxc{i}(X)}{\estfxc{i}(X)}
- \log \frac{\optfxr{i}(X)}{\estfxr{i}(X)}
\right] \right| \nonumber \\
&\le \sum_{i=1}^n \left|
\kl{\optfxc{i}}{\estfxc{i}} \right|+
\left|
E_{\optfxc{i}}\left[
\log \frac{\optfxr{i}(X)}{\estfxr{i}(X)}
\right]
\right| \label{tri} \\
&\le M^{(c)}(n) + M^{(r)}(n) \label{regretlemma}
\end{align}

\noindent where \eqref{absval} follows from the properties of absolute value, \eqref{rev_tri} follows from the reverse triangle inequality, \eqref{tri} follows from the triangle inequality, and \eqref{regretlemma} follows from non-negativity of the KL-divergence, Lemma \ref{lemma:kl}, and Assumption \ref{assumption:referencekl}. Moving the second term of \eqref{beginning} to the other side of the inequality yields:
\begin{align}
\sum_{i=1}^n \left| \estcyx(i) - \optcyx(i) \right| & \le M^{(c)}(n) + M^{(r)}(n) +
\sum_{i=1}^n \left|
E_{\optfxc{i}}\left[ \hat{g}_i(X)\right] -
E_{\estfxc{i}}\left[ \hat{g}_i(X)\right]
\right| \nonumber \\
& \le M^{(c)}(n) + M^{(r)}(n) +
\frac{\left|\left|\vec{c}_n\right|\right|_2}{\sqrt{2}}\sqrt{M^{(c)}(n)}
\label{final_eq}
\end{align}

\noindent where \eqref{final_eq} follows from Lemma \ref{lemma:g_func}. This concludes the proof. \qed
\subsection{Proof of Theorem \ref{thm:dsep}}\label{app:dsepproof}
The first statement of the theorem follows trivially from the removal of $Y^{i-1}_{i-d}$ from $p(X_i \mid X^{i-1}_{i-d},Y^{i-1}_{i-d},Z^{i-1}_{i-d})$. Moving on, we will show that if $I(Y_j ; Y_k \mid X^i,Z^i)=0$ for all $j< k \le i$, $X$ is conditionally Markov of order at most $2d$ given $Z$. Note that:
\begin{align}
p(X_i \mid X^{i-1},Z^{i-1})
&=\sum_{y_{i-d}^{i-1}}
p(X_i \mid X^{i-1},y_{i-d}^{i-1},Z^{i-1})
\prod_{j=i-d}^{i-1}p(y_j \mid X^{i-1},Z^{i-1}) \label{indep} \\
&=\sum_{y_{i-d}^{i-1}}
p(X_i \mid X^{i-1}_{i-d},y_{i-d}^{i-1},Z^{i-1}_{i-d})
\prod_{j=i-d}^{i-1}p(y_j \mid X_{j-d}^{i-1},Z^{i-1}_{j-d}) \label{jmarkov} \\
&=\sum_{y_{i-d}^{i-1}}
p(X_i \mid X^{i-1}_{i-2d},y_{i-d}^{i-1},Z^{i-1}_{i-2d})
\prod_{j=i-d}^{i-1}p(y_j \mid X_{i-2d}^{i-1},Z^{i-1}_{i-2d}) \label{addish_condish}\\
&= p(X_i \mid X_{i-2d}^{i-1},Z^{i-1}_{i-2d}) \nonumber
\end{align}

\noindent where \eqref{indep} follows from the chain rule and that $y_{i-d}^{i-1}$ are conditionally independent given $(X^{i-1},Z^{i-1})$, \eqref{jmarkov} follows from the joint Markovicity of $X$ and $Y$ and the conditional independence of $y_{i-d}^{i-1}$, and \eqref{addish_condish} follows from the conditional independence of the past and the future given the present for Markov processes.

Next we will show that if there is some $j < k \le i$ such that $I(Y_j ; Y_k \mid X^i,Z^i)>0$, then there is no positive integer $l$ such that $(X_{i-l}^{i-1},Z_{i-l}^{i-1})$ d-separates $(X^{i-l-1},Z^{i-l-1})$ from $X_i$. To do this, we first note that $(X^i,Z^i)$ does not d-separate $Y_j$ and $Y_k$, because if it did, they would be conditionally independent. As such, when performing the d-separation algorithm given by Algorithm \ref{alg:dsep}, $Y_j$ and $Y_k$ will be connected by an undirected edge after completing step 4. Furthermore, if we let $\tau_1 = k-j$, then by the joint stationarity of $(X,Y,Z)$, \emph{every} $Y_i$ will be connected to $Y_{i-\tau_1}$ at the end of step 4. Furthermore, we know that $I(Y^n\ra X^n\mid Z^n)>0$ implies that for some $q\le m$, there is a directed edge from $Y_q$ to $X_m$. Letting $\tau_2 = m-q$, we know from the joint stationarity of $(X,Y,Z)$ that for every $X_i$, there is an incoming directed edge from $Y_{i-\tau_2}$. As such, at the end of step 4, every $X_i$ will be part of an undirected path connecting $Y_{i-\tau_2}$, $Y_{i-\tau_2-\tau_1}$, $Y_{i-\tau_2-2\tau_1}$, $\dots$. Thus, for any $l\ge 1$ this path can be followed $r$ steps such that $r\tau_1>d$. Then we know that $Y_{i-\tau_2-r\tau_1}$ is connected via an undirected edge to $X_{i-\tau_2-r\tau_1+\tau_2}=X_{i-r\tau_1}$. Recalling that in step 3 of the d-separation algorithm, $(X_{i-l}^{i-1},Z_{i-l}^{i-1})$ have been removed from the graph, we note that since $i-r\tau_1<i-l$, $X_{i-r\tau_1}$ is in the graph. Thus, there is an undirected path connecting $X_{r\tau_1} \in X^{i-l-1}$ and $X_i$, which implies that $(X_{i-l}^{i-1},Z_{i-l}^{i-1})$ does not d-separate $(X^{i-l-1},Z^{i-l-1})$ and $X_i$ for any $l$.\qed

\subsection{Proof of Theorem \ref{thm:necessary}}\label{app:zeromeasproof}

We will show that the statement holds for a fixed $l$, noting that a countably infinite union of measure zero sets has measure zero. First note that, if $X$ is conditionally $l$-Markov given $Z$, then for any $x_{i-l-1}^{i-1},x'_{i-l-1}\in\calx$ and $z_{i-l-1}^{i-1},z'_{i-l-1}\in\calz$ the following equality must hold:
\begin{equation}\label{constraint}
    p(x_i \mid x_{i-l-1}^{i-1},z_{i-l-1}^{i-1})=p(x_i \mid \tilde{x}_{i-l-1}^{i-1},\tilde{z}_{i-l-1}^{i-1})
\end{equation}
\noindent where we define $\tilde{x}_{i-l-d}^{i-1}\triangleq \{x_{i-l}^{i-1},x'_{i-l-1}\}$ and $\tilde{z}_{i-l-1}^{i-1}\triangleq \{z_{i-l}^{i-1},z'_{i-l-1}\}$. Define $\theta^{S_i}_{X,Y,Z}\triangleq p(S_i\mid X_{i-d}^{i-1},Y_{i-d}^{i-1},Z_{i-d}^{i-1})$ for $S\in\{X,Y,Z\}$ and $\theta\triangleq\{\theta^a_b:a\in\calx\cup\caly\cup\calz,b\in\calx^d\times\caly^d\times\calz^d\}$. We will demonstrate that the equation given by \eqref{constraint} amounts to solving a polynomial function of the parameters $\theta$. It is shown in \cite{okamoto1973distinctness} that the set of solutions to a non-trivial polynomial (i.e. one that is not solved by all of $\R^N$) will have Lebesgue measure zero with respect to $\R^N$. Focusing on the left side of \eqref{constraint}, we see that:
\begin{align}
p(x_i \mid x_{i-l-1}^{i-1},z_{i-l-1}^{i-1})
&=\sum_{\ym{l-1}^{i-1}} \tabp{x_i}{x,y,z}p(\ym{l-1}^{i-1}\mid \xm{l-1}^{i-1},\zm{l-1}^{i-1}) \nonumber \\
&=\sum_{\ym{l-1}^{i-1}} \tabp{x_i}{x,y,z}\frac{p(\xm{l-1}^{i-1},\ym{l-1}^{i-1}, \zm{l-1}^{i-1})}
{p( \xm{l-1}^{i-1},\zm{l-1}^{i-1})} \nonumber \\
&=\frac
{\sum_{\ym{l-1}^{i-1}} \tabp{x_i}{x,y,z}\pi(\xm{l-1},\ym{l-1}, \zm{l-1})
\prod_{j=1}^{l}\tabp{(x,y,z)_{i-j}}{x,y,z}}
{\sum_{\tilde{y}_{i-l-1}^{i-1}}\pi(\xm{l-1},\tilde{y}_{i-l-1}, \zm{l-1})
\prod_{j=1}^{l}\tabp{(x,\tilde{y},z)_{i-j}}{x,\tilde{y},z}}\label{hipi}
\end{align}
\noindent where $\pi:|\calx|\times|\caly|\times|\calz|\ra [0,1]$ is the invariant distribution and $\tabp{(x,y,z)_{i}}{x,y,z}\triangleq\tabp{x_i}{x,y,z}\tabp{y_i}{x,y,z}\tabp{z_i}{x,y,z}$. Next, define a matrix $A\in\R^{|\calx||\caly||\calz|\times|\calx||\caly||\calz|}$ containing the transition probabilities, i.e. $A_{j,k}=\tabp{R_k}{R_j}$ where $R$ is some enumeration over the $|\calx||\caly||\calz|$ possible values taken by $(X,Y,Z)$. Then we can represent $\pi$ in vector form $\pi\in[0,1]^{|\calx||\caly||\calz|}$ as the unique solution to $\pi=\pi A$. Let $\tilde{\pi}$ be an arbitrary vector satisfying $(A^T-I)\tilde{\pi}=0$, and note that for any $\tilde{\pi}$ there is a constant $C$ such that $C\tilde{\pi}=\pi$. Such a vector $\tilde{\pi}$ can be found by performing Gauss-Jordan elimination on $(A^T-I)$, and as a result, each element $\tilde{\pi}_j$ can be written as fractions of polynomial functions of $\theta$. Replacing $\pi$ with $C\tilde{\pi}$ in its functional form $\tilde{\pi}:|\calx|\times|\caly|\times|\calz|\ra \R$ in \eqref{hipi} we see that $C$ cancels in the numerator and denominator and thus each side of \eqref{constraint} can be written entirely as fractions of polynomial functions of $\theta$. Next, repeat the process on the right hand side of \eqref{constraint} with $\tilde{x}_{i-l-d}^{i-1}$ and $\tilde{z}_{i-l-d}^{i-1}$. Then, for any term that appears as a fraction, we can multiply both sides of \eqref{constraint} by the denominator and repeat until \eqref{constraint} is a polynomial function of $\theta$. Finally, we note that the polynomial given by \eqref{constraint} is trivial only if \emph{every} process is a solution which can be shown not to be the case by a straightforward counterexample.\qed

\subsection{Proof of Theorem \ref{thm:markov2}}
We will first show that $\tilde{X}$ is ($d+k$)-Markov, i.e. $\tilde{X}_i \perp \tilde{X}^{i-k-d-1} \mid \tilde{X}_{i-k-d}^{i-1}$. Note that the distribution of $\tilde{X}_i$ given $\tilde{X}^{i-1}$ may be written as:
\begin{align}
p(X_i,Y_{i-k+1} \mid  X^{i-1},Y^{i-k})
&= \sum_{y^{i-1}_{i-k+2}} p(X_i,Y_{i-k+1},y^{i-1}_{i-k+2} \mid X^{i-1},Y^{i-k})
\nonumber \\
&\triangleq \sum_{y^{i-1}_{i-k+2}} p(X_i,\tilde{Y}^{i-1}_{i-k+1} \mid X^{i-1},Y^{i-k})
\nonumber \\
&= \sum_{y^{i-1}_{i-k+2}} p(X_i \mid X^{i-1},\tilde{Y}^{i-1})
p(\tilde{Y}^{i-1}_{i-k+1} \mid X^{i-1},Y^{i-k})
\nonumber \\
&= \sum_{y^{i-1}_{i-k+2}} p(X_i \mid X^{i-1}_{i-k-d},\tilde{Y}^{i-1}_{i-k-d})p(\tilde{Y}^{i-1}_{i-k+1} \mid X^{i-1}_{i-k-d},Y^{i-k}_{i-k-d})
\label{ytil_markov} \\
&= \sum_{y^{i-1}_{i-k+2}} p(X_i,\tilde{Y}^{i-1}_{i-k+1} \mid X^{i-1}_{i-k-d},Y^{i-k}_{i-k-d})
\nonumber \\
&= p(X_i,Y_{i-k+1} \mid X^{i-1}_{i-k-d},Y^{i-k}_{i-k-d})
\nonumber
\end{align}

\noindent where, for ease of notation, we have defined $\tilde{Y}_j = y_j$ if $i-k+2\le j \le i-1$ and $\tilde{Y}_j = Y_j$ otherwise, and \eqref{ytil_markov} follows from the joint Markovicity of $X$ and $Y$ and:
\begin{align*}
p(\tilde{Y}^{i-1}_{i-k+1} \mid X^{i-1},Y^{i-k})
&= \prod_{j=i-k+1}^{i-1} p(\tilde{Y}_j \mid X^{i-1},\tilde{Y}^{j-1}) \\
&= \prod_{j=i-k+1}^{i-1} p(\tilde{Y}_j \mid X^{i-1}_{i-k-d},\tilde{Y}^{j-1}_{i-k-d}) \\
&= p(\tilde{Y}^{i-1}_{i-k+1} \mid X^{i-1}_{i-k-d},Y^{i-k}_{i-k-d})
\end{align*}

\noindent where we define $\tilde{Y}_a^b = \emptyset$ when $b<a$. This proves the Markovicity of $\tilde{X}$. To get the equality given by \eqref{xtildemarkov}, we simply take the sum over $Y_{i-k+1}$ in the above equations.

Next, we will show that $\tilde{X}$ is irreducible. We note that the possible states of $\tilde{X}$ may be a subset of the possible states of $(X,Y)$, i.e. $\tilde{\mc{X}}\subset\mc{X}\times\mc{Y}$. Each state $\tilde{x}\in\tilde{\mc{X}}$ occurs as a result of visiting a state $(x_{i-k+1},y_{i-k+1})$ followed by $(x_{i},y_i)$ after $k-1$ steps. Given that $(X,Y)$ is irreducible, every state $(x_{i-k+1},y_{i-k+1})\in\mc{X}\times\mc{Y}$ can be visited from any state $(x_{i},y_{i})\in\mc{X}\times\mc{Y}$. As a result, every state in $\tilde{x}\in\tilde{\mc{X}}$ can be visited from any other state $\tilde{x}'\in\tilde{\mc{X}}$. Therefore, $\tilde{X}$ is irreducible.

Lastly, we will show that if $(X,Y)$ is aperiodic then $\tilde{X}$ is also aperiodic. Note that for any state $\tilde{x}_i = (X_i=a,Y_{i-k+1}=b)\in\tilde{\mc{X}}$, we know there exists $c\in\mc{X}$ and $d\in\mc{Y}$ such that $p(X_{i-k+1}=c,Y_{i-k+1}=b,X_i=a,Y_i=d)>0$. By the aperiodicity we know that the greatest common divisor of the set of $\tau$ such that:
$$p(X_{i-k+1}=c,Y_{i-k+1}=b,X_{i-k+1+\tau}=c,Y_{i-k+1+\tau}=b)>0$$
\noindent is one. As a result, the same is true of $\tau$ such that:
\begin{align*}
0<&
p(X_{i-k+1}=c,Y_{i-k+1}=b,X_i=a,Y_i=d,X_{i-k+1+\tau}=c,Y_{i-k+1+\tau}=b,X_{i+\tau}=a,Y_{i+\tau}=d) \\
\le& p(Y_{i-k+1}=b,Y_{i-k+1+\tau}=b,X_i=a,X_{i+\tau}=a) \\
=& p(\tilde{x}_i,\tilde{x}_{i+\tau})
\end{align*}

\noindent implying that $\tilde{X}$ is aperiodic.
\qed

\subsection{Proof of Theorem \ref{thm:pdi_est}}
Let the estimate of the partial DI rate be given by:
\begin{equation}
\hat{I}^{(k)}_{P,n}(Y\rightarrow X) \triangleq
\frac{1}{n}\sum_{i=1}^n \kl{\estfxc{i}}{\estfxk{i}}.
\end{equation}

\noindent Then the theorem states that $\hat{I}^{(k)}_{P,n}$ converges to $\bar{I}_P^{(k)}$ almost surely. Following the proof of Theorem 3 in \cite{jiao2013universal}, decompose the estimate as:
\begin{equation*}
\hat{I}^{(k)}_{P,n}(Y\rightarrow X)
= \frac{1}{n}\sum_{i=1}^n \sum_{x_i} \estfxc{i}(x_i)
\log\frac{1}{\estfxk{i}(x_i)}
- \frac{1}{n}\sum_{i=1}^n \sum_{x_i} \estfxc{i}(x_i)
\log\frac{1}{\estfxc{i}(x_i)}.
\end{equation*}

\noindent It was shown in \cite{jiao2013universal} that the second term on the right hand side of the above equation converges to $\bar{H}^{(1)}(X\mid\mid Y)$ almost surely. Next, define the quantity:
\begin{equation}
F_n^{(k)}\triangleq \frac{1}{n}\sum_{i=1}^n \sum_{x_i} \estfxc{i}(x_i)
\log\frac{1}{\estfxk{i}(x_i)}.
\end{equation}

\noindent Then it remains to be shown that $F_n^{(k)}$ converges to $\bar{H}^{(k)}(X\mid\mid Y)$ almost surely. Next, define $R_n^{(k)}$ and $S_n^{(k)}$ as:
\begin{align*}
R_n^{(k)}
\triangleq&
\frac{1}{n} \sum_{i=1}^n \left[
\sum_{x_i}
\fxc{i}(x_i) \log \fxp{i}{k}(x_i) -
\estfxc{i}(x_i) \log \estfxk{i}(x_i) \right]\\
S_n^{(k)}
\triangleq&
-\frac{1}{n} \sum_{i=1}^n \sum_{x_i}
\fxc{i}(x_i) \log \fxp{i}{k}(x_i) -
\bar{H}^{(k)}(X\mid\mid Y)
\end{align*}

\noindent and note that $F_n^{(k)}-\bar{H}^{(k)}(X\mid\mid Y) = R_n^{(k)}+S_n^{(k)}$. As such, all that remains to be shown is that $R_n^{(k)}$ and $S_n^{(k)}$ converge to zero almost surely. It is shown in Lemma 2 of \cite{jiao2013universal} that the CTW probability assignment $\estfxc{i}(x_i)$ converges to $\fxc{i}(x_i)$ almost surely if $(X,Y)$ is a stationary irreducible aperiodic finite-alphabet Markov process. We showed in Theorem \ref{thm:markov2} that this condition implies that the process $\tilde{X}$ with $\tilde{X}_i\triangleq (X_i,Y_{i-k+1})$ is also a stationary aperiodic finite-alphabet Markov process and thus  $\estfxk{i}(x_i)$ converges to $\fxp{i}{k}(x_i)$ almost surely as well. As a result, we see that the bracketed term in $R_n^{(k)}$ converges to zero almost surely as $i$ tends to infinity. Furthermore, since $R_n^{(k)}$ is the Ces\'aro mean of the bracketed term, it too converges to zero almost surely.

To show that $S_n^{(k)}$ converges to zero, first define the first term as:
\begin{align*}
G_i^{(k)} &\triangleq
-\sum_{x_i}
\fxc{i}(x_i) \log \fxp{i}{k}(x_i) \\
&=-\sum_{x_i} p(x_i\mid x^{i-1},y^{i-1}) \log p(x_i\mid x^{i-1},y^{i-k}) \\
&=-\sum_{x_i} p(x_i\mid x^{i-1}_{i-k-d},y^{i-1}_{i-k-d}) \log p(x_i\mid x^{i-1}_{i-k-d},y^{i-k}_{i-k-d}) \\
&\triangleq g(x^{i-1}_{i-k-d},y^{i-1}_{i-k-d})
\end{align*}

Then, from Breiman's generalized ergodic theorem \cite{breiman1957individual}, it follows that the following equality holds almost surely:
\begin{equation*}
\lim_{n\rightarrow\infty}\frac{1}{n}\sum_{i=1}^n g(x^{i-1}_{i-k-d},y^{i-1}_{i-k-d}) = E\left[g(X^{-1}_{-k-d},Y^{-1}_{-k-d})\right].
\end{equation*}

\noindent Finally, using the law of iterated expectation, we note that:
\begin{align*}
E[g(X^{-1}_{-k-d},Y^{-1}_{-k-d})]
=&E\Bigg[-\sum_{x_0}p(x_0\mid X^{-1}_{-k-d},Y^{-1}_{-k-d}) \log p(x_0\mid X^{-1}_{-k-d},Y^{-k}_{-k-d})\Bigg] \\
=&E\Bigg[E\Bigg[
-\sum_{x_0}p(x_0\mid X^{-1}_{-k-d},Y^{-1}_{-k-d}) \log p(x_0\mid X^{-1}_{-k-d},Y^{-k}_{-k-d}) \bigg| X^{-1}_{-k-d},Y^{-k}_{-k-d}\Bigg]\Bigg] \\
=&E\Bigg[-\sum_{x_0}p(x_0\mid X^{-1}_{-k-d},Y^{-k}_{-k-d}) \log p(x_0\mid X^{-1}_{-k-d},Y^{-k}_{-k-d})\Bigg] \\
=&\bar{H}^{(k)}(X\mid\mid Y)
\end{align*}

\noindent Thus, we conclude that:
\begin{align*}
\lim_{n\rightarrow\infty} S_n^{(k)} &=
\lim_{n\rightarrow\infty} \frac{1}{n}\sum_{i=1}^{n}G_i^{(k)}-\bar{H}^{(k)}(X\mid\mid Y) = 0 \ p-a.s.
\end{align*}

\noindent as was to be shown. \qed

\section*{Acknowledgment}
The authors would like to thank Will Chapman for insightful conversations and the anonymous reviewers for their suggestions.

\printbibliography

\begin{IEEEbiography}
[{\includegraphics[width=1in,height=1.25in,clip,keepaspectratio]{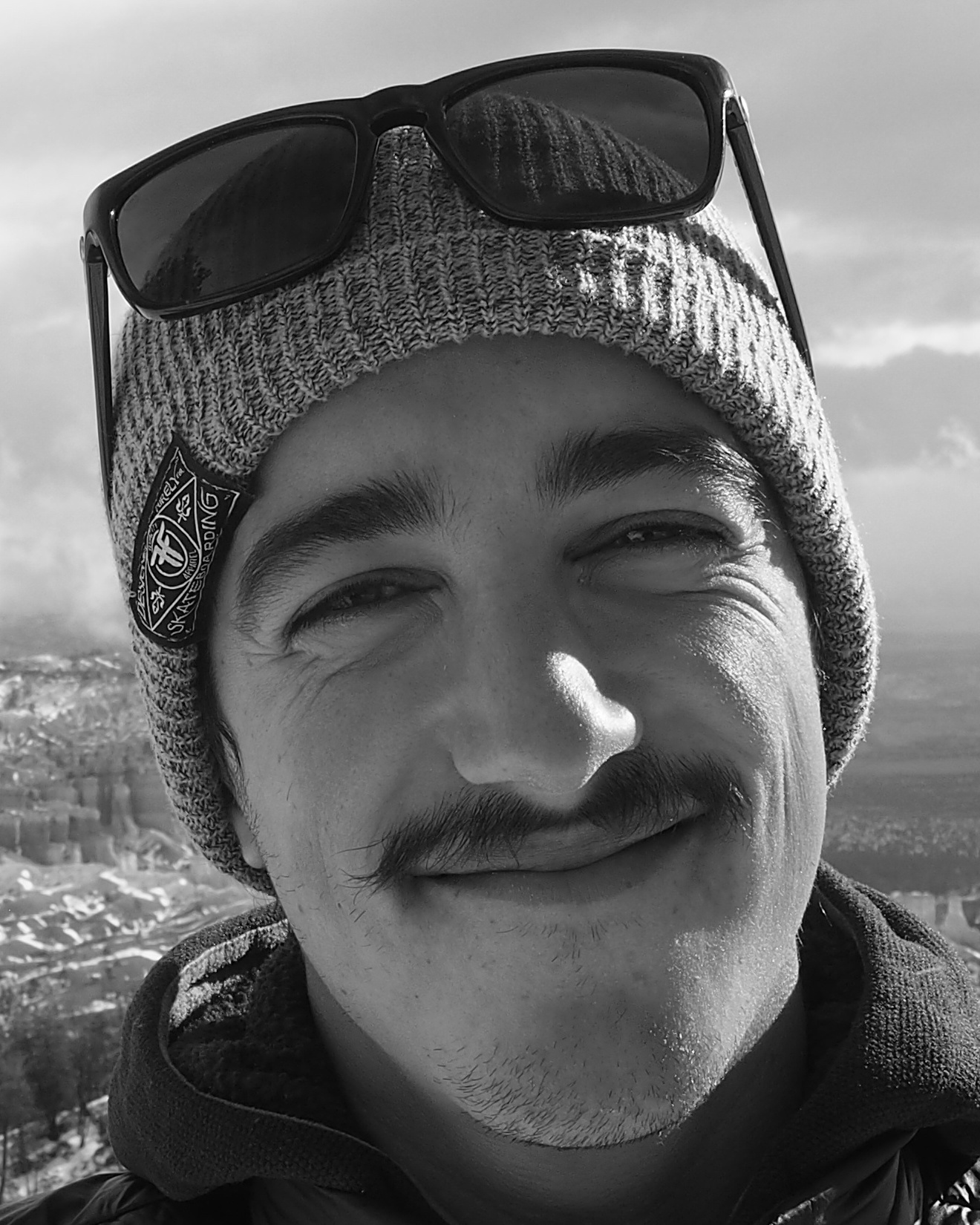}}]
{Gabriel Schamberg}
received his B.S. in computer engineering in 2012 and M.S. in electrical engineering in 2016, both from the University of California, San Diego (UCSD). From 2012 to 2014 he worked as a software developer for NKI Engineering in San Diego, California. In 2014 he began his graduate studies and was awarded the Jacobs Fellowship from the Department of Electrical and Computer Engineering at UCSD. Since 2014 he has worked in the Neural Interaction Lab at UCSD and is currently working toward a Ph.D in electrical engineering.
\end{IEEEbiography}
\begin{IEEEbiography}[{\includegraphics[width=1in,height=1.25in,clip,keepaspectratio]{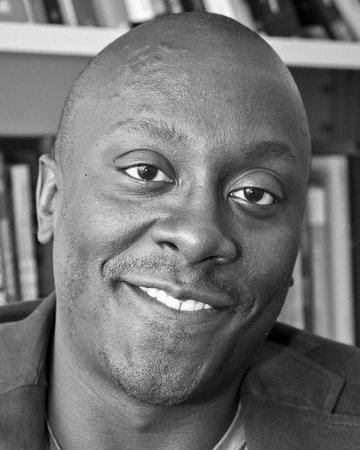}}]
{Todd P. Coleman}
received the Ph.D. degree in electrical engineering from the Massachusetts Institute of Technology (MIT), Cambridge, MA, USA, in 2005. He was a Postdoctoral Scholar in neuroscience at MIT and Massachusetts General Hospital (MGH) during the 2005-2006 academic year. He was an Assistant Professor in Electrical and Computer Engineering as well as Neuroscience at the University of Illinois from 2006 to 2011. He is currently a Professor in Bioengineering and Principal Investigator of the Neural Interaction Laboratory at the University of California San Diego, La Jolla, CA, USA. His research interests lie at the intersection of applied mathematics, bio-electronics, neuroscience, and medicine. He was named a Gilbreth Lecturer for the National Academy of Engineering in 2015 and a Fellow of the American Institute for Medical and Biological Engineering.
\end{IEEEbiography}

\end{document}